\colorlet{linkblue}{blue!45!black}
\newtheorem{theorem}{Theorem}[section]
\newtheorem{corollary}{Corollary}[section] 
\newtheorem{definition}{Definition}[section]
\newtheorem{remark}{Remark}[section]
\newtheorem{lemma}{Lemma}[section]
\newtheorem{assumption}{Assumption}[section]
\newtheorem{proposition}{Proposition}[section]
\newtheorem{example}{Implementation}
\renewcommand{\sectionmark}[1]{}
\newcommand{\E}{\mathbb{E}}					
\newcommand{\Var}{\operatorname{Var}}		
\newcommand{\Cov}{\operatorname{Cov}}		
\newcommand{\Bias}{\operatorname{Bias}}
\newcommand{\MSE}{\operatorname{MSE}}		
\title{Bias-Variance Tradeoff of Matching Prior to Difference-in-Differences}
\author{Mingxuan Ge\thanks{Department of Business Economics and Public Policy, Michigan Ross School of Business: \href{mkgg@umich.edu}{mkgg@umich.edu}}, \ Dae Woong Ham\thanks{Department of Technology and Operations, Ross School of Business: \href{daewoong@umich.edu}{daewoong@umich.edu}}}
\date{This draft version: \today}
\begin{document}
\maketitle

\begin{abstract}
Quasi-experimental causal inference methods have become central in empirical operations management (OM) for guiding managerial decisions. Among the relevant methods, empiricists utilize the Difference-in-Differences (DiD) estimator, which is built on the ``parallel trends” assumption. To increase the plausibility of parallel trends holding, researchers often match treated and control units before applying DiD, motivated by the intuition that matched groups are more likely to have the same post-treatment outcome absent treatment. Although this practice has been studied previously, the focus was solely on the bias. Yet bias is only one side to the story; the tradeoff in variance, and hence mean squared error (MSE), is left unexplored. In this work, we not only generalize earlier bias results under weaker assumptions but also fill this gap by analyzing properties of variance and MSE, a practically relevant metric for managerial decision making. We show, under a linear structural model with unobserved time-varying confounders, that variance results contrast with established bias insights: when considering variance, matching on observed covariates prior to DiD is not always recommended over the classic (unmatched) DiD due to a sample size tradeoff. Furthermore, matching additionally on pre-treatment outcome(s) is always beneficial as such tradeoff no longer exists once matching is performed. Given our novel contrasting results to the established literature, we characterize full bias-variance tradeoffs among three estimators and recommend incorporating MSE as an additional metric if applied researchers weigh bias and variance equally. We further give practitioner-friendly guidelines with theoretical guarantees that determine whether applied researchers should match (and on what variables) for their applications. As an illustration, we apply these guidelines to re-examine a recent empirical study that matches prior to DiD to study how the introduction of monetary incentives by a knowledge-sharing platform affects general engagement on the platform. Our results show that the authors' decision was both warranted and critical to produce a credible causal estimate.  
\end{abstract}

\section{Introduction}
\noindent Over the past two decades, empirical work in operations management (OM) and related fields has increased significantly and become integral (see survey papers \cite{gupta2006empirical}; \cite{ho2017om}; \cite{terwiesch2020om}; \cite{yilmaz2024causal}). Many empirical questions in the OM field revolve around the effect of adopting a program or implementing a policy (i.e., treatment) on various outcomes across different contexts such as manufacturing, healthcare and digital platforms. To guide these managerial decisions credibly, applied researchers must establish causal relationships between variables of interest. Although randomized experiments are often regarded as the gold standard for causal inference, in many applied settings, applied researchers must instead rely on observational (quasi-experimental) data, where concerns of endogeneity and selection bias complicate credible causal inference. In such settings, the Difference-in-Differences (DiD) estimator has become a widely adopted empirical tool due to its ease of implementation and natural fit with panel data. The DiD estimator is built on the parallel trends (PT) assumption: the treatment group, absent treatment, would change similarly to the control group over time. A central concern with this assumption is whether the chosen comparison group is suitable so that PT holds reliably.

To account for possible violations of parallel trends between the treatment and control group, a common practice is to apply matching to select a subset of a large number of potential comparison units to resemble the treatment group. Such matching prior to DiD (referred to as ``M-DiD'') strategy formally relies on the conditional parallel trends assumption: once units are matched on observed characteristics, the treated and control groups would have followed similar trends in the absence of treatment. 

This approach has been extensively employed as a main empirical strategy across a range of applied settings, dating back to program evaluation contexts in the economics literature in the late twentieth century (\cite{heckman1997matching}; \cite{heckman1998characterizing}), and it has become especially prominent in modern managerial and platform empirical research. For example, \cite{singh2011recruiting} study how firms exploit the prior inventions of new hires and 
employ M-DiD to compare premove versus postmove citation rates for the recruits’ prior patents and corresponding matched-pair control patents; \cite{wang2022monetary} use M-DiD to examine how monetary incentives influence knowledge sharing and spillovers on non-paid activities and other general user engagement on the platform; Using M-DiD, \cite{deshmane2023come} investigate how platform design choices shape user engagement and \cite{gao2025pitfalls} analyze the unintended consequences of review solicitation policies on consumer review sites. In addition, different matching schemes are frequently employed in addition to DiD as robustness checks to strengthen identification credibility, as in \cite{liu2023nudging}, \cite{farronato2024dog}, \cite{kircher2024ban} and many others. Collectively, these studies highlight the importance and wide-spread popularity of M-DiD methods for evaluating causal managerial decisions in complex environments where randomization is rarely feasible.

However, this improvement in comparability through matching comes at a cost. \cite{daw2018matching} and \cite{lindner2019difference} find that when unconditional parallel trends hold, i.e., parallel trends hold without any matching, matching on pre-treatment outcomes can actually induce bias into a perfectly unbiased estimator. Applied researchers thus face a dilemma: while matching can help create a more plausible control group, the act of matching could potentially harm the estimator. Due to this dilemma, previous studies have examined the impact of matching prior to a difference-in-differences analysis and find that matching can either improve or worsen analysis depending on the specific context. The existing literature has primarily concentrated on analyzing the theoretical properties of bias of the estimators (\cite{chabe2015analysis}; \cite{chabe2017should}; \cite{ding2019bracketing}; \cite{ham2024benefits}), and only explored other properties such as variance and mean squared error (MSE) through simulations (\cite{daw2018matching}; \cite{zeldow2021confounding}). The absence of exact mathematical characterizations of variance and MSE of these estimators leaves an incomplete understanding of the overall performance of matching-DiD estimators, especially for inference. Specifically,  with current guidelines focused on bias, applied researchers and business managers are uncertain about whether to match or not if the final objective aims to minimize the width of confidence intervals.

\subsection{Our Contributions}
\noindent To fill the above gap, we build on \cite{ham2024benefits}. Specifically, we analyze three common DiD-based estimators under a linear structural model that allows for time-invariant observed and unobserved confounders with time-varying effects on the outcome (detailed in Section \ref{section:framework}). In this setting, the standard identification assumptions, parallel trends or conditional parallel trends, do not hold exactly. Therefore, the causal estimand of interest, the ATT, is not point identified. In other words, all three estimators, whether matching-based or not, are subject to existence of bias and further increased variance. Our work does not aim to develop a ``better'' identification strategy, if there ever exists one, to eliminate bias or achieve efficiency in estimations, but rather to evaluate which estimator among three produces least bias, variance and mean squared error (MSE) (defined formally in Section~\ref{section: univariate results}), so that applied researchers can select the corresponding matching (or not) strategy.

The existing literature (e.g., \cite{chabe2017should}; \cite{ham2024benefits}) considers only those estimators defined at the population level in terms of limiting expected values. Our first contribution is to introduce finite-sample DiD and matching DiD estimators and formally justify their consistency to the corresponding limiting expected values.  Specifically, the three estimators we analyze are: 1) the classic unmatched DiD estimator, 2) the DiD estimator after matching on observed covariate(s), and 3) the DiD estimator after matching additionally on the pre-treatment outcome(s).

Our second contribution is that we demonstrate that there is a tradeoff between bias and variance when selecting the optimal estimator. In particular, we find there exists a sample-size tradeoff for variance when matching on observed covariate(s) ($X$), i.e., it is not always recommended to match on $X$. On the other hand, \cite{ham2024benefits} show that such matching is always recommended when considering bias. Moreover, we show that matching additionally on pre-treatment outcome(s) always improves variance (compared to just matching on $X$) since such sample-size trade-off no longer applies once matching is performed. Consequently, we recommend incorporating MSE as an additional metric to assess performance of these estimators if applied researchers weigh bias and variance equally. 

When studying MSE, we revisit and generalize previous bias results and relax some assumptions imposed while deriving the same results. We also address an important gap in the current literature and offer a more complete insight for applied researchers when selecting matching estimators in the context of Difference-in-Differences analysis. In particular, our third contribution is that beyond theoretical tradeoffs, we provide more comprehensive estimation strategies and a step-by-step guideline in Section \ref{section: Determining What to Match} to assist applied researchers in making matching decisions when considering bias, variance, and MSE.
 
To complement our theoretical results and to demonstrate the practical value of our guideline, we revisit a recent empirical study of \cite{wang2022monetary} based on Zhihu Live data in Section \ref{section: Application}. In \cite{wang2022monetary}, the authors aim to study the impact of introducing monetary incentives to content creators on Zhihu, a popular online Q\&A platform in China, on platform activity and user engagement, focusing especially on spillover effects of the non-rewarded knowledge activity. To answer this question, the authors employ a M-DiD strategy. We revisit their analysis and answer whether the authors' original decision to match was statistically warranted. Based on our estimation results, we justify that the authors' decision on choosing the M-DiD estimator with matching on both covariates and pre-treatment outcome is warranted and optimal.

The rest of the paper is organized as follows: Section \ref{section:framework} introduces the framework with a linear equation structural model DiD set-up; Section \ref{section: univariate results} aims to offer the main intuitions and takeaways by only characterizing the variance properties of the estimators in a simplified model with univariate variables; Section \ref{section: multivariate results} generalizes the model to include multiple time points and examines the full properties of the generalized estimators; Section \ref{section: Determining What to Match} provides estimation strategies and guidelines for applied researchers on the matching decision process; Section \ref{section: Application} examines our guideline and estimation strategies in an empirical application; Section \ref{section: Discussion} concludes with a discussion.

\section{Framework} \label{section:framework}
\noindent Consider a canonical difference-in-differences framework with a binary treatment and two time periods: a pre-treatment period ($t=T-1=0$) and a post-treatment period ($t=T=1$), where $T$ denotes the final period, in which treatment is first received. We extend our results to multiple pre-treatment periods in Section \ref{section: multivariate results}. 

Let $Z_i$ denote a binary treatment indicator, where $Z_i =1$ indicates individual $i$ is assigned to the treatment group, and $Z_i =0$ indicates the membership in the control group.  Let $n_1 = \sum_{i=1}^{n} Z_i = n \pi$, with  $\pi:=$ the empirical proportion of the $n$ units are treated, and $n_0 = \sum_{i=1}^{n} (1-Z_i) = n (1-\pi)$ so that $n_1 + n_0 = n$. Without loss of generality, we assume $n_1$ and $n_0$ are positive integers and as it is usually the case in practice, we also assume the number of control units ($n_0$) to be greater or equal to the number of treated units ($n_1$); we also assume a balanced panel data setting so that  they stay constant over time. 

We adopt the potential outcome notation where $Y_{i,t}(0)$ and $Y_{i,t}(1)$ denote the untreated and treated potential outcome for unit $i$ at time $t$, respectively. For now, we only have two time points, i.e., $t \in \{0,1\}.$ As usual, the observed outcomes $Y_{i,t} \in \mathbf{R}$ are related to the potential outcomes via the following relationship:
\begin{equation}
Y_{i,t} = Z_i Y_{i,t}(1) + (1 - Z_i) Y_{i,t}(0),     
\end{equation}
in which we adopt the stable unit treatment value assumption \cite{rubin1980randomization}, i.e., no interference between units and no multiple forms of treatment.

For the rest of the paper, we present our main results assuming a linear structural equation model following similar studies
related to DiD and matching in the literature (see \cite{daw2018matching}; \cite{zeldow2021confounding}; 
\cite{ham2024benefits}).  Although our model incorporates the specific parametric form, we believe it nevertheless serves
as a useful starting point for understanding the main tradeoffs at play.

\subsection{A Linear Structural Equation Model}
\noindent Formally, we generate potential outcomes with the following linear model:
\begin{align}
\begin{split}
Y_{i,t}(0) &= \beta_{0,t} + \vec{\beta}^{\top}_{\theta,t} \boldsymbol{\theta}_i + \vec{\beta}^{\top}_{x,t} \boldsymbol{X}_i + \epsilon_{i,t}, \\
Y_{i,t}(1) &= Y_{i,t}(0) + \tau_i \mathbf{1}(t = T),      
\end{split}
\label{eq:LSEM}
\end{align}
\noindent where $\boldsymbol{X_i} \in \mathbf{R}^{p}$ represents individual $i$'s $p$-dimensional observed covariates,  $\boldsymbol{\theta_i} \in \mathbf{R}^{q}$ denotes the $q$-dimensional latent variables, and $\epsilon_{i, t}$ denotes the irreducible mean zero error; $\beta_{0,t}, \vec{\beta}_{x,t}, \vec{\beta}_{\theta,t}$ are fixed constants that denote the intercept at time $t$ and the slopes of $\boldsymbol{X}, \boldsymbol{\theta}$  at time $t$ , respectively. 
 
While both $\boldsymbol{\theta}$ and $\boldsymbol{X}$ have time-varying effects on the outcome captured by the varying subscript `$t$' in $ \vec{\beta}_{\cdot,t}$, $\boldsymbol{\theta}$ and $\boldsymbol{X}$ themselves are time invariant, i.e., they do not change over time. This simplification may appear strong, but it helps to clearly isolate the effects of time-varying factors and group-changing covariates, and therefore captures the breakage in parallel trends. We, however, believe our framing does not lose any ``generality'' to the time-varying covariates context (see discussion in Section \ref{section: Discussion} for a more formal argument). 

Our target estimand is defined to be the average treatment effect on the treated (ATT)\footnote{Under the super-population framework, treatment effects could be modeled as random variables, i.e., $\tau_i \overset{\text{i.i.d.}}{\sim} \mathcal{P}$ with some mean $\tau$ and variance $\sigma_{\tau}^2$. For brevity, we do not present our results in this form, as the main insights remain unchanged (the variance results are the same, aside from additional terms introduced by $\sigma_{\tau}^2$).}:
\begin{equation} \label{PATT estimand}
\text{ATT} := \mathbb{E}\left[Y_{i,T}(1) - Y_{i,T}(0) \mid Z_i = 1\right] ,   
\end{equation}
and we assume no anticipation of treatment effect with \( Y_{i,t}(0) = Y_{i,t}(1) \) for \( t \neq T \), which is ensured by the usage of the indicator \( \mathbf{1}(t = T) \).  

Given the linear structural equation model, we assume a random data generating process (DGP) throughout the rest of paper. 
\begin{assumption}[\textbf{Random Data Generating Process}]\label{Assumption: RDGP}
The joint vector of random variables $\{ \{Y_{i,t}\}_{t=0}^{T}, Z_i, \boldsymbol{\theta}_i, \boldsymbol{X}_i \}_{i=1}^{n}$ is independently and identically distributed (i.i.d.).
\end{assumption} 
\noindent We take a modeling perspective and Assumption \ref{Assumption: RDGP} can be interpreted as the standard random sampling condition of pooled data but with access to latent variables. 
We crucially allow the latent variables $\boldsymbol{\theta}_i$ (as well as the observed covariates $\boldsymbol{X}_i$) to be not independent of the treatment assignment $Z_i$. For instance, we have for different units $i,j$: \[
\left\{ \boldsymbol{\theta}_i : Z_i = 1 \right\} \;\perp\; \left\{ \boldsymbol{\theta}_j : Z_j = 0 \right\}, 
\quad \text{with} \quad 
\boldsymbol{\theta}_i \mid Z_i = 1 \overset{\text{i.i.d.}}{\sim} P_1, 
\quad 
\boldsymbol{\theta}_j \mid Z_j = 0 \overset{\text{i.i.d.}}{\sim} P_0,
\]
where $P_1, P_0$ can be different to allow confounding.

Throughout the paper, we also assume i.i.d. homoscedastic noise terms.  
\begin{assumption}[\textbf{Conditions on Irreducible Errors}] \label{Assumption: errors}  
For all units $i$ at all time points $t$, the irreducible errors $\epsilon_{i,t}  \in \mathbf{R} $ satisfy:                    
\begin{enumerate}
     \item \textbf{Homoscedasticity across time with no serial correlation}: $\epsilon_{i,t}  \perp\!\!\!\perp (\boldsymbol{\theta}_i, \boldsymbol{X}_i, Z_i), \ \text{with} \ \mathbb{E}[\epsilon_{i,t} ] = 0, \ \mathbb{E}[\epsilon_{i,t}^2 ] = \sigma_E^2 > 0, \
\text{and} \ \epsilon_{i,t} \perp\!\!\!\perp \epsilon_{i,t'} \ \text{for all } i$ and all $ t \neq t'.$
    \item \textbf{Existence of fourth moments:} $\mathbb{E}[\epsilon_{i,t}^4] < \infty. $
\end{enumerate}
\end{assumption}
\noindent This assumption is mainly for analytical convenience and it does not preclude the presence of serial correlations in outcomes over time. In particular, we allow for arbitrary temporal dependence through the latent variable $\theta_i$, which implicitly accounts for dependent mechanisms in error terms $\epsilon_{i,t}$. 

The next set of structural assumptions introduce confounding, i.e., we allow $\boldsymbol{\theta}, \boldsymbol{X}$ to have different means based on their treatment groups. We also make regularity assumptions on higher-order moments. 
 
\begin{assumption}[\textbf{Conditions on Observed Covariates and Latent Variables}] \label{Assumption: variables}  
The $\boldsymbol{\theta_i} \in \mathbf{R}^{q}$ and $\boldsymbol{X_i} \in \mathbf{R}^{p}$ satisfy:

\begin{enumerate}
     \item \textbf{Finite Group-specific means}: For $z \in \{0,1\}$, $\mathbb{E}[\boldsymbol{\theta}_i \mid Z_i = z] = \vec{\mu}_{\theta,z} < \infty$ and $\mathbb{E}[\boldsymbol{X}_i \mid Z_i = z] = \vec{\mu}_{x,z} < \infty$ 
    \item \textbf{Positive-definite covariances:}  For $z \in \{0,1\}$, assume $\Var (\boldsymbol{\theta}_i \mid Z_i=z) \triangleq  \Sigma_{\theta\theta} \in \mathbf{R}^{q \times q}$ and $\operatorname{Var}(\boldsymbol{X}_i \mid Z_i=z) \triangleq  \Sigma_{XX} \in \mathbf{R}^{p \times p}$ to be positive-definite, i.e.,  $\Sigma_{\theta\theta} \succ 0, \ \Sigma_{XX} \succ 0.$ Let $\Sigma_{\theta X} \in \mathbf{R}^{q \times p}$ denote the cross-covariance matrix between $\boldsymbol{\theta}_i$ and $\boldsymbol{X}_i$, allowing for arbitrary dependency.
    \item \textbf{Lipschitz continuity of conditional moments:} For $z \in \{0,1\}$, the conditional mean functions
    $m_z(\boldsymbol{x}):= \mathbb{E}[\boldsymbol{\theta}_i \mid \boldsymbol{X}_i = \boldsymbol{x}, Z_i = z]$ and  $m_z(\boldsymbol{x},y) := \mathbb{E}[\boldsymbol{\theta}_i \mid \boldsymbol{X}_i = \boldsymbol{x}, Y_{i,T-1} = y, Z_i = z]$
    are Lipschitz continuous in their arguments; and the conditional variance functions
    $v_z(\boldsymbol{x}):= \Var(\boldsymbol{\theta}_i \mid \boldsymbol{X}_i = \boldsymbol{x}, Z_i = z)$ and $ v_z(\boldsymbol{x},y):= \Var(\boldsymbol{\theta}_i \mid \boldsymbol{X}_i = \boldsymbol{x}, Y_{i,T-1} = y, Z_i = z)$ are also Lipschitz continuous in their arguments.
    \item \textbf{Existence of fourth moments:} $\mathbb{E}[\|\boldsymbol{\theta_i}\|^4 ] < \infty, \quad
    \mathbb{E}[\|\boldsymbol{X_i}\|^4 ] < \infty.$
\end{enumerate}
\end{assumption}

In contrast with \cite{ham2024benefits}, we get rid of the normality assumption of $\boldsymbol{\theta}_i, \boldsymbol{X}_i$ through the Lipschitz continuity assumption while retaining distributional structures, i.e., we also assume group-variant first moments (conditional means) along with group-invariant higher-order moments (thus also conditional covariances). Not indexing the higher-order moments with subscript `$z$' is only for the purpose of simplifying the discussion, and this could be relaxed with additional, albeit messy, algebra that obfuscates the main takeaways of our results. One immediate advantage of our framework is that we enrich the setting to accommodate discrete covariates and thus have broader applications. The Lipschitz continuity conditions are commonly imposed as technical regularity assumptions to ensure the validity of matching procedures (see, e.g., \cite{abadie2006large}; \cite{abadie2012martingale}; \cite{abadie2022robust}). We also remark the conditional Gaussian assumption \citep{ham2024benefits} use instead implies the Lipschitz continuity assumption, thus we accommodate a more general class of conditional distributions of the covariates.

\subsection{Parallel Trends}
\noindent We start this section by revisiting the parallel trends (PT) assumption implied by our model. The PT requires that, in expectation, the change in untreated potential outcomes between the pre-treatment period and post-treatment period is the same across treated and control units. Specifically,
\begin{align}
\label{eq:PT}
\underbrace{\mathbb{E}[Y_{i,T}(0) - Y_{i,T-1}(0) \mid Z_i = 1]}_{\text{Change in treatment group}} &= \underbrace{\mathbb{E}[Y_{i,T}(0) - Y_{i,T-1}(0) \mid Z_i = 0]}_{\text{Change in control group}},
\end{align}
and Equation~\eqref{eq:PT} holds in our linear equation structural model when
\begin{align} \label{eq: specific PT under LSEM}
\vec{\Delta}^{\top}_\theta \vec{\delta}_\theta + \vec{\Delta}^{\top}_X \vec{\delta}_x = 0,
\end{align}
holds, where we define $\vec{\Delta}_{\theta}, \vec{\delta}_{\theta}$ to denote the time varying change in the effect and imbalance of latent variables $\boldsymbol{\theta}$ respectively:
\[
\begin{aligned}
\vec{\Delta}_{\theta} &:= \vec{\beta}_{\theta,T}  -\vec{\beta}_{\theta,T-1} \quad \text{time variation of effect} \\ 
\vec{\delta}_\theta &:= \vec{\mu}_{\theta,1} - \vec{\mu}_{\theta,0} \quad \text{imbalance across treatment groups.}
\end{aligned}
\]
We denote $\vec{\delta}_{x}, \vec{\Delta}_{X}$ similarly for observed covariates $\boldsymbol{X}$. 

We critically allow the slopes of \( \boldsymbol{\theta}, \boldsymbol{X} (\vec{\beta}_\theta, \vec{\beta}_X) \) to differ across time, which allows the latent and observed covariates to have a time-varying effect on the outcome, thus leading to potential breakages in the parallel trends assumption. Under this setting, PT holds as long as there is either time invariance ($\Delta_{.} = 0$) or group invariance ($\delta_{.} = 0$) of both latent and observed covariates (assuming that the effects do not cancel with each other). In other words, covariates with different means within a single time point are not confounders in the DiD framework, so long as they do not evolve differently over time.

\subsection{DiD and Matching DiD Estimators}
\label{section:framework:estimators}
\noindent The existing literature (e.g., \cite{chabe2017should}; \cite{ham2024benefits}) considers only estimators defined at the population level in terms of limiting expected values. In this section, we introduce finite-sample analogues of the DiD and matching DiD estimators. In particular, since our setup is asymptotically equivalent to the counterpart in \cite{ham2024benefits}, we propose finite sample analogues of their relevant estimators and justify that they converge in probability to the corresponding limiting expected values.

We first propose the classic unmatched DiD estimator that one uses in a canonical two time period setting:
\begin{align}  \label{Eq DiD}
\hat{\tau}_{\text{DiD}} &=  \frac{1}{n_1} \sum_{i=1}^{n} \left(Y_{i,T} - {Y}_{i,T-1}\right) Z_i -  \frac{1}{n_0} \sum_{i=1}^{n} \left(Y_{i,T} - {Y}_{i,T-1}\right) (1-Z_i).  
\end{align}
Due to the randomness in the number of treated units $n_1$, to conduct valid asymptotic analysis, we assume a standard condition that the treated group remains a substantial portion of the total sample in the limit. Formally, we have 
\begin{assumption}[\textbf{Boundedness in Treatment Group Fraction}]  \label{Assumption:boundedness}
$$\frac{n_1}{n} = \pi \xrightarrow{p} \mathbb{P} (Z_i =1) :=p\in (0,0.5] \quad  \text{as} \ n_1, n \rightarrow \infty. $$  
\end{assumption} 
 
We then introduce two matching DiD estimators: 1) matching on observed covariates and 2) matching on both observed covariates and pre-treatment outcome(s). For the purpose of isolating discrepancy due to matching as an approach versus its practical implementation, we assume (asymptotically) close-to-perfect matching without replacement.  As sample sizes increase, discrete covariates with a finite number points are perfectly matched, so they can be easily dealt with by conditioning on their values. For continuous covariates, regularity conditions under which such approximations are valid can be found in e.g., \cite{abadie2006large}, \cite{abadie2012martingale} and \cite{abadie2022robust}. Similar to \cite{chabe2017should} and \cite{ham2024benefits}, we also intend to use this assumption to convey an ``idealized'' setting for illustrating the main insights, and therefore we do not explicitly state the associated technical conditions. 

Under this perspective, we introduce the following notation. For each treated unit $i$, let $\mathcal{J}_M(i)$ be the indices of $M$ untreated units, whose covariate values are similar to $\boldsymbol{W}_i$, the variable(s) that one wants to match on. For matching without replacement, the elements of $\{\mathcal{J}_M(i) \ \text{s.t.} \ Z_i = 1\}$ are nonoverlapping subsets of $\{j \in \{1, \ldots, n\} \ \text{s.t.} \ Z_j = 0\}$, chosen to minimize the sum of the matching discrepancies:
\[
\sum_{i=1}^n Z_i \sum_{j \in \mathcal{J}_M(i)} \|\boldsymbol{W}_i - \boldsymbol{W}_j\|,
\]
where $\|\cdot \|$ is some norm in $\mathbb{R}^{\operatorname{dim}(\boldsymbol{W})}$.

In our setting, $\boldsymbol{W}$ could be either the observed covariates $\boldsymbol{X}$ alone or the observed covariates $\boldsymbol{X}$ and the pre-treatment outcome $Y_{T-1}$, shorthanded as $Y_0$. Accordingly, we write $\mathcal{J}_M^{\boldsymbol{X}}(i)$ and $\mathcal{J}_M^{\boldsymbol{X},Y_{0}}(i)$ to represent the corresponding sets of matching indices. Now we propose matching DiD estimators on observed covariates ($\hat{\tau}_{\text{DiD}}^{\boldsymbol{X}}$) and on both observed covariates and the pre-treatment outcome ($\hat{\tau}_{\text{DiD}}^{\boldsymbol{X},Y_{0}}$):
\begin{align} \label{Eq: M-DiD}
\hat{\tau}_{\text{DiD}}^{\boldsymbol{X}} &=  \frac{1}{n_1} \sum_{i=1}^{n} \left( Y_{i,T} - \frac{1}{M} \sum_{j \in \mathcal{J}_M^{\boldsymbol{X}}(i)} Y_{j,T} \right)Z_i - \frac{1}{n_1} \sum_{i=1}^{n} \left( Y_{i,T-1} - \frac{1}{M} \sum_{j \in \mathcal{J}_M^{\boldsymbol{X}}(i)} Y_{j,T-1} \right)Z_i   \\
\hat{\tau}_{\text{DiD}}^{\boldsymbol{X},Y_{0}} &=  \frac{1}{n_1}\sum_{i=1}^{n}  \left( Y_{i,T} - \frac{1}{M} \sum_{j \in \mathcal{J}_M^{\boldsymbol{X},Y_{0}}(i)} Y_{j,T} \right)Z_i.
\end{align}

The idea behind the matching DiD estimator is to fit the DiD estimator according to a matched data. When matching on the pre-treatment outcome in addition to observed covariates, the (asymptotically) close-to-perfect matching assumption implies that only the post-treatment difference remains. As a result, $\hat{\tau}_{\text{DiD}}^{\boldsymbol{X},Y_{0}}$ collapses the panel structure and ignores the second ``difference'' in a DiD estimator. This phenomenon was extensively discussed in Section 4 of \cite{ham2024benefits}, where they show a critical trade-off. We will explore this tradeoff further in Section~\ref{section: univariate results: match on both}.

For the purpose of illustrating the primary insights related to sample size tradeoffs, we further impose a one-to-one matching framework, i.e., $M=1$. Although we acknowledge there are many popular variants of one-to-one matching, we believe the main insights from our paper carry through.
Ultimately, the key consideration is the tradeoff in effective sample size: even if one considers alternative matching schemes such as one-to-many or weighting based matching methods, it may increase the nominal samples (at the cost of match quality) but the effective sample size still decreases, recovering our main insights. Therefore, we present results based on the following matching procedure, denoted by $\mathcal{M}$:
\begin{definition}[\textbf{Matching Algorithm $\mathcal{M}$}] \label{Matching: near-perfect one-to-one matching without replacement}
Let $\mathcal{M}$ denote an (asymptotically) close-to-perfect one-to-one nearest neighbor matching procedure without replacement.    
\end{definition}

We now establish the consistency of the DiD and matching DiD estimators, showing that they converge in probability to the corresponding limiting expected values in \cite{ham2024benefits}.
\begin{theorem}[\textbf{Consistency of DiD and Matching DiD Estimators}] \label{Thm: Consistency of DiD and Matching DiD Estimators}
Assume the data-generating process (DGP) for $\{ \{Y_{i,t}\}_{t=0}^{T}, Z_i, \boldsymbol{\theta}_i, \boldsymbol{X}_i \}_{i=1}^{n}$ follows the linear equation structural model in Equation~\eqref{eq:LSEM}
with $T=1$. Furthermore, suppose this DGP together with a matching procedure $\mathcal{M}$ under Definition \ref{Matching: near-perfect one-to-one matching without replacement}, satisfy Assumptions \ref{Assumption: RDGP}-\ref{Assumption:boundedness}. Then we have
\begin{align*}
\hat{\tau}_{\text{DiD}} \xrightarrow{p} \mathbb{E}\left[\hat{\tau}_{\text{DiD}}\right], \quad \hat{\tau}_{\text{DiD}}^{\boldsymbol{X}}  \xrightarrow{p} \mathbb{E}\left[\hat{\tau}^{\boldsymbol{X}}_{\text{DiD}} \right], \quad  \hat{\tau}_{\text{DiD}}^{\boldsymbol{X},Y_{0}}   \xrightarrow{p} \mathbb{E}\left[\hat{\tau}_{\text{DiD}}^{\boldsymbol{X},Y_0}\right],
\end{align*}
where the limiting expected value of the classic unmatched DiD estimator $\hat{\tau}_{\text{DiD}}$ is: 
\begin{align*}\mathbb{E}\left[\hat{\tau}_{\text{DiD}} \right] 
&=  \mathbb{E}[Y_{i,T} \mid Z_i = 1] - \mathbb{E}[Y_{i,T} \mid Z_i = 0] - \left( \mathbb{E}\left[{Y}_{i,T-1} \mid Z_i = 1\right] - \mathbb{E}\left[{Y}_{i,T-1} \mid Z_i = 0\right] \right), 
\end{align*}
and the limiting expected values of two matching DiD estimators are:
\begin{align*}   
\mathbb{E}\left[\hat{\tau}^{\boldsymbol{X}}_{\text{DiD}} \right] 
&= \mathbb{E}[Y_{i,T} \mid Z_i = 1] - \mathbb{E}\left[{Y}_{i,T-1} \mid Z_i = 1\right] \\
&\quad - \left( \mathbb{E}_{\boldsymbol{x} \mid Z_i = 1} \left[ \mathbb{E}[Y_{i,T} \mid Z_i = 0, \boldsymbol{X}_i = \boldsymbol{x}] \right] 
- \mathbb{E}_{\boldsymbol{x} \mid Z_i = 1} \left[ \mathbb{E}\left[{Y}_{i,T-1} \mid Z_i = 0, \boldsymbol{X}_i = \boldsymbol{x} \right] \right] \right),   \\
\mathbb{E}\left[\hat{\tau}_{\text{DiD}}^{\boldsymbol{X},Y_0}\right]
&= \mathbb{E}[Y_{i,T} \mid Z_i = 1] 
- \mathbb{E}_{\boldsymbol{x}, y \mid Z_i = 1} \left[ \mathbb{E}[Y_{i,T} \mid Z_i = 0, \boldsymbol{X}_i = \boldsymbol{x}, Y_{i,T-1}  = y] \right].
\end{align*}
\end{theorem}
\noindent The proof is provided in Appendix \ref{Appendix A}.  We remark that, given the linear structural equation model in Equation~\eqref{eq:LSEM}, these limiting expected values are generally not equal to the ATT in Equation~\eqref{PATT estimand} because we are in settings where parallel trends or conditional parallel trends do not exactly hold. The resulting limiting double expectation expression for our matching estimators is intuitive: first find a control unit for each treatment unit that shares the  same value of covariates that one wants to match on (e.g. shown by the condition $\mid Z_i =0, \mathbf{X}_i = \mathbf{x}$). Then we must account for the fact this was not any randomely chosen $\mathbf{x}$ but specifically chosen to match the treatment distribution $\mathbf{x} \mid Z_i = 1$ (e.g. shown by the outside expectation $E_{\mathbf{x} \mid Z_i = 1}$).

Due to the randomness in the number of treated units $n_1$, for the rest of the paper, we present bias and variance results ``asymptotically''. We now formally define what we mean by asymptotic bias and variance.
 \begin{lemma}[\textbf{Asymptotic Variance of DiD Estimator}] \label{Lemma: Asymptotic Variance of DiD Estimator}
Assume the data-generating process (DGP) for $\{ \{Y_{i,t}\}_{t=0}^{T}, Z_i, \boldsymbol{\theta}_i, \boldsymbol{X}_i \}_{i=1}^{n}$ follows the linear equation structural model in Equation~\eqref{eq:LSEM}
with $T=1$. Furthermore, suppose this DGP satisfies Assumptions \ref{Assumption: RDGP}-\ref{Assumption:boundedness}. Then we have
\[
\sqrt{n} (\hat{\tau}_{\text{DiD}} - \mathbb{E}\left[\hat{\tau}_{\text{DiD}}\right]) \xrightarrow{d} \mathcal{N}(0,\operatorname{Avar}(\hat{\tau}_{\text{DiD}})),
\]
where $\mathbb{E}\left[\hat{\tau}_{\text{DiD}}\right]$ is defined in Theorem \ref{Thm: Consistency of DiD and Matching DiD Estimators} and 
\begin{align*}
\operatorname{Avar}(\hat{\tau}_{\text{DiD}}) &= \frac{\Var(Y_T(1) \mid Z = 1) + \Var(Y_{T-1}(1) \mid Z =1 ) - 2\,\Cov(Y_T(1), Y_{T-1}(1) \mid Z =1 )}{p} \\
& + 
\frac{\Var(Y_T(0) \mid Z =0 ) + \Var(Y_{T-1}(0) \mid Z =0) - 2\,\Cov(Y_T(0), Y_{T-1}(0) \mid Z =0 )}{1 - p}.
\end{align*}
\end{lemma}
\noindent The proof is provided in Appendix \ref{Appendix: Proofs of Preliminary Lemmas}. Instead of reporting the asymptotic variance $\operatorname{Avar}(\hat{\tau}_{\text{DiD}})$ as in Lemma \ref{Lemma: Asymptotic Variance of DiD Estimator}, we present terms like $\Var \left( \frac{1}{n_1} \sum_{i=1}^{n} Y_{i,T} Z_i \right) \triangleq \Var(Y_T(1) \mid Z = 1)/n_1$, corresponding to its variance in the limit, $\Var(Y_T(1) \mid Z = 1)/p$, over $n$ units. That said, we present variance results with first-order approximations in Theorem \ref{Thm 3.1}, omitting the $o_p(n^{-1})$ remainder term; and bias results analogously without the $o(1)$ remainder term (more details in \ref{Appendix B.First-order Sampling Approximations}). Further, if variances are considered to be evaluated conditional on the realized treatment units, the finite-sample difference between having a random $n_1$ and fixed $n_1$ could also be ignored (see e.g., \cite{wager2020stats} page 4 or \cite{abadie2020sampling} page 269 for the conditional arguments). We proceed similarly for the variance results of matching DiD estimators and defer the exact asymptotic variance results to Lemma \ref{Lemma: Asymptotic Variance of Mathing DiD Estimator(s)} in Appendix \ref{Appendix: Proofs of Preliminary Lemmas}. Throughout the rest of the paper, when we write $\Var(\hat\tau_{DiD})$ we formally mean the asymptotic variance expression given in $\operatorname{Avar}(\hat\tau_{DiD})$ (and respectively for the matching estimators), but scaled with sample sizes terms accordingly as mentioned above.

\section{Variance in Univariate Simplified Case}
\label{section: univariate results}
\noindent In this section, we first present variance results in full generality and then compare estimators in sub-sections. Although we present our framework in Section~\ref{section:framework} with general multi-dimensional $(\boldsymbol{\theta}, \boldsymbol{X})$, in this section we focus on the simplified framework with only one latent variable and one observed covariate to clearly present the main tradeoffs. We extend our results to the more general case and formally show that these trade-offs are preserved in Section~\ref{section: multivariate results}.

As we now focus exclusively on the univariate results for this section, we begin with a remark to simplify notation.
\begin{remark} \label{Remark: notations on univariate variance}
For the univariate case ($X_i, \theta_i \in \mathbf{R}$), we adopt the following notational conventions for the elements of the covariance matrix notations in Assumption \ref{Assumption: variables}:
$$\Sigma_{\theta \theta} = \sigma_{\theta}^2, \quad \Sigma_{\theta X} = \rho \sigma_{\theta} \sigma_{x}, \quad \Sigma_{X X} = \sigma_{x}^2.$$
\end{remark}

To parallel our results with \cite{ham2024benefits}, for the rest of the paper, we assume the following specific structures on Lipschitz continuous conditional moments in Assumption \ref{Assumption: variables}. 
\begin{remark} \label{Remark: Structures on Lipschitz continuous functions}
Assume the data-generating process (DGP) for $\{ \{Y_{i,t}\}_{t=0}^{T}, Z_i, \boldsymbol{\theta}_i, \boldsymbol{X}_i \}_{i=1}^{n}$ following the linear equation structural model in Equation~\eqref{eq:LSEM}
with $T=1$ and $X_i, \theta_i \in \mathbf{R}$. We further impose for $z \in \{0,1\}$,
\begin{align*}
\mathbb{E}\left[\theta_i \mid X_i=x, Z_i=z \right]  &=  \mu_{\theta,z} + \rho \frac{\sigma_\theta}{\sigma_x}(x - \mu_{x,z}) \\
\mathbb{E}\left[\theta_i \mid X_i=x, Y_{i,0} = y, Z_i=z \right]  &=  \mu_{\theta,z} + \frac{m_1}{\text{det}}(x - \mu_{x,z}) + \frac{m_2}{\text{det}}(y - \beta_{\theta,0} \mu_{\theta,z} - \beta_{x,0} \mu_{x,z}) 
\end{align*}
and for $z =0$, i.e., for the control group, 
\begin{align*}
\Var\left( \theta_i \mid X_i=x, Z_i=0 \right)  &= (1 - \rho^2)\sigma_\theta^2 \\
\Var\left( \theta_i \mid X_i=x, Y_{i,0} = y, Z_i=0 \right)  &= \sigma_{\theta}^2 -\frac{m_1}{\text{det}} \left(\sigma_{\theta} \sigma_{x} \rho \right) - \frac{m_2}{\text{det}} \left(\beta_{\theta,0} \sigma_{\theta}^2 + \beta_{x,0} \sigma_{\theta} \sigma_{x} \rho  \right),
\end{align*}
where we define the following to avoid long algebraic expressions:
\begin{align*}
m_1 &= -\beta_{x,0}\beta_{\theta,0}\sigma_{x}^2\sigma_{\theta} ^2(1-\rho^2) + \rho\sigma_{\theta} \sigma_{x}\sigma_E^2, \ m_2= \beta_{\theta,0}\sigma_{\theta}^2\sigma_{x}^2(1-\rho^2), \ \text{det} = \sigma_{x}^2(\beta_{\theta,0}^2\sigma_{\theta}^2(1-\rho^2) + \sigma_E^2).
\end{align*}
\end{remark}
\noindent We acknowledge that this is only one instance of the Lipschitz continuity conditions in Assumption \ref{Assumption: variables} with pre-specified Lipschitz constants (e.g., $L = |\rho \frac{\sigma_\theta}{\sigma_x}|$ for $\mathbb{E}\left[\theta_i \mid X_i=x, Z_i=0 \right]  =  \mu_{\theta,0} + \rho \frac{\sigma_\theta}{\sigma_x}(x - \mu_{x,0})$), but the condition remains broadly applicable and is not as restrictive as it may initially appear. 
In fact, our primary assumptions only need to be on specifications of the Lipschitz constants and the remaining constants serve only for normalization. Moreover, the exact normality DGP assumed in \cite{ham2024benefits} implies our conditions (see proofs in Appendix \ref{Appendix: Reparametrization of Lipschitz continuous moments}), but not vice versa, i.e., ours could hold in other scenarios. In this sense, we view these ``pre-specified'' Lipschitz constants reflecting convenient reparameterization, rather than just being restrictive. Moreover, the resulting Lipschitz constants retain clear economic and statistical interpretations (e.g., in terms of correlations and scale parameters).

Given Remark \ref{Remark: notations on univariate variance} and Remark \ref{Remark: Structures on Lipschitz continuous functions}, we are now ready to present our main results.
\begin{theorem}[\textbf{Variance of DiD and Matching DiD Estimators}] 
\label{theom:var_2x2}
Assume the data-generating process (DGP) for $\{ \{Y_{i,t}\}_{t=0}^{T}, Z_i, \boldsymbol{\theta}_i, \boldsymbol{X}_i \}_{i=1}^{n}$ follows the linear equation structural model in Equation~\eqref{eq:LSEM}
with $T=1$ and $X_i, \theta_i \in \mathbf{R}$. Furthermore, suppose this DGP together with a matching procedure $\mathcal{M}$ under Definition \ref{Matching: near-perfect one-to-one matching without replacement}, satisfy Assumptions \ref{Assumption: RDGP}-\ref{Assumption:boundedness}. Then the variances of our estimators are given by: 
\[
\begin{aligned}
\Var \left( \hat{\tau}_{\text{DiD}} \right) &= \left(\frac{1}{n_1} + \frac{1}{n_0} \right) \left\{  2\sigma_{E}^2 + \Delta_{\theta}^2  \sigma_{\theta}^2 +  \Delta_{X}^2 \sigma_{x}^2 + 2 \Delta_{\theta}  \Delta_{X} \rho  \sigma_{x}\sigma_{\theta}  \right\} \\
\Var \left( \hat{\tau}_{\text{DiD}^{X}} \right)  &= \left(\frac{1}{n_1} + \frac{1}{n_1} \right)  \left\{ 2\sigma_{E}^2 + 
\Delta_{\theta}^2 (1-\rho^2) \sigma^2_{\theta}    \right\}  \\
\Var \left( \hat{\tau}_{\text{DiD}^{X,Y_0}} \right) &= \left(\frac{1}{n_1} + \frac{1}{n_1} \right) \left\{ \sigma_{E}^2  + \beta_{\theta,1}^2 \sigma_{\theta}^2 \left(1-\rho^2 \right)   \left( 1 - r_{\theta|x} \right)  \right\},   \label{Thm 3.1} 
\end{aligned}
\] 
where
\[
r_{\theta|x} := 1 - \frac{\Var(Y_{i,0} \mid Z_i = 0, X_i, \theta_i)}{\Var(Y_{i,0} \mid Z_i = 0, X_i)} = \frac{(\beta_{\theta,0})^2 \sigma_\theta^2 (1 - \rho^2)}{(\beta_{\theta,0})^2 \sigma_\theta^2 (1 - \rho^2) + \sigma_E^2}.
\]
\end{theorem}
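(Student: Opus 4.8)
The plan is to reduce each of the three variances to a sum of independent, mean-zero contributions and then read off the required second moments from the bivariate covariance structure in \eqref{Eq: Joint asym normality} together with the error conditions of Assumption~\ref{Assumption: errors}. Throughout I would work to first order, discarding the $o_p(n^{-1})$ sampling and matching remainders as permitted by the conventions stated just before the theorem, and I would use the Gaussian limit so that within each group the latent variable admits the decomposition $\theta_i = \mathbb{E}[\theta_i \mid X_i] + \eta_i$, with $\mathbb{E}[\theta_i \mid X_i] = \mu_{\theta,z} + b\,(X_i - \mu_{x,z})$, common slope $b = \rho\sigma_\theta/\sigma_x$ across $z \in \{0,1\}$ by Assumption~\ref{Assumption: variables}, residual $\eta_i \perp X_i$, and $\mathrm{Var}(\eta_i) = \sigma_\theta^2(1-\rho^2)$. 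The common building block is the first difference $D_i := Y_{i,T}-Y_{i,T-1} = \Delta_0 + \tau Z_i + \Delta_\theta \theta_i + \Delta_X X_i + (\epsilon_{i,1}-\epsilon_{i,0})$, where $\Delta_0 = \beta_{0,1}-\beta_{0,0}$; by no anticipation the constant and the term $\tau Z_i$ are nonrandom contrasts that drop out of every variance.

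For the unmatched estimator, Assumption~\ref{Assumption: RDGP} makes the treated and control sample means independent, so $\mathrm{Var}(\hat{\tau}_{\text{DiD}})$ factors as $\left(\tfrac{1}{n_1}+\tfrac{1}{n_0}\right)$ times the per-unit variance of $\Delta_\theta \theta_i + \Delta_X X_i + (\epsilon_{i,1}-\epsilon_{i,0})$, which is group-invariant because the conditional covariances carry no subscript $z$. Expanding the quadratic form in $(\theta_i,X_i)$ against the covariance block of \eqref{Eq: Joint asym normality} and adding $2\sigma_E^2$ from the two independent errors yields the first line directly.

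For the two matching estimators I would pass to the matched-pair representation, writing each as $\tfrac{1}{n_1}\sum_{i \in \mathcal{T}} d_i$ with $d_i$ the within-pair contrast, invoking Assumption~\ref{Assumption:perfect matching} so that $X_{\mathcal{M}(i)} = X_i$ (and, for the third estimator, also $Y_{\mathcal{M}(i),0}=Y_{i,0}$) up to negligible discrepancy; matching without replacement makes the pairs asymptotically independent, so each variance collapses to $n_1^{-1}$ times a single per-pair variance. When matching on $X$ alone, the exact-$X$ cancellation eliminates both the raw $X$ term and the projection part of $\theta$, leaving $\theta_i - \theta_{\mathcal{M}(i)} = \kappa + (\eta_i - \eta_{\mathcal{M}(i)})$ with $\kappa$ a constant (the common slope makes $\mathbb{E}[\theta\mid X]$ differ only by a group shift) and $\eta_i,\eta_{\mathcal{M}(i)}$ independent residuals; this gives per-pair latent variance $2\sigma_\theta^2(1-\rho^2)$, which combined with the four independent errors of the double difference produces the second line.

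The delicate case, and the one I expect to be the main obstacle, is matching additionally on $Y_0$. Here matching collapses the estimator to the post-period contrast $Y_{i,T}-Y_{\mathcal{M}(i),T}$, and $\theta_{\mathcal{M}(i)}$ is no longer a free draw: given $X=X_i$, fixing $Y_0 = Y_{i,0}$ is equivalent to fixing the noisy signal $S_i := \beta_{\theta,0}\theta_i + \epsilon_{i,0}$, so the matched control's latent is drawn from the control population conditional on $\{X=X_i,\ \beta_{\theta,0}\theta+\epsilon_0 = S_i\}$. I would apply the law of total variance to $\theta_i - \theta_{\mathcal{M}(i)}$, conditioning on the treated unit's $(X_i,\eta_i,\epsilon_{i,0})$. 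The conditional-variance term is the residual variance of $\eta$ after regression on $\beta_{\theta,0}\eta+\epsilon_0$, which equals $\sigma_\theta^2(1-\rho^2)(1-r_{\theta|x})$ and is exactly where the signal fraction $r_{\theta|x}$ enters. The conditional-mean term requires computing $\mathbb{E}[\theta_i-\theta_{\mathcal{M}(i)}\mid X_i,\eta_i,\epsilon_{i,0}]$ and its variance over the treated distribution; the key algebraic simplification is that the regression coefficient $\lambda$ of $\eta$ on $\beta_{\theta,0}\eta+\epsilon_0$ satisfies $\lambda\beta_{\theta,0} = r_{\theta|x}$, after which the $\eta_i$ and $\epsilon_{i,0}$ contributions collapse to the same value $\sigma_\theta^2(1-\rho^2)(1-r_{\theta|x})$. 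Summing the two equal components gives per-pair latent variance $2\sigma_\theta^2(1-\rho^2)(1-r_{\theta|x})$; adding the two independent post-period errors and dividing by $n_1$ yields the third line. Verifying that this conditional machinery is legitimate — that the Gaussian conditional-moment formulas hold in the CLT limit of \eqref{Eq: Joint asym normality} and that matched controls behave as conditional draws with asymptotically vanishing cross-pair dependence — is the step demanding the most care.
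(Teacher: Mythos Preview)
Your argument is correct and reaches the same three formulas, but it is organized quite differently from the paper's proof. The paper does not use the matched-pair representation. Instead, for each matching estimator it expands the variance into a full grid of sample-mean (co)variances --- e.g.\ $\text{Var}(\bar Y_1^T)$, $\text{Var}(\bar Y_1^{MT})$, $\text{Cov}(\bar Y_1^T,\bar Y_0^{MT})$, etc.\ --- and evaluates each term separately via the law of total (co)variance, conditioning on $X$ (or on $(X,Y_0)$) and plugging in the Gaussian conditional moment formulas directly. For the $(X,Y_0)$ case this means inverting the $3\times 3$ covariance of $(\theta,X,Y_0)$ and carrying auxiliary quantities $m_1,m_2,\mathrm{det}$ through several lines before the reliability $r_{\theta|x}$ emerges from the combination $m_2\beta_{\theta,0}/\mathrm{det}$. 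Your route --- writing $\hat\tau = n_1^{-1}\sum_{i\in\mathcal T} d_i$ and computing $\text{Var}(d_i)$ after decomposing $\theta_i = \mu_{\theta,z}+b(X_i-\mu_{x,z})+\eta_i$ --- absorbs all the cross-covariances automatically through the shared $X_i$ within each pair, and your ``noisy signal'' reading of the $Y_0$-constraint (fixing $\beta_{\theta,0}\eta+\epsilon_0$) makes $r_{\theta|x}$ appear immediately as the regression $R^2$ of $\eta$ on that signal. Your approach is more compact and makes the structure of the reliability term transparent; the paper's approach is more mechanical but tracks every component explicitly, which pays off when they generalize to multivariate $(\boldsymbol\theta,\boldsymbol X)$ and multiple pre-periods in Theorem~\ref{Thm 4.1} via block-matrix identities. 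The foundational caveat you flag at the end --- that Gaussian conditional formulas and asymptotic independence of matched pairs require the first-order approximations --- is handled in the paper at exactly the same level of rigor you describe, in Appendix~B.1.1--B.1.2.
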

\noindent The proof is provided in Appendix \ref{Appendix B.Main results}. Before comparing when matching is better or not, we make some notational remarks. We refer to $r_{\theta|x}$ as the reliability term throughout the paper, which measures how much variance $\theta$ accounts for $Y$ after controlling for $X$ (\cite{trochim2006types}; \cite{ham2024benefits}). More formally stated:
\begin{definition}[\textbf{Reliability}] \label{Def: reliabiltiy}
The reliability \( r_\theta \) of a random variable \( Y \) as a measure of a random variable \( \theta \) is
\[
0 \leq r_\theta = 1 - \frac{\Var(Y \mid \theta)}{\Var(Y)} \leq 1.
\]
\end{definition}

In particular, we use $r_{\theta|x}$, the (conditional) reliability of the pre-treatment outcome with respect
to $\theta$ within the control group after controlling for $X$. In our linear framework, the reliability term can be interpreted as the population R-squared statistic if we were able to regress the pre-treatment outcome on the latent variable after accounting for observed covariates within the control group.

\subsection{Matching on \texorpdfstring{$X$}{X} - Uncorrelated Case}
\noindent In this section, we explore when it is better to match just on $X$ compared to no matching when considering variance, i.e., when $\Var \left( \hat{\tau}_{\text{DiD}} \right) \geq \Var \left( \hat{\tau}_{\text{DiD}^X} \right)$.  We first explore this under the setting when parallel trends (PT) exactly holds, particularly when coefficients are time invariant $(\Delta_\theta =\Delta_X = 0)$. We remind readers that the unmatched classic DiD is justified and unbiased when PT holds, thus intuitively the variance should also be the lowest when PT holds. We indeed see that both the variances of $\Var \left( \hat{\tau}_{\text{DiD}} \right)$ and $\Var \left( \hat{\tau}_{\text{DiD}^X} \right)$ are minimized when PT holds since all terms except $\sigma_E^2$ are zero when $\Delta_\theta =\Delta_X = 0$. Since $\sigma_E^2$ is coming from the irreducible errors, this is the minimal attainable variance with finite samples. Moreover, the classic unmatched DiD estimator always exhibits lower variance than $\Var \left( \hat{\tau}_{\text{DiD}^X} \right)$ when PT holds, where $\Var \left( \hat{\tau}_{\text{DiD}} \right) - \Var \left( \hat{\tau}_{\text{DiD}^X} \right) = 2\sigma_E^2 / n_0 - 2\sigma_E^2 / n_1 \leq 0$. This is because it utilizes the full sample size of control units, thereby incorporating more information, while not necessarily ``losing'' anything due to PT exactly holding. This underscores two important facts if PT is known to hold: 1) no additional matching procedure should be considered even when considering variance, and 2) matching strictly ``harms'' the DiD estimator by utilizing fewer samples as matching forcibly removes the undesirable samples of the control group to resemble the treatment group. This is the core of the ``sample-size tradeoff'', which we explore further next. 

The question remains what happens when parallel trends is not exactly satisfied, i.e., $\Delta_\theta =\Delta_X \neq 0$. Since there are multiple sources of variations, we will first focus on the case when $\theta$ and $X$ are uncorrelated ($\rho = 0$) and then explain how the correlation changes the results. When $\rho = 0$, there are four main sources of variations that arises in the variance expression for the classic unmatched DiD estimator in Theorem~\ref{Thm 3.1}. In particular, we label the sources as follows: 
\begin{equation}
\label{eq:source_var_did}
  \Var \left( \hat{\tau}_{\text{DiD}}^{\rho=0} \right) = \underbrace{\left(\frac{1}{n_1} + \frac{1}{n_0} \right)}_{(1): \text{sampling variation}} \left\{  \underbrace{2\sigma_{E}^2}_{(2): \text{noise variation}} + \underbrace{\Delta_{\theta}^2 \sigma_{\theta}^2}_{(3): \text{latent variable variation}} +  \underbrace{\Delta_{X}^2 \sigma_{x}^2}_{(4): \text{observed covariate variation}}  \right\},  
\end{equation}
where (1) the sampling variation captures the trade-off in sample sizes as the matching estimator(s) utilizes fewer units $({n_1}^{-1} + {n_1}^{-1})$ than full sample sizes $({n_1}^{-1} + {n_0}^{-1})$; (2) the noise variation captures the variation coming from irreducible errors; (3) the latent variable variation captures the variation coming from imbalance in $\theta$: our outcome depends on $\theta$ and $\theta$ has its variation, where the variation scales by the magnitude of breakage in PT ($\Delta_\theta$); (4) the observed covariate variation captures the variation coming from imbalance in $X$ similarly as (3).

To facilitate comparison, we also show the variance breakdown for the DiD estimator that matches on $X$ when $\rho=0$:
\begin{equation}
\label{eq:source_var_didx}
  \Var \left( \hat{\tau}_{\text{DiD}^{X}}^{\rho=0} \right) = \underbrace{\left(\frac{1}{n_1} + \frac{1}{n_1} \right)}_{(1): \text{sampling variation}} \left\{  \underbrace{2\sigma_{E}^2}_{(2): \text{noise variation}} + \underbrace{\Delta_{\theta}^2 \sigma_{\theta}^2}_{(3): \text{latent variable variation}} \right\}, 
\end{equation}

Given the labeled sources of variations in Equation~\eqref{eq:source_var_did} and Equation~\eqref{eq:source_var_didx}, we see the 
sample-size tradeoff still carries over as shown by the difference in the sampling variation (1). We assume more control units $n_0 \geq n_1$, thus matching always increases the sampling variation. This is the sample-size ``harm'' of matching. The ``benefit'' of matching on $X$ completely removes the variation contributed by $X$ (variation source (4)) as shown by the lack of the term $\Delta^2_X\sigma_x^2$ in Equation~\eqref{eq:source_var_didx}.

Due to the aforementioned ``harm'' and ``benefit'' of matching on $X$, we state the following Lemma to formally characterize this trade-off:

\begin{lemma}[\textbf{Sufficient and necessary condition to match on $X$ in the uncorrelated case}] \label{Lemma 3.1} Assuming the same conditions in Theorem \ref{Thm 3.1}, but with $\rho =0$, $\Var \left( \hat{\tau}_{\text{DiD}} \right) \geq \Var \left( \hat{\tau}_{\text{DiD}^{X}} \right)$ if and only if  
$$\underbrace{\Delta_{X}^2 \sigma_{x}^2}_{(4)} \underbrace{\left(\frac{1}{n_0} +\frac{1}{n_1} \right)}_{(1)}  \geq \underbrace{\Delta_{\theta}^2  \sigma_{\theta}^2}_{(3)} \underbrace{\left(\frac{1}{n_1} - \frac{1}{n_0} \right)}_{(1)}    + \  \underbrace{2\sigma_{E}^2}_{(2)} \underbrace{\left(\frac{1}{n_1} - \frac{1}{n_0} \right)}_{(1)},$$    
\end{lemma}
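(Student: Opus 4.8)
The plan is to prove the equivalence by directly subtracting the two variance expressions supplied by Theorem~\ref{Thm 3.1} and rearranging. Since matching on $X$ versus not matching differs only in (i) the sampling factor and (ii) the presence of the observed-covariate term, the whole argument reduces to a reversible algebraic identity, which is precisely what makes the condition both necessary and sufficient. First I would specialize the two formulas to $\rho = 0$, so that $\text{Var}(\hat{\tau}_{\text{DiD}}) = \left(\frac{1}{n_1} + \frac{1}{n_0}\right)\left\{2\sigma_E^2 + \Delta_\theta^2 \sigma_\theta^2 + \Delta_X^2 \sigma_x^2\right\}$ and $\text{Var}(\hat{\tau}_{\text{DiD}^X}) = \left(\frac{1}{n_1}+\frac{1}{n_1}\right)\left\{2\sigma_E^2 + \Delta_\theta^2 \sigma_\theta^2\right\}$, matching the source decompositions in Equations~\eqref{eq:source_var_did}--\eqref{eq:source_var_didx}.

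To keep the bookkeeping transparent I would introduce shorthand for the sampling reciprocals, $a := 1/n_1$ and $b := 1/n_0$, and group the bracketed quantities as $A := 2\sigma_E^2$ (noise), $B := \Delta_\theta^2 \sigma_\theta^2$ (latent), and $C := \Delta_X^2 \sigma_x^2$ (observed). In this notation $\text{Var}(\hat{\tau}_{\text{DiD}}) = (a+b)(A+B+C)$ and $\text{Var}(\hat{\tau}_{\text{DiD}^X}) = 2a(A+B)$, so forming the difference gives
\[
\text{Var}(\hat{\tau}_{\text{DiD}}) - \text{Var}(\hat{\tau}_{\text{DiD}^X}) = (a+b)(A+B) + (a+b)C - 2a(A+B) = (b-a)(A+B) + (a+b)C.
\]
The conceptual content is the cancellation here: the shared noise-plus-latent block $(A+B)$ is present in both estimators and collapses to the coefficient $(b-a) = \tfrac{1}{n_0} - \tfrac{1}{n_1}$, while the observed-covariate block $C$ survives only in the unmatched estimator, carrying the full sampling factor $(a+b)$. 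This is exactly the ``benefit'' (removing source $(4)$) versus ``harm'' (enlarging source $(1)$) tension described before the Lemma.

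It then follows that $\text{Var}(\hat{\tau}_{\text{DiD}}) \geq \text{Var}(\hat{\tau}_{\text{DiD}^X})$ holds if and only if $(a+b)C \geq (a-b)(A+B)$, and restoring the original symbols yields precisely the claimed inequality, namely $\Delta_X^2 \sigma_x^2\!\left(\tfrac{1}{n_0}+\tfrac{1}{n_1}\right) \geq \Delta_\theta^2 \sigma_\theta^2\!\left(\tfrac{1}{n_1}-\tfrac{1}{n_0}\right) + 2\sigma_E^2\!\left(\tfrac{1}{n_1}-\tfrac{1}{n_0}\right)$. Because each step is either an identity or an order-preserving rearrangement, the equivalence is genuinely an iff rather than a one-directional bound. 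There is no real analytic obstacle; the only point requiring care is the sign when transposing $(b-a)(A+B)$ across the inequality. Here I would invoke the framework assumption $n_0 \geq n_1$, which makes $a - b = \tfrac{1}{n_1} - \tfrac{1}{n_0} \geq 0$, so the right-hand side is nonnegative and aligns term-by-term with the labeled sources $(1)$--$(4)$; I would also note that the equivalence itself is a pure algebraic identity and holds regardless of this sign, with $n_0 \geq n_1$ serving only to fix the interpretation of the sampling tradeoff.
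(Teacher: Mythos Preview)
Your proposal is correct and takes essentially the same approach as the paper: the paper treats Lemma~\ref{Lemma 3.1} (and its generalization Lemma~\ref{Lemma 3.2}) as an algebraic rearrangement of the variance formulas in Theorem~\ref{Thm 3.1}, stating only that ``the proof comes from an algebraic simplification of the main theorem.'' Your shorthand $a,b,A,B,C$ makes that simplification explicit and the computation $(a+b)(A+B+C)-2a(A+B)=(b-a)(A+B)+(a+b)C$ is exactly the right identity, so there is nothing to add.
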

\noindent Lemma~\ref{Lemma 3.1} mathematically captures the above intuition. The left-hand side (LHS) of the inequality can be interpreted as the benefits obtained from matching while the right-hand side (RHS) represents the sample-size related costs we hope to outweigh. For example, the LHS becomes larger, thus preferring matching, when the variation source (4) coming from $X$ ($\Delta_x \sigma_x$) is large, showing the ``benefit'' of matching on $X$ by eliminating variation source (4). On the other hand, the RHS also becomes larger, thus not preferring matching, when $n_0 >> n_1$ so that $(n_1^{-1} - n_0^{-1})$ increases, i.e., when the sample sizes between $n_1$ and $n_0$ are largely different it is not preferred to match.

The aforementioned rationale explains how variation source (4) and (1) are relevant. However, Lemma~\ref{Lemma 3.1} also shows that sources (3) and (2) have impacts on the RHS through variation source (1). This is because sources (1)-(4) do not contribute equally to the final variance as shown by Equation~\eqref{eq:source_var_did}-\eqref{eq:source_var_didx}. In particular, source (1) has a multiplicative impact, i.e., increasing its variation inadvertently increases variations in all other sources. Consequently, Lemma~\ref{Lemma 3.1} shows the contribution of variation source (3) and (2) through the sample-size tradeoffs. 

To summarize, on one hand, matching on $X$ pays a variance cost through the sample-size in source (1), but instead improves variance by reducing the variation coming from $X$, i.e., source (4). On the other hand, variations from source (2) and source (3) do not get directly worse, but, nevertheless, still contribute to Lemma~\ref{Lemma 3.1} through the positive scaling from source (1).

\subsection{Matching on \texorpdfstring{$X$}{X} - Correlated Case}
\noindent In this section, we generalize the above findings to the correlated case, where in our setting, $\rho$ captures the correlation between $\theta$ and $X$. Consequently, one should expect the benefit of matching on $X$ to increase when the correlation is higher, since high correlation implies that one is not only matching on $X$, but also indirectly matching and recovering variations from the unobserved $\theta$ through the correlation.

To formalize this, we first note an additional variance source (5) through the correlation $\rho$ ($\Delta_{\theta}\Delta_{X}\rho \sigma_{\theta} \sigma_{x}$) in $\text{Var} \left( \hat{\tau}_{\text{DiD}}\right)$. On one hand, matching on $X$ eliminates the variation from source (5), and reduces the source (3) variation by an amount proportional to the correlation factor $1-\rho^2$.  In particular, in the perfectly correlated case ($|\rho|= 1$), we only have the irreducible errors remaining in the variance expression: 
$$\Var \left( \hat{\tau}_{\text{DiD}^{X}}^{|\rho|= 1} \right)  = \left(\frac{1}{n_1} + \frac{1}{n_1} \right)  \left\{ 2\sigma_{E}^2 + 
\Delta_{\theta}^2 (\underbrace{1-\rho^2}_{=0}) \sigma^2_{\theta}    \right\} = \left(\frac{1}{n_1} + \frac{1}{n_1} \right) \left(2\sigma_{E}^2 \right) $$
This result verifies and proves the aforementioned intuition that matching on $X$, under perfect correlation, matches indirectly on 
the latent variable $\theta$. In this case, matching on $X$ fully recovers $\theta$, thus eliminating all variations from the imbalance in the latent variable as well. Although this might seem that the correlation only helps the matching estimator, i.e., a non-zero $\rho$ reduces $\Var( \hat{\tau}_{\text{DiD}^{X}})$, a non-zero $\rho$ could, in theory, also help reduce the variance of the classical unmatched DiD estimator through the final term in $2\Delta_\theta \Delta_X \rho $. In other words, there is a mathematical possibility  that $\Delta_\theta \Delta_X \rho < 0$, i.e., the signs of the time-varying effects of  $\theta$ and $X$ ($\Delta_{\theta}, \Delta_x$) move in opposing directions relative to their correlation. In this case, the unmatched DiD estimator gets a further reduction in variance than the matching on $X$ DiD estimator. We, however, believe this is more of a mathematical technicality and not likely to occur in practice. For example, if $\theta$ and $X$ are positively correlated, then there is a higher chance that they have similar underlying effects, so it may be reasonable to believe the effects would evolve in the same direction over time as opposed to suddenly changing signs to allow this cancellation to occur.

In summary, although there is a mathematical possibility that the correlation ``cancels'' out some variation, making the naive DiD estimator better, we believe that matching on a correlated $X$ generally improves the matching estimator as it also recovers information about $\theta$. This conclusion was also echoed in Section 4 of \cite{ham2024benefits}. For the sake of completeness, we still provide a full condition in the presence of correlation with respect to when matching on $X$ is better in the following Lemma.
\begin{lemma}[\textbf{Sufficient and necessary condition to match on $X$ in the general correlated case}] \label{Lemma 3.2} Assuming the same conditions in Theorem \ref{Thm 3.1}, 
$\Var \left( \hat{\tau}_{\text{DiD}} \right) \geq \Var \left( \hat{\tau}_{\text{DiD}^{X}} \right)$ if and only if 
$$\left(\underbrace{\Delta_{X}^2 \sigma_{x}^2}_{(4)}  + \underbrace{2\Delta_{\theta}\Delta_{X}\rho \sigma_{\theta} \sigma_{x}}_{(5)} \right)   \underbrace{\left(\frac{1}{n_0} +\frac{1}{n_1} \right)}_{(1)}  \geq \underbrace{\Delta_{\theta}^2  \sigma_{\theta}^2}_{(3)} \underbrace{\left(\frac{1 - 2 \rho^2}{n_1} - \frac{1}{n_0} \right)}_{(1)}    + \  \underbrace{2\sigma_{E}^2}_{(2)} \underbrace{\left(\frac{1}{n_1} - \frac{1}{n_0} \right)}_{(1)}$$  
\end{lemma}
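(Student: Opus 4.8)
The plan is to prove the equivalence by a direct algebraic comparison of the two closed-form variances supplied by Theorem~\ref{Thm 3.1}, since both $\text{Var}(\hat{\tau}_{\text{DiD}})$ and $\text{Var}(\hat{\tau}_{\text{DiD}^{X}})$ are already available in fully explicit form. Nothing probabilistic remains to be done: the inequality $\text{Var}(\hat{\tau}_{\text{DiD}}) \geq \text{Var}(\hat{\tau}_{\text{DiD}^{X}})$ is equivalent to the nonnegativity of their difference, and the stated condition is merely a rearrangement of that difference in which the $X$-driven terms are isolated on the left-hand side.

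First I would write the difference $D := \text{Var}(\hat{\tau}_{\text{DiD}}) - \text{Var}(\hat{\tau}_{\text{DiD}^{X}})$ by substituting the two expressions, abbreviating $A := \tfrac{1}{n_1} + \tfrac{1}{n_0}$ for the sampling factor of the unmatched estimator and $B := \tfrac{2}{n_1}$ for that of the matched estimator. I would then expand $D$ and collect terms by the four variance sources: the noise term $2\sigma_E^2$, the latent term $\Delta_\theta^2 \sigma_\theta^2$, the observed term $\Delta_X^2 \sigma_x^2$, and the cross term $2\Delta_\theta \Delta_X \rho \sigma_\theta \sigma_x$. The two reductions that drive the final form are the identity $A - B = \tfrac{1}{n_0} - \tfrac{1}{n_1}$, which governs the noise coefficient, and $A - B(1-\rho^2) = \tfrac{1}{n_0} - \tfrac{1-2\rho^2}{n_1}$, which governs the latent coefficient; the observed and cross terms simply carry the factor $A$ untouched, since they are absent from the matched variance.

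Having collected the terms, the inequality $D \geq 0$ reads as a noise contribution proportional to $2\sigma_E^2$ with coefficient $\tfrac{1}{n_0} - \tfrac{1}{n_1}$, a latent contribution proportional to $\Delta_\theta^2 \sigma_\theta^2$ with coefficient $\tfrac{1}{n_0} - \tfrac{1-2\rho^2}{n_1}$, and the combined observed-plus-cross contribution $A(\Delta_X^2 \sigma_x^2 + 2\Delta_\theta \Delta_X \rho \sigma_\theta \sigma_x)$, all summing to a nonnegative quantity. Moving the noise and latent pieces to the right-hand side directly yields the claimed inequality, with $(\Delta_X^2 \sigma_x^2 + 2\Delta_\theta \Delta_X \rho \sigma_\theta \sigma_x)\left(\tfrac{1}{n_0} + \tfrac{1}{n_1}\right)$ on the left and $\Delta_\theta^2 \sigma_\theta^2 \left(\tfrac{1-2\rho^2}{n_1} - \tfrac{1}{n_0}\right) + 2\sigma_E^2\left(\tfrac{1}{n_1} - \tfrac{1}{n_0}\right)$ on the right. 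Because every step is an equality-preserving rearrangement of the inequality, involving only additions and subtractions of common terms with no division by a quantity of uncertain sign, the equivalence holds in both directions, which delivers the ``if and only if'' for free.

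Since the argument is mechanical, there is no genuine analytic obstacle; the only place demanding care is the bookkeeping of the $(1-\rho^2)$ factor attached to the matched estimator's latent term. When the $\tfrac{1}{n_1}$ pieces of $A$ and $B(1-\rho^2)$ are combined, the surviving coefficient is $\tfrac{1-2\rho^2}{n_1}$ rather than $\tfrac{1-\rho^2}{n_1}$, and it is precisely here that a sign slip or a stray factor of two would corrupt the final condition. I would therefore double-check this coefficient by specializing to $\rho = 0$ and confirming that the expression collapses to the uncorrelated condition of Lemma~\ref{Lemma 3.1}, which serves as a convenient internal consistency check.
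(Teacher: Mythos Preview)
Your proposal is correct and matches the paper's approach exactly: the paper states only that ``the proof comes from an algebraic simplification of the main theorem,'' which is precisely the direct subtraction-and-rearrangement of the two variance formulas from Theorem~\ref{Thm 3.1} that you outline. Your bookkeeping of the $(1-\rho^2)$ factor yielding the $\tfrac{1-2\rho^2}{n_1}$ coefficient is accurate, and the $\rho=0$ sanity check against Lemma~\ref{Lemma 3.1} is a nice touch.
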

\noindent The proof comes from an algebraic simplification of the main theorem. Again, the left-hand side (LHS) of the inequality can be interpreted as the benefit achieved through matching on observed covariates, while the right-hand side (RHS) represents the associated cost that we hope to outweigh. As mentioned above, in general, the presence of a non-zero correlation leads to greater benefits (additional variation source (5) shown in LHS) and lower costs (reduction of variance from also recovering $\theta$, shown as the first source (1) in RHS diminishing).

In summary, this section highlights that applied researchers should be cautious with whether to match or not, even only on observed covariates, when considering variance. Unlike the bias results presented in \cite{ham2024benefits}, which shows one should always match on $X$ when considering bias, our results show a more nuanced narrative through the sample-size tradeoff not captured by bias consideration alone. Because of this tradeoff, applied researchers may now want to know whether they should indeed match on $X$ when considering variance or MSE. We thus provide guidelines and consistent estimation strategies for variance to answer this question in Section \ref{section: Determining What to Match}.

\subsection{Matching on \texorpdfstring{$X,Y_0$}{X,Y0}} \label{section: univariate results: match on both}
\noindent In this section, we analyze further when it is better to additionally match on $Y_0$ after matching on $X$. We begin by stating a surprising result:
\begin{lemma}[\textbf{Matching on $X,Y_0$ always helps}] \label{Lemma 3.3}
Assuming the same conditions in Theorem \ref{Thm 3.1}, additionally matching on pre-treatment outcomes always reduces variance relative to matching only on observed covariates, i.e., $\Var \left( \hat{\tau}_{\text{DiD}^{X}} \right) \geq \Var\left( \hat{\tau}_{\text{DiD}^{X,Y_0}} \right)$ always holds.  
\end{lemma}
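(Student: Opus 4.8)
The plan is to work directly from the closed-form variance expressions established in Theorem~\ref{theom:var_2x2}, since the lemma is a pointwise algebraic comparison once those formulas are in hand. The first observation is structural: both $\text{Var}(\hat{\tau}_{\text{DiD}^{X}})$ and $\text{Var}(\hat{\tau}_{\text{DiD}^{X,Y_0}})$ carry the identical prefactor $(1/n_1 + 1/n_1) = 2/n_1 > 0$. This is precisely why the sample-size tradeoff vanishes once matching has already been performed: both estimators use the same effective sample, so there is no competing sampling-variation term to offset the comparison. Hence I would divide through by $2/n_1$ and reduce the claim to the scalar inequality
\[
\sigma_E^2 + \Delta_\theta^2(1-\rho^2)\sigma_\theta^2 \;\geq\; \beta_{\theta,1}^2\,\sigma_\theta^2(1-\rho^2)\,(1-r_{\theta|x}),
\]
where I have already cancelled one common $\sigma_E^2$ from each side of the bracketed quantities.

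Next I would substitute the explicit form of the reliability complement, namely $1 - r_{\theta|x} = \sigma_E^2 / \big((\beta_{\theta,0})^2\sigma_\theta^2(1-\rho^2) + \sigma_E^2\big)$ given in the theorem statement, and recall that in the two-period setting $\Delta_\theta = \beta_{\theta,1} - \beta_{\theta,0}$. To keep the algebra transparent I would introduce the shorthand $s := \sigma_\theta^2(1-\rho^2) \geq 0$, $e := \sigma_E^2 > 0$, $a := \beta_{\theta,0}$, and $b := \beta_{\theta,1}$, so that the target becomes $e + (b-a)^2 s \geq b^2 s \cdot e/(a^2 s + e)$. Since $a^2 s + e > 0$, I can clear the denominator without reversing the inequality.

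The decisive step is the simplification after cross-multiplication. Expanding $\big(e + (b-a)^2 s\big)(a^2 s + e) - b^2 s e$ and collecting the terms proportional to $se$ via the identity $a^2 + (b-a)^2 - b^2 = 2a(a-b)$, the whole expression collapses to
\[
e^2 - 2a(b-a)se + a^2(b-a)^2 s^2 = \big(e - a(b-a)s\big)^2 \geq 0.
\]
Recognizing this perfect square is the crux of the argument; everything else is bookkeeping. I would then translate back to record the equality condition $\sigma_E^2 = \beta_{\theta,0}(\beta_{\theta,1}-\beta_{\theta,0})\sigma_\theta^2(1-\rho^2)$, which pins down exactly when the two matched estimators share the same variance.

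The main obstacle I anticipate is organizational rather than conceptual: the factorization into a square is easy to miss unless the $se$-terms are grouped using the identity above, and one must track that it is precisely the substitution $\Delta_\theta = b - a$ which generates the cross term $-2a(b-a)se$ needed to complete the square. No positivity beyond $\sigma_E^2 > 0$ and $1-\rho^2 \geq 0$ is used, so the inequality holds for all admissible parameter values, matching the ``always helps'' phrasing of the lemma.
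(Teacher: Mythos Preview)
Your proof is correct and follows essentially the same perfect-square route as the paper: both start from the variance formulas in Theorem~\ref{theom:var_2x2}, cancel the common prefactor $2/n_1$ and one $\sigma_E^2$, and reduce the comparison to an algebraic inequality that collapses to a nonnegative square. The paper normalizes by $(1-\rho^2)\beta_{\theta,1}^2\sigma_\theta^2$ and rewrites everything in terms of $r_{\theta|x}$ and the ratio $\beta_{\theta,0}/\beta_{\theta,1}$, arriving at $(r_{\theta|x} - \beta_{\theta,0}/\beta_{\theta,1})^2 \geq 0$; because of this division it must dispose of $|\rho|=1$ and $\beta_{\theta,1}=0$ as separate trivial cases. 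Your parameterization in $(a,b,s,e)$ and the cross-multiplication by $a^2s+e>0$ (which is strictly positive since $e=\sigma_E^2>0$) sidesteps those divisions entirely, so your argument is a slightly cleaner single-case version of the same identity --- and one can check the two squares encode the same equality condition $\sigma_E^2 = \beta_{\theta,0}(\beta_{\theta,1}-\beta_{\theta,0})\sigma_\theta^2(1-\rho^2)$.
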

\noindent The proof is provided in Appendix \ref{Appendix B.2}. This result states that regardless of the degree of informativeness of the reliability term, matching additionally on the pre-treatment outcome always reduces the variance relative to matching only on observed covariates. In other words, when considering matching additionally on $Y_0$ there is no tradeoff. As a reminder, the pre-treatment outcome contains information about $\theta$, thus matching on the pre-treatment outcome ``imperfectly'' matches $\theta$. 

We acknowledge that this result may, to some extent, be attributed to our specific matching algorithm $\mathcal{M}$ used in Definition \ref{Matching: near-perfect one-to-one matching without replacement}. However, the general insights behind this result remain clear. Specifically, Theorem~\ref{Thm 3.1} shows that many sources of variation are similar when comparing $\Var \left( \hat{\tau}_{\text{DiD}^{X}} \right)$ and $\Var\left( \hat{\tau}_{\text{DiD}^{X, Y_0}} \right)$. For example, there is no sample-size tradeoff since one has already conducted matching and the effective sample size for both estimators are $n_1$ for the treatment group and also $n_1$ for the new matched control group. Furthermore, both estimators do not have variation source (4) from $X$ as $X$ is matched on. Lastly, both estimators are similarly affected by the correlation through the $(1- \rho^2)$ term. 

The main difference comes from (a) how (perfect) matching on $Y_0$ gets rid of the second-difference in DiD and (b) how matching on $Y_0$ indirectly matches on $\theta$ as captured by the new reliability term $r_{\theta \mid x}$ in the expression of $\Var \left( \hat{\tau}_{\text{DiD}^{X, Y_0}} \right)$. We note that matching on $Y_0$ removes any baseline differences of $Y_0$ between the treatment and control, thus the DiD estimator reduces to a difference-in-means estimator. This realization was also explored thoroughly in \cite{ham2024benefits}. The main consequence of difference (a) is how $\Var \left( \hat{\tau}_{\text{DiD}^{X}} \right)$ scales with both pre and post-period slopes captured by $\Delta_{\theta}$ while $\text{Var} \left( \hat{\tau}_{\text{DiD}^{X, Y_0}} \right)$ scales only with the post slope $\beta_{\theta, 1}$. This difference also explained why there was a trade-off in Section 4 of \cite{ham2024benefits} since even if parallel trends holds exactly, i.e., $\Delta_{\theta} = 0$, the second term for $\Var \left( \hat{\tau}_{\text{DiD}^{X}} \right)$ is zero and minimized but the second term for $\text{Var} \left( \hat{\tau}_{\text{DiD}^{X, Y_0}} \right)$ is non-zero and could be potentially still large. 

This ``harm'' was offset by the introduction of the reliability term $(1- r_{\theta \mid x})$, which captures the aforementioned intuition of how matching on $Y_0$ can reduce the variation coming from $\theta$ as it indirectly matches on $\theta$. More formally, as the reliability, $0 < r_{\theta \mid x} <1$, approaches 1, i.e., $\theta$ perfectly explains $Y_0$, $\text{Var} \left( \hat{\tau}_{\text{DiD}^{X, Y_0}} \right)$ is minimized to the irreducible errors. 

Given these two main differences, one may expect a similar tradeoff where sometimes it is better to match on $Y_0$ depending on whether the reliability is high and sometimes it is  better to not additionally match on $Y_0$ if $\Delta_{\theta}$ is much smaller than $\beta_{\theta, 1}$, such as when parallel trends holds. However, such a tradeoff does not exist in the variance case because of the variation source (2) $\sigma_E^2$. Difference (a), i.e., how matching on $Y_0$ erodes the second difference in DiD, also reduces one less ``difference'' to contribute the irreducible noise. As a result, there is one less $\sigma_E^2$ term in the expression of $\Var \left( \hat{\tau}_{\text{DiD}^{X, Y_0}} \right)$ compared to  $\text{Var} \left( \hat{\tau}_{\text{DiD}^{X}} \right)$. This difference allows us to prove that $\text{Var} \left( \hat{\tau}_{\text{DiD}^{X}} \right)$ is always larger than $\Var \left( \hat{\tau}_{\text{DiD}^{X, Y_0}} \right)$. Although not immediately obvious from the expressions in Theorem~\ref{Thm 3.1}, the reliability, $r_{\theta \mid x} = ((\beta_{\theta,0})^2 \sigma_\theta^2 (1 - \rho^2))/((\beta_{\theta,0})^2 \sigma_\theta^2 (1 - \rho^2) + \sigma_E^2)$, is a function of both $\sigma_E^2$ and $\beta_{\theta, 0}$, and thus one can show the relationship where the benefits from the reliability and one less $\sigma_E^2$ is sufficient to outweigh any cost from eroding the second difference in the DiD estimator. Lemma~\ref{Lemma 3.3} illustrates and proves this idea.

\section{Full Bias-Variance Results in General Settings} \label{section: multivariate results}
\noindent While our analysis so far has focused on the canonical difference-in-differences set-up with a single pre-treatment period and univariate observed covariate $X$ and latent variable $\theta$, the results and insights extend similarly to the general framework described in Section \ref{section:framework} with multivariate ($\boldsymbol{X}, \boldsymbol{\theta}$). Furthermore, in practice, it is common to observe multiple pre-treatment time points, where applied researchers may want to match on multiple pre-treatment outcomes and multiple covariates (\cite{bartanen2019impacts, ham2024benefits}). We therefore generalize the previous setting with multiple time periods, namely $T > 1$, in which we assumed $T = 1$ in all previous sections. 
 
When there are multiple time points, it is popular to use a linear regression framework to estimate ATT, especially the following two-way fixed effects regression (see e.g., chapter 5.2 in \cite{angrist2009mostly}): 
\begin{equation}
Y_{i,t} = \alpha_i + \lambda_t + \beta_{ZW} Z_i W_t + \epsilon_{i,t},
\end{equation}
where $W_t$ is an indicator that is 1 if in the post-treatment period ($t = T$) and zero otherwise, $\alpha_i, \lambda_t$ are fixed effects for unit and time, respectively, and $\epsilon_{i,t}$ is the residual error. The estimate $\hat{\beta}_{ZW}$ for the interaction term would then be taken as the DiD estimate of the ATT. It is well known that in a balanced panel data with multiple pre-treatment outcomes, $\hat{\beta}_{ZW}$ is equivalent to the estimate one would obtain using a classical two-period DiD using the average of all the $T$ pre-treatment outcomes, $\overline{Y}_{i,0:(T-1)} = \sum_{t=0}^{T-1} Y_{i,t} / T$, for the pre-treatment outcome \cite{wooldridge2021two}. Throughout this section, we leverage this balancing property to define and analyze the generalized estimators with multivariate covariates and time points. Formally, we have 
\[
\begin{aligned}
\vec{\Delta}_{\theta} &= \vec{\beta}_{\theta,T} - \frac{\sum_{t=0}^{T-1} \vec{\beta}_{\theta,t}}{T} \quad \text{time variation of effect of $\theta$} \\ 
\vec{\Delta}_{X} &= \vec{\beta}_{x,T} - \frac{\sum_{t=0}^{T-1} \vec{\beta}_{x,t}}{T} \quad \text{time variation of effect of $X$,}
\end{aligned}
\]
where the original parallel trends can still be represented in a vectorized form in Equation \eqref{eq: specific PT under LSEM}.

Since our setup is still asymptotically equivalent to the counterpart in \cite{ham2024benefits} under this framework,
we propose finite sample analogs of their relevant generalized DiD and generalized matching DiD estimators, where they match on all available pre-treatment outcomes and covariates\footnote{We choose to do such matching because our model does not explicitly index covariates by time and we implicitly avoid potential bias issues from matching on post-treatment variables. For applied researchers who are interested in not matching on all available covariates, see, e.g., \cite{rosenbaum1984consequences}, \cite{stuart2010matching} for further discussion.}. Specifically, given the model and the balanced panel setting, we introduce the following:
\begin{align}
\hat{\tau}_{\text{gDiD}} &=  \frac{1}{n_1} \sum_{i=1}^{n} \left(Y_{i,T} - \bar{Y}_{i,0:(T-1)}\right) Z_i -  \frac{1}{n_0} \sum_{i=1}^{n} \left(Y_{i,T} - \bar{Y}_{i,0:(T-1)}\right) (1-Z_i)  \\
\hat{\tau}_{\text{gDiD}}^{\boldsymbol{X}} &=  \frac{1}{n_1} \sum_{i=1}^{n} \left( Y_{i,T} - \frac{1}{M} \sum_{j \in \mathcal{J}_M^{\boldsymbol{X}}(i)} Y_{j,T} \right)Z_i -  \frac{1}{n_1} \sum_{i=1}^{n} \left( \bar{Y}_{i,0:(T-1)} - \frac{1}{M} \sum_{j \in \mathcal{J}_M^{\boldsymbol{X}}(i)} \bar{Y}_{j,0:(T-1)} \right)Z_i  \\
\hat{\tau}_{\text{gDiD}}^{\boldsymbol{X},\boldsymbol{Y}^{\mathbf{T}}} &= \frac{1}{n_1}\sum_{i=1}^{n}  \left( Y_{i,T} - \frac{1}{M} \sum_{j \in \mathcal{J}_M^{\boldsymbol{X},\boldsymbol{Y}^{\mathbf{T}}}(i)} Y_{j,T} \right)Z_i,
\end{align}
where $\bar{Y}_{i,0:(T-1)}$ is the average of all the $T$ pre-treatment outcomes, i.e.,
\[
\bar{Y}_{i,0:(T-1)} =
\frac{\sum_{t=0}^{T-1} \beta_{0,t}}{T}
+ \frac{\sum_{t=0}^{T-1} \vec{\beta}_{\theta,t}^{\top} \boldsymbol{\theta}_i}{T}
+ \frac{\sum_{t=0}^{T-1} \vec{\beta}_{x,t}^{\top} \boldsymbol{X_i}}{T}
+ \frac{\sum_{t=0}^{T-1} \epsilon_{i,t}}{T}.
\]
We still only assume one post-treatment period at the final period $T$, i.e., units receive treatment only at $t = T$ and not before $t < T$. Therefore, we keep the notation $Z_i$ as the binary treatment indicator without a subscript $t$ as $Z_{i, t}$.

\subsection{Generalized Variance Results}
\begin{theorem}[\textbf{Variance of generalized DiD and generalized Matching DiD Estimators}] \label{Thm 4.1} 
Assume the data-generating process (DGP) for $\{ \{Y_{i,t}\}_{t=0}^{T}, Z_i, \boldsymbol{\theta}_i, \boldsymbol{X}_i \}_{i=1}^{n}$ follows the linear equation structural model in Equation~\eqref{eq:LSEM}. Furthermore, this DGP together with a matching procedure $\mathcal{M}$ under Definition \ref{Matching: near-perfect one-to-one matching without replacement}, satisfy Assumptions \ref{Assumption: RDGP}-\ref{Assumption:boundedness}. Then the variances of our estimators are given by: 
\[
\begin{aligned}
\Var \left( \hat{\tau}_{\text{gDiD}} \right) 
&=  \left(\frac{1}{n_1} + \frac{1}{n_0}\right) \left\{  \frac{T+1}{T} \sigma_{E}^2 +  \vec{\Delta}_{\theta}^{\top} \Sigma_{{\theta} {\theta}} \vec{\Delta}_{\theta}  + \vec{\Delta}_{X}^{\top} \Sigma_{X X} \vec{\Delta}_{X}  + 2 \vec{\Delta}_{\theta}^{\top}   \Sigma_{\theta X} \vec{\Delta}_{X} \right\}   \\
\Var  \left( \hat{\tau}_{\text{gDiD}}^{\boldsymbol{X}} \right) &=  \left(\frac{1}{n_1} + \frac{1}{n_1}\right) \left\{  \frac{T+1}{T} \sigma_{E}^2 + \vec{\Delta}_{\theta}^{\top} \Sigma_{\tilde{\theta} \tilde{\theta}} \vec{\Delta}_{\theta} \right\}  \\ 
\Var \left( \hat{\tau}_{\text{gDiD}}^{\boldsymbol{X},\boldsymbol{Y}^{\top}}\right) 
& =  \left( \frac{1}{n_1} +   \frac{1}{n_1} \right)   \left\{ \vec{\beta}_{\theta,T}^{\top}  \left( \mathbf{I}_{q \times q}  - \mathbf{r}_{\theta|x} \right)  \Sigma_{\tilde{\theta} \tilde{\theta}}  \vec{\beta}_{\theta,T}  + \sigma^2_E \right \}
\end{aligned}
\]
where we denote $\Sigma_{\tilde{\theta} \tilde{\theta}} \triangleq \Sigma_{\theta \theta} - \Sigma_{\theta X} \Sigma_{X X}^{-1} \Sigma_{X \theta}$; $\mathbf{r}_{\theta|x}  \triangleq \Sigma_{\tilde{\theta} \tilde{\theta}} B_{\theta}^{\top} (B_\theta \Sigma_{\tilde{\theta} \tilde{\theta}} B_\theta^{\top}+ \Sigma_\epsilon)^{-1} B_\theta    \in \mathbf{R}^{q \times q}$, 
and 
$B_{\theta} = [\vec{\beta}_{\theta,0}^{\top},\cdots, \vec{\beta}_{\theta,T-1}^{\top}]^{\top} \in \mathbf{R}^{T \times q}$, $\Sigma_{\epsilon} = \sigma_E^2 \ \mathbf{I}_T \in \mathbf{R}^{T \times T}.$
\end{theorem}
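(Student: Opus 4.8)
The plan is to prove all three displays by the template already used for the univariate Theorem~\ref{Thm 3.1}, now executed with vector–matrix algebra, exploiting the balancing identity that lets the $T$ pre-periods enter only through the averaged pre-outcome $\bar{Y}_{i,0:(T-1)}$. First I would substitute the structural model \eqref{eq:LSEM} into each estimator and collect terms. For a generic unit the within-unit contrast $Y_{i,T}-\bar{Y}_{i,0:(T-1)}$ splits into a deterministic piece, a signal piece $\vec{\Delta}_{\theta}^{\top}\boldsymbol{\theta}_i+\vec{\Delta}_{X}^{\top}\boldsymbol{X}_i$, and a noise piece $\epsilon_{i,T}-T^{-1}\sum_{t<T}\epsilon_{i,t}$; the treatment indicator only shifts means and drops out of every variance. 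Since Assumptions~\ref{Assumption: RDGP}--\ref{Assumption: variables} make units i.i.d.\ within each arm, independent across arms, and equip them with common conditional second moments, the noise piece contributes $\text{Var}(\epsilon_{i,T})+\text{Var}(T^{-1}\sum_{t<T}\epsilon_{i,t})=\tfrac{T+1}{T}\sigma_E^2$ by the no-serial-correlation part of Assumption~\ref{Assumption: errors}, while the signal piece contributes the quadratic form $\vec{\Delta}_{\theta}^{\top}\Sigma_{\theta\theta}\vec{\Delta}_{\theta}+\vec{\Delta}_{X}^{\top}\Sigma_{XX}\vec{\Delta}_{X}+2\vec{\Delta}_{\theta}^{\top}\Sigma_{\theta X}\vec{\Delta}_{X}$. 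Scaling this per-unit variance by $n_1^{-1}$ and $n_0^{-1}$ for the two independent arms and summing yields $\text{Var}(\hat{\tau}_{\text{gDiD}})$; the first-order, conditional-on-$n_1$ reductions of Appendix~\ref{Appendix B.First-order Sampling Approximations} let me treat $n_1$ as fixed and discard the $o_p(n^{-1})$ remainder.

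For $\hat{\tau}_{\text{gDiD}}^{\boldsymbol{X}}$ I would write the estimator as a treated mean minus a matched-control mean and analyze the two arms jointly, since close-to-perfect matching (Assumption~\ref{Assumption:perfect matching}) ties the matched controls' covariates to the treated ones. Under the approximate exact match the two arms share the same empirical $\boldsymbol{X}$-distribution up to $\Delta_n=o_p(n^{-1/2})$ and $\Xi_n=o_p(n^{-1})$, so $\vec{\Delta}_{X}^{\top}(\bar{\boldsymbol{X}}^{(1)}-\bar{\boldsymbol{X}}^{(M)})$ vanishes to the stated order and the entire $\boldsymbol{X}$-source disappears. For the $\boldsymbol{\theta}$-source I condition on the shared matched covariates: by Assumptions~\ref{Assumption: RDGP} and~\ref{Assumption: variables} the two arms are then conditionally independent with a common per-unit residual covariance, and their conditional means differ only by a constant (common linear-projection slope, group-specific intercept), contributing no variance. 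Hence the $\boldsymbol{\theta}$-mean difference has variance $\tfrac{2}{n_1}\Sigma_{\tilde{\theta}\tilde{\theta}}$ with the partialled-out covariance $\Sigma_{\tilde{\theta}\tilde{\theta}}=\Sigma_{\theta\theta}-\Sigma_{\theta X}\Sigma_{XX}^{-1}\Sigma_{X\theta}$. Both arms now have size $n_1$, producing the prefactor $(n_1^{-1}+n_1^{-1})$ and the middle display.

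For $\hat{\tau}_{\text{gDiD}}^{\boldsymbol{X},\boldsymbol{Y}^{\top}}$ the estimator is by construction a post-period difference in means, so its noise is $\epsilon_{i,T}-\epsilon_{\mathcal{M}(i),T}$, a single $\sigma_E^2$ rather than $\tfrac{T+1}{T}\sigma_E^2$, and the signal carries the post slope $\vec{\beta}_{\theta,T}$ in place of $\vec{\Delta}_{\theta}$. Reusing the same conditioning argument, now on the shared $(\boldsymbol{X},\boldsymbol{Y}^{\top})$, the per-unit residual covariance of $\boldsymbol{\theta}$ is its covariance after partialling out both $\boldsymbol{X}$ and the residualized pre-outcome signal $S=B_\theta\tilde{\boldsymbol{\theta}}+\boldsymbol{\epsilon}^{\mathrm{pre}}$, where $\tilde{\boldsymbol{\theta}}$ is the residual of the linear projection of $\boldsymbol{\theta}$ on $\boldsymbol{X}$. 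The Schur-complement (best-linear-predictor) formula gives $\Sigma_{\tilde{\theta}\tilde{\theta}}-\Sigma_{\tilde{\theta}\tilde{\theta}}B_\theta^{\top}(B_\theta\Sigma_{\tilde{\theta}\tilde{\theta}}B_\theta^{\top}+\Sigma_\epsilon)^{-1}B_\theta\Sigma_{\tilde{\theta}\tilde{\theta}}=(\mathbf{I}_{q\times q}-\mathbf{r}_{\theta\mid x})\Sigma_{\tilde{\theta}\tilde{\theta}}$, the multivariate reliability correction; this recovers the last display and, specialized to $q=T=1$, reproduces the scalar $r_{\theta\mid x}$ of Theorem~\ref{Thm 3.1}.

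The main obstacle I anticipate is the $(\boldsymbol{X},\boldsymbol{Y}^{\top})$ step, for two reasons. First, since the paper drops the normality assumption of \cite{ham2024benefits}, the ``conditional covariances'' above are really best-linear-predictor residual covariances, so I must invoke the joint CLT in \eqref{Eq: Joint asym normality} to argue that only these second-moment (projection) objects govern the first-order variance and that the Gaussian computation therefore delivers the correct leading term. Second, close-to-perfect rather than exact matching means the $\boldsymbol{X}$- and $\boldsymbol{Y}^{\top}$-coordinates of each matched pair agree only up to $\Delta_n,\Xi_n$; I must verify that these discrepancies enter the variance solely through $o_p(n^{-1})$ terms—propagated through the bounded slopes $B_\theta,\vec{\beta}_{\theta,T}$ and the finite fourth moments of Assumptions~\ref{Assumption: errors}--\ref{Assumption: variables}—so that replacing approximate matches by exact ones is valid to the reported order. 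The remaining bookkeeping, namely random $n_1$ and the vanishing of cross-pair covariances under no-replacement matching, is routine given the first-order machinery of Appendix~\ref{Appendix B.First-order Matching Error Approximations}.
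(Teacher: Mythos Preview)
Your proposal is correct and follows the same overall strategy as the paper: substitute the structural model, condition on the matched variables via the law of total variance/covariance, invoke the first-order sampling and matching-discrepancy approximations of Appendices~\ref{Appendix B.First-order Sampling Approximations}--\ref{Appendix B.First-order Matching Error Approximations}, and use the limiting conditional (best-linear-predictor) moments in place of exact Gaussian conditionals. For the first two displays your decomposition into signal and noise pieces is exactly what the paper means by ``vector analogs'' of the univariate proof.

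The one place your route differs from the paper is the third display. The paper works with the full joint block matrix $\begin{pmatrix}\Sigma_{XX}&\Sigma_{XY_T}\\ \Sigma_{Y_TX}&\Sigma_{Y_TY_T}\end{pmatrix}$, inverts it via the Schur complement, and then carries out several pages of algebra to show that the two key quadratic and bilinear forms collapse to $\Sigma_{\theta X}\Sigma_{XX}^{-1}\Sigma_{X\theta}+\mathbf{r}_{\theta|x}\Sigma_{\tilde{\theta}\tilde{\theta}}$ and $\Sigma_{\theta X}$ respectively. Your sequential partialling---first residualize $\boldsymbol{\theta}$ on $\boldsymbol{X}$ to get $\tilde{\boldsymbol{\theta}}$, then project $\tilde{\boldsymbol{\theta}}$ on $S=B_\theta\tilde{\boldsymbol{\theta}}+\boldsymbol{\epsilon}^{\mathrm{pre}}$---is the Frisch--Waugh--Lovell shortcut that reaches $(\mathbf{I}_{q\times q}-\mathbf{r}_{\theta|x})\Sigma_{\tilde{\theta}\tilde{\theta}}$ in one line. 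Both rest on the same Schur-complement identity, so they are equivalent; yours is more economical, while the paper's explicit block-matrix computation makes the appearance of $\mathbf{r}_{\theta|x}$ as a standalone object (needed later in Lemma~\ref{Lemma 4.1} and Proposition~\ref{Prop 4.1}) more transparent.
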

\noindent The proof is provided in Appendix \ref{Appendix C.1}. The variance results for the generalized estimators are analogous to the counterpart of the estimators presented in Theorem~\ref{theom:var_2x2} but replaced with vector representations. For example, the variance of the classic DiD estimator still scales proportionally to the breakage in parallel trends, where instead of $\Delta_{\theta}^2  \sigma_{\theta}^2 +  \Delta_{X}^2 \sigma_{x}^2$ we have the vectorized version $\vec{\Delta}_{\theta}^{\top} \Sigma_{\theta \theta}\vec{\Delta}_{\theta}  + \vec{\Delta}_{X}^{\top} \Sigma_{X X} \vec{\Delta}_{X}$. As such, the main insights from Lemma \ref{Lemma 3.1} and Lemma \ref{Lemma 3.2} carry over. One primary distinction, however, is that the reliability term ($r_{\theta \mid x}$) can no longer be represented as a scalar and instead has a matrix representation. Since the reliability captures the proportion of the variance of the outcome captured by $\mathbf{\theta}$, after controlling for $\mathbf{X}$, this quantity becomes a matrix due to the multidimensionality of $\mathbf{\theta}$. However, following Definition \ref{Def: reliabiltiy}, this matrix could still be interpreted as a multivariate analog of the population R-squared statistic if we were also able to have access to the latent variables (further discussed in Section~\ref{subsection:generalized_reliability}).

\subsection{Revisiting and Generalizing Existing Bias Results} \label{section: revisiting bias}
\noindent To study MSE, we first revisit the bias results and the corresponding main takeaways from \cite{ham2024benefits}. For any generalized DiD estimator (matched or unmatched) $\hat{\tau}$, we denote its bias relative to the ATT ($\tau$) as
$$
\Bias \left( \hat{\tau} \right) := \mathbb{E}\left[\hat{\tau}\right] - \tau, 
$$ 
where we still consider ``asymptotic bias'' as defined in the limiting expressions in Theorem~\ref{Thm: Consistency of DiD and Matching DiD Estimators}.

\begin{lemma}[\textbf{Bias of generalized DiD and generalized Matching DiD Estimators}] \label{Lemma 4.1}  Under the same conditions in Theorem \ref{Thm 4.1}, the biases of our estimators are given by:
\[
\begin{aligned}
\Bias \left( \hat{\tau}_{\text{gDiD}} \right) 
 &= \vec{\Delta}_{\theta}^{\top} \vec{\delta}_{\tilde{\theta}} + \left( \Sigma_{X X}^{-1} \Sigma_{X \theta} \vec{\Delta}_{\theta}   + \vec{\Delta}_{X} \right)^{\top} \vec{\delta}_{x} \\
\Bias  \left( \hat{\tau}_{\text{gDiD}}^{\boldsymbol{X}} \right) &= \vec{\Delta}_{\theta}^{\top} \vec{\delta}_{\tilde{\theta}}\\
\Bias \left( \hat{\tau}_{g\text{DiD}}^{\boldsymbol{X}, \boldsymbol{Y^T}} \right) & = \vec{\beta}_{\theta,T}^\top  \left[ \left(\mathbf{I}_{q \times q} - \mathbf{r}_{\theta|x} \right) \vec{\delta}_{\tilde{\theta}}  \right],
\end{aligned}
\]
where we denote $\vec{\delta}_{\tilde{\theta}} \triangleq \vec{\delta}_{\theta}  - \Sigma_{\theta X} \Sigma_{X X}^{-1} 
 \vec{\delta}_{x}$; and $\mathbf{r}_{\theta|x}$ is defined in Theorem \ref{Thm 4.1}. 
\end{lemma}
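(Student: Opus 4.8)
The plan is to compute each estimator's expectation directly from the linear structural model in Equation~\eqref{eq:LSEM} and subtract the target, exploiting that under the homogeneous constant treatment effect the estimand satisfies $\text{ATT}=\tau$. Throughout I would work at leading order, invoking Assumption~\ref{Assumption:perfect matching} so that the aggregate matching discrepancy $\Delta_n = o_p(n^{-1/2})$ is negligible for a first-moment calculation; this lets me treat the matched controls as though their matched covariates coincide with the treated units', i.e. $\boldsymbol{W}\mid M_i=1 \overset{d}{=} \boldsymbol{W}\mid Z_i=1$. No anticipation also means the observed pre-treatment outcomes are untreated potential outcomes for \emph{both} groups, so no $\tau$ leaks into $\bar{Y}_{i,0:(T-1)}$.

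For the unmatched estimator $\hat{\tau}_{\text{gDiD}}$ I would take conditional expectations of $Y_{i,T}-\bar{Y}_{i,0:(T-1)}$ within the treated and control groups. Using $\mathbb{E}[\epsilon_{i,t}]=0$, the group means $\vec{\mu}_{\theta,z},\vec{\mu}_{x,z}$, and the definitions of $\vec{\Delta}_\theta,\vec{\Delta}_X$ stated before the theorem, the treated-minus-control difference telescopes the intercepts away and leaves $\tau + \vec{\Delta}_\theta^{\top}\vec{\delta}_\theta + \vec{\Delta}_X^{\top}\vec{\delta}_x$. Subtracting $\tau$ gives the raw bias $\vec{\Delta}_\theta^{\top}\vec{\delta}_\theta + \vec{\Delta}_X^{\top}\vec{\delta}_x$; a one-line algebraic identity using $\vec{\delta}_{\tilde{\theta}}=\vec{\delta}_\theta - \Sigma_{\theta X}\Sigma_{XX}^{-1}\vec{\delta}_x$ together with the symmetries $\Sigma_{X\theta}=\Sigma_{\theta X}^{\top}$ and $\Sigma_{XX}=\Sigma_{XX}^{\top}$ rewrites it into the claimed decomposed form. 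For $\hat{\tau}_{\text{gDiD}}^{\boldsymbol{X}}$ the only change is that the control average is taken over the matched set: the matched controls inherit the treated $\boldsymbol{X}$-distribution, so $\mathbb{E}[\boldsymbol{X}_i\mid M_i=1]=\vec{\mu}_{x,1}$, while their latent mean is governed by the control group's conditional law of $\boldsymbol{\theta}$ given $\boldsymbol{X}$. Using the (asymptotically linear) conditional mean $\mathbb{E}[\boldsymbol{\theta}_i\mid \boldsymbol{X}_i, Z_i=0]=\vec{\mu}_{\theta,0}+\Sigma_{\theta X}\Sigma_{XX}^{-1}(\boldsymbol{X}_i-\vec{\mu}_{x,0})$ and averaging over the treated $\boldsymbol{X}$-distribution yields $\mathbb{E}[\boldsymbol{\theta}_i\mid M_i=1]=\vec{\mu}_{\theta,0}+\Sigma_{\theta X}\Sigma_{XX}^{-1}\vec{\delta}_x$. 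Plugging this in, the $\vec{\Delta}_X$ contributions cancel between the two groups and the remaining $\vec{\Delta}_\theta$ terms collapse to $\vec{\Delta}_\theta^{\top}\vec{\delta}_{\tilde{\theta}}$.

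The crux, and the step I expect to be the main obstacle, is $\hat{\tau}_{\text{gDiD}}^{\boldsymbol{X},\boldsymbol{Y}^{\top}}$, because the matched controls now additionally match the $T$ pre-treatment outcomes, which act as $T$ noisy linear measurements of $\boldsymbol{\theta}$. Residualizing the pre-treatment outcomes on $\boldsymbol{X}$ gives $\tilde{\boldsymbol{Y}}_i = B_\theta\boldsymbol{\theta}_i + \vec{\epsilon}_i$, where $\vec{\epsilon}_i=(\epsilon_{i,0},\dots,\epsilon_{i,T-1})^{\top}$ with $\text{Cov}(\vec{\epsilon}_i)=\Sigma_\epsilon$, and within controls $\boldsymbol{\theta}\mid\boldsymbol{X}$ has covariance $\Sigma_{\tilde{\theta}\tilde{\theta}}$. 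A best-linear-predictor (Gaussian) update then gives the matched control's latent mean as a posterior mean with gain $K=\Sigma_{\tilde{\theta}\tilde{\theta}}B_\theta^{\top}(B_\theta\Sigma_{\tilde{\theta}\tilde{\theta}}B_\theta^{\top}+\Sigma_\epsilon)^{-1}$, so that $\mathbb{E}[\boldsymbol{\theta}_i\mid M_i=1]=(\mathbf{I}_{q\times q}-\mathbf{r}_{\theta|x})(\vec{\mu}_{\theta,0}+\Sigma_{\theta X}\Sigma_{XX}^{-1}\vec{\delta}_x)+\mathbf{r}_{\theta|x}\vec{\mu}_{\theta,1}$ with $\mathbf{r}_{\theta|x}=KB_\theta$ exactly as in Theorem~\ref{Thm 4.1}. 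The bias is then $\vec{\beta}_{\theta,T}^{\top}(\vec{\mu}_{\theta,1}-\mathbb{E}[\boldsymbol{\theta}_i\mid M_i=1])$, and factoring out $(\mathbf{I}_{q\times q}-\mathbf{r}_{\theta|x})$ leaves $\vec{\beta}_{\theta,T}^{\top}(\mathbf{I}_{q\times q}-\mathbf{r}_{\theta|x})\vec{\delta}_{\tilde{\theta}}$. The delicate points are (i) that a matched control recovers only a shrunken version of the treated unit's latent value, since the matched control's own noise $\vec{\epsilon}_j$ differs from the treated unit's even when $\tilde{\boldsymbol{Y}}_j=\tilde{\boldsymbol{Y}}_i$ as observed values, so conditioning on the matched value produces the posterior mean above rather than exact recovery, and (ii) justifying the linear-conditional-expectation / best-linear-predictor step without normality, which I would anchor to the asymptotic CLT equivalence asserted in Section~\ref{section:framework} so that only first and second moments enter the leading-order bias.
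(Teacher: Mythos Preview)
Your proposal is correct, and for the first two biases it is effectively identical to the paper's route: both arrive at $\vec{\Delta}_\theta^{\top}\vec{\delta}_\theta + \vec{\Delta}_X^{\top}\vec{\delta}_x$ and $\vec{\Delta}_\theta^{\top}\vec{\delta}_{\tilde{\theta}}$ by direct expectation under the structural model (the paper imports these from \cite{ham2024benefits} and then rewrites, but the content is the same).

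For $\hat{\tau}_{\text{gDiD}}^{\boldsymbol{X},\boldsymbol{Y}^{\top}}$ you take a genuinely different, and somewhat cleaner, route. The paper starts from the joint conditional mean of $\boldsymbol{\theta}$ given $(\boldsymbol{X},\boldsymbol{Y}^{\mathbf{T}})$ in the control group, which produces the bracketed block-matrix expression
\[
\vec{\delta}_\theta - (\Sigma_{\theta X}\ \ \Sigma_{\theta Y_T})
\begin{pmatrix}\Sigma_{XX} & \Sigma_{XY_T}\\ \Sigma_{Y_TX} & \Sigma_{Y_TY_T}\end{pmatrix}^{-1}
\begin{pmatrix}\vec{\delta}_x\\ B_\theta\vec{\delta}_\theta+B_x\vec{\delta}_x\end{pmatrix},
\]
and then spends most of the proof applying the Schur-complement block-inversion formula and simplifying until $\mathbf{r}_{\theta|x}$ emerges. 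You instead condition sequentially: first on $\boldsymbol{X}$ (yielding prior mean $m(x)$ and residual covariance $\Sigma_{\tilde{\theta}\tilde{\theta}}$), then update with the residualized outcomes $\tilde{\boldsymbol{Y}}=B_\theta\boldsymbol{\theta}+\vec{\epsilon}$ via the best-linear-predictor gain $K=\Sigma_{\tilde{\theta}\tilde{\theta}}B_\theta^{\top}(B_\theta\Sigma_{\tilde{\theta}\tilde{\theta}}B_\theta^{\top}+\Sigma_\epsilon)^{-1}$. This makes $\mathbf{r}_{\theta|x}=KB_\theta$ appear immediately and bypasses the block-matrix algebra entirely; the paper's Schur-complement manipulations are, in effect, rediscovering this Kalman-update structure after the fact. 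Your organization also makes it transparent why only $\vec{\beta}_{\theta,T}$ (not $\vec{\Delta}_\theta$) survives and why the bias factors through $(\mathbf{I}_{q\times q}-\mathbf{r}_{\theta|x})$. Your caveat (ii) about relying on linear conditional expectations without normality is exactly the same leap the paper makes (it invokes the ``limiting conditional mean/variance'' via the asymptotic CLT equivalence in Section~\ref{section:framework}), so you are on equal footing there.
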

\noindent The proof is provided in Appendix \ref{Appendix C.2}. The first two bias results are the same as those derived in Theorem 5.1 of \cite{ham2024benefits}. However, our bias results still remain novel as we derive $\Bias \left( \hat{\tau}_{g\text{DiD}}^{\boldsymbol{X}, \boldsymbol{Y^T}} \right)$ with fewer assumptions. In particular, we do not assume unconditional parallel trends for pre-treatment outcomes ($\vec\beta_{\theta,t} = \vec\beta_{\theta,t'} \ \text{and} \ \vec{\beta}_{x,t} = \vec{\beta}_{x,t'}$ for all $t,t'$ such that $0\leq t,t' \leq T-1$) and deal with multi-dimensional $\boldsymbol{\theta}$, generalizing reliability through the matrix $\mathbf{r}_{\theta|x}$.

To summarize findings from \cite{ham2024benefits}: the authors find that under some regular sign conditions, the bias of $\hat{\tau}^X_{\text{DiD}}$ is always smaller than that of $\hat{\tau}_{\text{DiD}}$, thus they recommend one should always match on observed covariates (Section 6.1 of \cite{ham2024benefits}). However, whether one should additionally match on pre-treatment outcomes depends on a trade-off between the reliability term and the relative magnitude of the breakage in parallel trends from the time-varying coefficients (Section 6.2 of \cite{ham2024benefits}).

These bias takeaways, however, are opposite to our variance takeaways. Specifically, when considering variance, it is not always recommended to match on observed covariates $X$ as shown by Lemma \ref{Lemma 3.2}. Lemma \ref{Lemma 3.2} highlights the importance of carefully analyzing the trade-off between sample size and the extent of parallel trends violation when comparing matching on $X$ or not. When considering only bias, the sample size $n_1, n_0$ does not play an important role, hiding a crucial trade-off one incurs when matching. Furthermore, when considering variance, it is always recommended to match additionally on the pre-treatment outcome(s) after matching on $X$ as shown by Lemma \ref{Lemma 3.3}. This is in stark contrast to major results in \cite{ham2024benefits}, where the authors demonstrate a critical tradeoff when additionally matching on the pre-treatment outcome(s), i.e., it is sometimes but not always better to match on pre-treatment outcomes. Together, these results illustrate a bias-variance trade-off in selecting a uniformly better estimator in practice, underscoring the benefits of evaluating the performance of estimators through the mean squared errors (MSE), which balances both bias and variance, in the decision-making process.

We therefore state the following corollary to present MSE results. Since MSE is equal to Bias$^2$ + Variance, we show the final MSE expressions for completeness: 

\begin{corollary}[\textbf{MSE of generalized DiD and generalized Matching DiD Estimators}]  \label{Corollary 4.1}  Under the same conditions in Theorem \ref{Thm 4.1}, the mean squared errors of our estimators are given by:
\[
\begin{aligned}
 \MSE  \left( \hat{\tau}_{g\text{DiD}} \right)  &=  \Bias^2  \left( \hat{\tau}_{g\text{DiD}} \right) +  \Var  \left( \hat{\tau}_{g\text{DiD}} \right)    \\
 \MSE  \left( \hat{\tau}_{g\text{DiD}}^{\boldsymbol{X}} \right)  &=  \Bias^2  \left( \hat{\tau}_{g\text{DiD}}^{\boldsymbol{X}} \right) +  \Var  \left( \hat{\tau}_{g\text{DiD}}^{\boldsymbol{X}} \right)    \\
 \MSE  \left( \hat{\tau}_{g\text{DiD}}^{\boldsymbol{X}, \boldsymbol{Y^T}} \right)  &=  \Bias^2  \left( \hat{\tau}_{g\text{DiD}}^{\boldsymbol{X}, \boldsymbol{Y^T}} \right) +  \Var \left( \hat{\tau}_{g\text{DiD}}^{\boldsymbol{X}, \boldsymbol{Y^T}} \right)    
\end{aligned}
\]
where $\Var(\cdot)$ is defined in Theorem \ref{Thm 4.1} and $\Bias(\cdot)$ is defined in Lemma \ref{Lemma 4.1}.
\end{corollary}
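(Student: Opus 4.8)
The plan is to invoke the elementary bias–variance decomposition, which holds for any estimator possessing a finite second moment, and then simply substitute the two preceding results. For a generic estimator $\hat{\tau}$ of the scalar estimand $\tau$, I would write $\hat{\tau} - \tau = (\hat{\tau} - \mathbb{E}[\hat{\tau}]) + (\mathbb{E}[\hat{\tau}] - \tau)$, square both sides, and take expectations. The cross term $2\,\mathbb{E}[(\hat{\tau} - \mathbb{E}[\hat{\tau}])(\mathbb{E}[\hat{\tau}] - \tau)]$ vanishes because $\mathbb{E}[\hat{\tau} - \mathbb{E}[\hat{\tau}]] = 0$ while $\mathbb{E}[\hat{\tau}] - \tau$ is a deterministic constant, leaving
\[
\text{MSE}(\hat{\tau}) = \mathbb{E}\big[(\hat{\tau} - \tau)^2\big] = \text{Var}(\hat{\tau}) + \text{Bias}^2(\hat{\tau}).
\]
This single identity is the entire content of the corollary; everything else is bookkeeping.

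First I would verify that each of the three estimators admits a finite second moment, so that the decomposition is legitimate and no term is ill-defined. This follows directly from Assumption~\ref{Assumption: errors} and Assumption~\ref{Assumption: variables}, which guarantee finite fourth moments of $\epsilon_{i,t}$, $\boldsymbol{\theta}_i$, and $\boldsymbol{X}_i$; since each estimator is a finite linear combination of products of these random variables scaled by sample sizes, its second moment is finite. With finiteness secured, I would apply the decomposition separately to $\hat{\tau}_{\text{gDiD}}$, $\hat{\tau}_{\text{gDiD}}^{\boldsymbol{X}}$, and $\hat{\tau}_{\text{gDiD}}^{\boldsymbol{X},\boldsymbol{Y}^{\top}}$, substituting the variance expressions established in Theorem~\ref{Thm 4.1} and the bias expressions established in Lemma~\ref{Lemma 4.1}. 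This reproduces the three displayed formulas verbatim.

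There is no genuine analytic obstacle here: the corollary is immediate from the decomposition together with the two preceding results, and I would not re-derive either the variance or the bias. The only point requiring mild care is consistency of the order of approximation. Because the variance terms from Theorem~\ref{Thm 4.1} are first-order (the $o_p(n^{-1})$ remainder having been dropped) while the squared-bias terms are $O(1)$ fixed quantities, I would note that the stated MSE retains the dominant squared-bias contribution exactly and the variance contribution to first order, which is precisely the level of precision maintained throughout the paper. Since no cross-term or higher-order interaction between bias and variance survives the decomposition, the two pieces simply add, completing the proof.
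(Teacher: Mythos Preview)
Your proposal is correct and matches the paper's approach: the paper does not provide a separate proof of this corollary at all, introducing it only with the remark ``Since MSE is equal to Bias$^2$ + Variance, we show the final MSE expressions for completeness.'' Your treatment is in fact more careful than the paper's, since you explicitly check finite second moments and flag the approximation-order consistency, points the paper leaves implicit.
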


\subsection{Generalized Reliability}
\label{subsection:generalized_reliability}
\noindent Though we show in the canonical DiD setting in Lemma~\ref{Lemma 3.3} that additionally matching on pre-treatment outcomes always reduces variance relative to matching only on observed covariates, it is unclear how much reduction one could obtain from matching on the pre-treatment outcomes. Theorem~\ref{theom:var_2x2} shows that $\text{Var} \left( \hat{\tau}_{\text{DiD}^{X,Y_0}} \right)$ is reduced critically by the reliability term $r_{\theta|x}$. We explore this further in this section.


To build intuition, we first consider the case where the latent variable is univariate but under multiple pre-period time points $T > 1$. In this setting, the reliability matrix simplifies to a scalar, which admits a more transparent interpretation. The scalar form aligns with the structure developed in the canonical DiD setting, shown in Theorem \ref{Thm 3.1}. We formalize this finding in the following Lemma. 
\begin{lemma}[\textbf{Equivalence in multi-multi case with univariate latent variables}] \label{Lemma 4.2}
Assuming $\theta \in \mathbf{R}$, the matrix reliability term in Theorem \ref{Thm 4.1} could be reduced to a scalar:
$$r_{\theta|x} = \frac{T \bar{\beta}^2_{\theta, \text{pre}} {\sigma}^2_{\tilde \theta}}{T \bar{\beta}^2_{\theta, \text{pre}} {\sigma}^2_{\tilde \theta} + \sigma^2_E},
$$ 
where 
$$
{\sigma}^2_{\tilde \theta} = {\sigma}^2_{\theta} - \Sigma_{\theta X} \Sigma_{X X}^{-1} \Sigma_{X \theta}; \quad \bar{\beta}^2_{\theta, \text{pre}} := \frac{1}{T} \sum_{t=0}^{T-1} \beta_{\theta,t}^2; 
$$
\end{lemma}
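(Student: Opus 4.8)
The plan is to specialize the matrix expression for $\mathbf{r}_{\theta|x}$ given in Theorem~\ref{Thm 4.1} to the case $q = 1$ and exploit the resulting rank-one structure. When $\theta \in \mathbf{R}$, the conditional covariance $\Sigma_{\tilde{\theta}\tilde{\theta}}$ collapses to the scalar $\tilde{\sigma}^2_{\theta} = \sigma^2_\theta - \Sigma_{\theta X}\Sigma_{XX}^{-1}\Sigma_{X\theta}$, each slope $\vec{\beta}_{\theta,t}$ becomes a scalar $\beta_{\theta,t}$, and $B_\theta$ reduces to the column vector $b := (\beta_{\theta,0},\ldots,\beta_{\theta,T-1})^\top \in \mathbf{R}^T$. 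Writing $s := \tilde{\sigma}^2_\theta$, the reliability term becomes the scalar quadratic form
$$r_{\theta|x} = s\, b^\top\bigl(s\, bb^\top + \sigma_E^2 \mathbf{I}_T\bigr)^{-1}b.$$

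The key step is to invert the $T\times T$ matrix $\Omega := s\, bb^\top + \sigma_E^2\mathbf{I}_T$, which is a rank-one update of a scaled identity. I would apply the Sherman--Morrison formula to obtain
$$\Omega^{-1} = \frac{1}{\sigma_E^2}\mathbf{I}_T - \frac{s/\sigma_E^4}{1 + s\,b^\top b/\sigma_E^2}\, bb^\top.$$
Substituting this into the quadratic form and writing $B := b^\top b = \sum_{t=0}^{T-1}\beta_{\theta,t}^2 = T\bar{\beta}^2_{\theta,\text{pre}}$ reduces the entire expression to a function of the two scalars $B$ and $s$.

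The remainder is routine algebra: after collecting terms over the common denominator $\sigma_E^2(\sigma_E^2 + sB)$, the numerator simplifies because the $sB^2$ contributions cancel, leaving $r_{\theta|x} = sB/(\sigma_E^2 + sB)$. Re-expressing $s = \tilde{\sigma}^2_\theta$ and $B = T\bar{\beta}^2_{\theta,\text{pre}}$ gives exactly the claimed scalar reliability. I do not anticipate a genuine obstacle here, since the only non-mechanical ingredient is recognizing that the inner matrix is a rank-one perturbation of $\sigma_E^2\mathbf{I}_T$, after which the Sherman--Morrison identity makes the inversion and the ensuing cancellation transparent. As a sanity check one could equivalently use the trace identity $r_{\theta|x} = \operatorname{tr}(\Omega^{-1} s\,bb^\top) = T - \sigma_E^2\operatorname{tr}(\Omega^{-1})$ and evaluate $\operatorname{tr}(\Omega^{-1})$ via the same formula, arriving at the identical closed form.
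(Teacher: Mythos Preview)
Your proposal is correct and follows essentially the same route as the paper: both reduce to the scalar case, recognize the inner matrix as a rank-one perturbation of $\sigma_E^2\mathbf{I}_T$, apply the Sherman--Morrison formula, and simplify the resulting quadratic form in $b$ to obtain $sB/(\sigma_E^2+sB)$. The only cosmetic difference is that the paper writes $\sigma_\theta^2$ throughout and remarks that the argument works verbatim for $\tilde{\sigma}_\theta^2$, whereas you carry the residualized variance $s=\tilde{\sigma}_\theta^2$ explicitly from the start.
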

\noindent The proof is provided in Appendix \ref{Appendix C.3}. This is equivalent to the reliability term derived in Theorem 5.4 of \cite{ham2024benefits}, though their result requires an additional assumption of unconditional parallel trends for pre-treatment outcomes, i.e., $\beta_{\theta,t} = \beta_{\theta,t'} \ \text{and} \ \vec{\beta}_{x,t} = \vec{\beta}_{x,t'} \ \text{for all } t, t' \in \{0, \ldots, T-1\}.$

Lemma \ref{Lemma 4.2} highlights the benefits of the number of pre-treatment periods available. Specifically, if the average latent coefficients ($\bar{\beta}^2_{\theta, \text{pre}}$) does not shrink as $T \rightarrow \infty$, the reliability increases with more pre-treatment periods because it approaches 1 ($r_{\theta \mid x} < 1$ by Definition \ref{Def: reliabiltiy})  as $T \rightarrow \infty$.

Matching on multiple pre-treatment time points is closely related to synthetic control methods \cite{abadie2010synthetic}, in which one constructs a synthetic comparison unit as a weighted average of control units to closely replicate the observed characteristics, particularly pre-treatment outcomes, of a treated unit. Since the benefit of the reliability term also appears in the corresponding bias expression (See Lemma \ref{Lemma 4.1}), our findings are consistent with the findings in the synthetic control literature \cite{abadie2021using}. Specifically, the literature shows that a large number of pre-treatment periods can help improve pre-treatment fit and control the bias; and the synthetic control estimator is consistent with infinite pre-treatment periods ($T \rightarrow \infty$) under perfect pre-treatment fit, which we exactly recover in our reliability term. Specifically, our results extend to the variance results as Theorem~\ref{theom:var_2x2} shows $r_{\theta \mid x} \rightarrow 1$ as $T \rightarrow \infty$, making $\text{Var} \left( \hat{\tau}_{\text{DiD}^{X,Y_0}} \right) \rightarrow 0$ (ignoring the remaining irreducible error term). Furthermore, with our new results developed in Lemma \ref{Lemma 4.1} and Lemma \ref{Lemma 4.2}, it also follows that $\text{Bias} \left( \hat{\tau}_{\text{DiD}^{X,Y_0}} \right) \rightarrow 0$, making the mean squared errors asymptotically approach zero.

We showed the above holds exactly in the univariate $\theta$ case, where the reliability $r_{\theta \mid x}$ is written as a scalar. However, we can also generalize the above result to multivariate latent variables. To achieve this, we introduce matrix norms to formally state the following proposition: 
\begin{proposition}[\textbf{Convergence of The Reliability Matrix}] \label{Prop 4.1}
Let $\Sigma_{\theta \theta} \succ 0$ and suppose that $\frac{1}{T} \sum_{t=0}^{T-1} \vec{\beta}_{\theta,t} \vec{\beta}_{\theta,t}^\top \xrightarrow{\|\cdot\|_2} Q \succ 0$ as $T \rightarrow \infty$. Then as $T \rightarrow \infty$, we have that the reliability matrix converges to the identity matrix in spectral norm\footnote{Our convergence result also holds under alternative matrix norm such as Frobenius norm $(\|\cdot\|_{F})$.}, i.e. $\mathbf{r}_{\theta|x} \xrightarrow{\|\cdot\|_2}  \mathbf{I}_{q}$, where $\mathbf{r}_{\theta|x}$ is defined in Theorem \ref{Thm 4.1}. 
\end{proposition}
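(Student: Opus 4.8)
The plan is to reduce $\mathbf{r}_{\theta|x}$ to a transparent resolvent form via a push-through (Woodbury-type) identity, and then show that its complement $\mathbf{I}_q - \mathbf{r}_{\theta|x}$ vanishes in spectral norm because the accumulated pre-period coefficient information grows linearly in $T$. Write $\Sigma := \Sigma_{\tilde{\theta}\tilde{\theta}} \succ 0$ (positive definiteness of this Schur complement follows from $\Sigma_{\theta\theta}\succ 0$ together with nonsingularity of the joint $(\theta,X)$ covariance, so $\Sigma^{-1}$ exists) and observe that $B_\theta^{\top} B_\theta = \sum_{t=0}^{T-1}\vec{\beta}_{\theta,t}\vec{\beta}_{\theta,t}^{\top}$, so that the hypothesis reads exactly $\tfrac{1}{T}B_\theta^{\top}B_\theta \xrightarrow{\|\cdot\|_2} Q \succ 0$. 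Setting $G := \sigma_E^{-2} B_\theta^{\top} B_\theta$ and using $\Sigma_\epsilon = \sigma_E^2\,\mathbf{I}_T$ (which makes $B_\theta \Sigma B_\theta^{\top} + \sigma_E^2\mathbf{I}_T$ positive definite, hence the inverse in the definition legitimate), the first step establishes the identity $\Sigma B_\theta^{\top}(B_\theta \Sigma B_\theta^{\top} + \sigma_E^2\mathbf{I}_T)^{-1}B_\theta = (\Sigma^{-1}+G)^{-1}G$, verified by right-multiplying both sides by $(B_\theta\Sigma B_\theta^{\top}+\sigma_E^2\mathbf{I}_T)$ and collecting terms through the relation $(\Sigma^{-1}+G)\Sigma = \mathbf{I}_q + \sigma_E^{-2}B_\theta^{\top} B_\theta\Sigma$. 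Hence $\mathbf{r}_{\theta|x} = (\Sigma^{-1}+G)^{-1}G$, and subtracting from the identity yields the clean form $\mathbf{I}_q - \mathbf{r}_{\theta|x} = (\Sigma^{-1}+G)^{-1}\Sigma^{-1}$.

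With this form in hand, the second step is a routine operator-norm estimate. By submultiplicativity, $\|\mathbf{I}_q - \mathbf{r}_{\theta|x}\|_2 \le \|(\Sigma^{-1}+G)^{-1}\|_2\,\|\Sigma^{-1}\|_2$, and since $\Sigma^{-1}+G$ is symmetric positive definite we have $\|(\Sigma^{-1}+G)^{-1}\|_2 = 1/\lambda_{\min}(\Sigma^{-1}+G)$. Because $\Sigma^{-1}\succ 0$ and $G\succeq 0$, Weyl's inequality gives $\lambda_{\min}(\Sigma^{-1}+G)\ge \lambda_{\min}(G)$, so $\|(\Sigma^{-1}+G)^{-1}\|_2 \le 1/\lambda_{\min}(G)$. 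It therefore suffices to show $\lambda_{\min}(G)\to\infty$.

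The third step extracts the growth rate. Writing $G = (T/\sigma_E^2)\big(\tfrac{1}{T}B_\theta^{\top}B_\theta\big)$, convergence $\tfrac{1}{T} B_\theta^{\top} B_\theta \to Q$ in spectral norm implies, via the Weyl perturbation bound $|\lambda_{\min}(A)-\lambda_{\min}(B)|\le\|A-B\|_2$, that $\lambda_{\min}\big(\tfrac{1}{T}B_\theta^{\top}B_\theta\big)\to\lambda_{\min}(Q)>0$; hence for all large $T$ one has $\lambda_{\min}(G)\ge \tfrac{T\,\lambda_{\min}(Q)}{2\sigma_E^2}$. Combining the three steps gives $\|\mathbf{I}_q - \mathbf{r}_{\theta|x}\|_2 \le 2\sigma_E^2\,\|\Sigma^{-1}\|_2/(T\lambda_{\min}(Q)) \to 0$, i.e. $\mathbf{r}_{\theta|x}\xrightarrow{\|\cdot\|_2}\mathbf{I}_q$. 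The footnote claim for the Frobenius norm then follows immediately from finite-dimensional norm equivalence, since $\|\cdot\|_F \le \sqrt{q}\,\|\cdot\|_2$.

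I expect the main obstacle to lie in the algebraic reduction of the first step rather than in the analysis: one must invert a $T\times T$ matrix whose size grows with the very limit being taken, and the push-through identity is precisely what converts this into a fixed $q\times q$ resolvent whose behavior is governed by the diverging eigenvalues of $G$. A secondary technical point requiring care is confirming $\Sigma_{\tilde{\theta}\tilde{\theta}}\succ 0$ so that $\Sigma^{-1}$ is legitimate; should one only have $\Sigma_{\tilde{\theta}\tilde{\theta}}\succeq 0$, I would instead pass to coordinates given by $\Sigma_{\tilde{\theta}\tilde{\theta}}^{1/2}$ and argue on its range, but under the framework's positive-definiteness assumptions this complication does not arise.
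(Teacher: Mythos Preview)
Your proof is correct and follows essentially the same strategy as the paper's: both reduce the $T\times T$ inverse to a fixed $q\times q$ resolvent via a Woodbury/push-through identity and then exploit the linear growth of $B_\theta^\top B_\theta$ in $T$. The only cosmetic difference is that the paper first symmetrizes via $A = B_\theta \Sigma^{1/2}$ and invokes the continuous mapping theorem on $\sigma_E^2(A^\top A + \sigma_E^2 \mathbf{I}_q)^{-1}$, whereas you work directly with $(\Sigma^{-1}+G)^{-1}\Sigma^{-1}$ and bound its spectral norm through Weyl's inequality; the underlying mechanics are identical.
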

\noindent Proposition~\ref{Prop 4.1} assumes some regularity conditions such that the average matrix of latent coefficients does not degenerate in the limit and the covariance matrix of $\boldsymbol{\theta}$ is positive definite. The core of Proposition~\ref{Prop 4.1} is identical to that in Lemma~\ref{Lemma 4.2}: as $T \rightarrow \infty$, the reliability is ``maximized'' at 1. The matrix version of ``1'' is the identity that provides the highest reliability. The proof is provided in Appendix \ref{Appendix C.4}.

\section{Determining What to Match}
\label{section: Determining What to Match}
\noindent So far, we characterize theoretical results that show the bias and variance in terms of population-level parameters, e.g., $\beta_{\theta, t}, r_{\theta | x}$, etc. While we theoretically accommodate a multi-dimensional latent variable for general mathematical insights, in this section, we recommend using a single latent variable $\theta$ that can arbitrarily capture any degree of confounding in practice unless applied researchers have specific priors regarding the latent information. Doing so makes the reliability term more interpretable and if applied researchers are interested, it also allows them to assess the quality of pre-treatment outcomes by directly estimating the reliability (see Proposition \ref{Prop: Relative Absolute Bias Reduction Pair 2} for the estimator of the single individual reliability term).

In the following subsections, we first present estimation strategies (Section \ref{section: Estimation Strategies}) to quantify pairwise reductions in bias, variance, and MSE based on the theoretical results from Sections \ref{section: univariate results} and \ref{section: multivariate results}. We then discuss practical considerations for determining when and what variables to match on in Section \ref{section: Practical guidance}.

\subsection{Estimation Strategies} \label{section: Estimation Strategies}
\noindent We build upon the estimation strategies proposed by \cite{ham2024benefits}. The key idea underlying these estimation strategies is that, since the latent variables are unobserved, we need to residualize the effect of $X$ from the outcomes for each period to isolate the effect of $\theta$, i.e., we focus on
\begin{equation} \label{Eq: resdiualized outcome}
\tilde{Y}_{i,t} := Y_{i,t} - \hat{\vec{\beta}}_{x,t}^{\top} \boldsymbol{X}_i,   
\end{equation}
where $\hat{\vec{\beta}}_{x,t}$ is obtained by a linear regression of the outcomes at period $t$ onto the covariates within control group.  

Building on this idea, we propose the following estimation strategies. We start by revisiting the conditions and estimation strategies for bias.

\begin{proposition}[\textbf{Relative Absolute Bias Reduction Pair 1 ($\hat{\tau}_{\text{gDiD}}$ and $ \hat{\tau}_{\text{gDiD}}^{\boldsymbol{X}}$}) \textbf{- Conditions and Estimation Strategy for Matching on $\boldsymbol{X}$}]  \label{Prop: Relative Absolute Bias Reduction Pair 1}
Assume $\theta \in \mathbf{R}$ and $T>1$. 
Suppose the conditions of Corollary \ref{Corollary 4.1} hold and further suppose the following three sign conditions hold:
\begin{enumerate}
    \item $\operatorname{sign}\!\left(\vec{\Delta}_x^{\,\top}\,\vec{\delta}_x\right) 
= \operatorname{sign}\!\left(\Delta_\theta \,\delta_\theta\right)$
\item $\operatorname{sign}\!\left(\Delta_\theta \,\Sigma_{\theta X}\,\Sigma_{XX}^{-1}\,\vec{\delta}_x\right) 
= \operatorname{sign}\!\left(\Delta_\theta \,\delta_\theta\right)$
\item $\operatorname{sign}(\delta_\theta) 
= \operatorname{sign}\!\left(\delta_\theta - \Sigma_{\theta X}\,\Sigma_{XX}^{-1}\,\vec{\delta}_x\right).$
\end{enumerate}
Then matching on $\boldsymbol{X}$ will have less absolute bias than the classic unmatched DiD. Additionally,  
\begin{align*}
\widehat{\Delta}^{\text{absolute}}_{\tau_x}  &\xrightarrow{p} \left|\Bias  \left( \hat{\tau}_{\text{gDiD}} \right) \right| -  \left| \Bias    \left( \hat{\tau}_{\text{gDiD}}^{\boldsymbol{X}} \right)  \right| \ge 0, 
\end{align*}
where
\begin{align*}
\widehat{\Delta}^{\text{absolute}}_{\tau_x} &:=  \left| \left(\hat{\vec{\beta}}_{x,T} - \frac{\sum_{t=0}^{T-1} \hat{\vec{\beta}}_{x,t}}{T}\right)^{\top} \hat{\vec{\delta}}_x    \right|,
\end{align*}
with $\hat{\vec{\beta}}_{x,t}$ defined in Equation \eqref{Eq: resdiualized outcome} and $\hat{\vec{\delta}}_x$ as the plug-in estimator of ${\vec{\delta}}_x$ defined in Equation \eqref{eq: specific PT under LSEM}.
\end{proposition}
\noindent This is the same as Guideline 1 (Theorem 6.1) in \cite{ham2024benefits}, so we omit its proof. These sign conditions rule out the edge cases and formalize the claim that we discussed in Section \ref{section: revisiting bias} regarding when matching on $X$ generally removes bias from the observed covariates. The first sign condition states that the pre-existing biases in $\theta$ and $\boldsymbol{X}$ are not in opposite directions. The second condition guarantees that the imbalance in $\delta_\theta$ is decreased, rather than amplified, by the additional information about $\theta$ obtained through matching on a correlated $\boldsymbol{X}$. The third sign condition ensures that the additional bias reduction achieved by matching on a correlated $\boldsymbol{X}$ does not overcorrect the bias. We leave evaluating how plausible these conditions are to future empirical work. Importantly, these conditions are not necessary: there are many situations in which they fail, yet matching on observed covariates remains beneficial. We now turn to the bias estimation strategy when considering matching additionally on the pre-treatment outcome.

\begin{proposition}[\textbf{Relative Absolute Bias Reduction Pair 2 ($ \hat{\tau}_{\text{gDiD}}^{\boldsymbol{X}}$ and $\hat{\tau}_{g\text{DiD}}^{\boldsymbol{X}, \boldsymbol{Y^T}} $}) \textbf{- Estimation Strategy for Matching on $\boldsymbol{X} \& \boldsymbol{Y^T}$}]  \label{Prop: Relative Absolute Bias Reduction Pair 2}
Assume $\theta \in \mathbf{R}$ and $T>1$. Suppose the conditions of Corollary \ref{Corollary 4.1} hold and further suppose the stable effects of $\theta$ across two arbitrary  pre-treatment periods, i.e., ${\beta}_{\theta,s} =  {\beta}_{\theta,q}, s \neq q <T$. Then we have 
\begin{align*}
\widehat{\Delta}^{\text{absolute}}_{\tau_{x,xy}}  &\xrightarrow{p}  \left| \Bias    \left( \hat{\tau}_{\text{gDiD}}^{\boldsymbol{X}} \right)\right| - \left|\Bias    \left( \hat{\tau}_{g\text{DiD}}^{\boldsymbol{X}, \boldsymbol{Y^T}} \right) \right|,  
\end{align*}
where
\begin{align*}
\widehat{\Delta}^{\text{absolute}}_{\tau_{x,xy}}  &:= 
 \widehat{\Delta}_{\theta} \widehat{\delta}_{\tilde{\theta}}   -  \widehat{\beta}_{\theta,T} \widehat{\delta}_{\tilde{\theta}} (1-\hat{r}_{\theta|x} ) .
\end{align*}
Individual terms are defined as: 
\[
\hat{\Delta}_\theta = \sqrt{ \widehat{\Var}\left(\tilde{Y}_{i,T} - \bar{\tilde{Y}}_{i,0:(T-1)} \mid Z_i = 0\right) - \frac{T+1}{T} \hat{\sigma}_E^2 }, \quad \hat\delta_{\tilde{\theta}} =
\frac{
\left|\widehat{\mathbb{E}}\!\left[\bar{\tilde{Y}}_{i,0:(T-1)} \mid Z_i = 1\right]
-
\widehat{\mathbb{E}}\!\left[\bar{\tilde{Y}}_{i,0:(T-1)} \mid Z_i = 0\right]  
\right|}{
\hat{\bar \beta}_{\theta, \text{pre}}
},
\]
where $\tilde{Y}_{i,t}$ is the residualized outcome at period $t$ in Equation~\eqref{Eq: resdiualized outcome}; $\bar{\tilde{Y}}_{i,\cdot}$ denotes an average of these residualized outcomes; $\widehat{\mathbb{E}}$ and $\widehat{\Var}$ denote the plug-in mean and variance estimator; and
\begin{align*}
& \hat{\sigma}_E^2 = \frac{1}{2} \widehat{\Var} \left( \tilde{Y}_{i, t} - \tilde{Y}_{i, t'} \right), t \neq t' <T; \\
&\hat{\bar \beta}_{\theta, \text{pre}} = \frac{1}{T} \sum_{t=0}^{T-1} \hat{\beta}_{\theta, t}, \quad \hat{\beta}_{\theta, t} = \sqrt{ \widehat{\Var}(\tilde{Y}_{i,t} \mid Z_i = 0) - \hat{\sigma}_E^2 }, \quad t = 0, 1, \dots, T; 
\end{align*}
and 
\[
\hat{r}_{\theta|x} = \frac{T \hat{\bar \beta}_{\theta, \text{pre}}^2}{T \hat{\bar \beta}_{\theta, \text{pre}}^2 + \hat{\sigma}_E^2}
\quad \text{with} \quad
\hat{\bar \beta}_{\theta, \text{pre}}^2 = \frac{1}{T} \sum_{t=0}^{T-1} \hat{\beta}_{\theta, t}^2.
\]
\end{proposition}
\noindent The proof is provided in Appendix \ref{Appendix: Proof of Prop: Relative Absolute Bias Reduction Pair 2}. As shown in Lemma \ref{Lemma 4.1} and Lemma \ref{Lemma 4.2}, the key difference between ours and Guideline 2 (Theorem 6.2) in \cite{ham2024benefits} is that we don't need an additional assumption of unconditional parallel trends for pre-treatment outcomes to derive a closed-form ${r}_{\theta|x}$, though we still need  the stable effects of $\theta$ across two arbitrary pre-treatment periods to consistently estimate the residual error term ($\sigma^2_{E}$) so that the consistent estimation of the products can be achieved.  Different choices of time periods are feasible here, so we recommend that applied researchers select two periods that seem the most stable.

Next, we propose estimation strategies for results of relative variance reductions. Let ${\text{Var}_\text{core}}  \left( \cdot \right)$ denote the expression inside the brackets of variance components (the ones after upfront sample coefficients) in Theorem \ref{Thm 4.1}, for instance, 
${\text{Var}_\text{core}}  \left( \hat{\tau}_{\text{gDiD}}^{\boldsymbol{X}} \right) =   \frac{T+1}{T} \sigma_{E}^2 + \vec{\Delta}_{\theta}^{\top} \Sigma_{\tilde{\theta} \tilde{\theta}} \vec{\Delta}_{\theta}.$ Then we denote $\widehat{\text{Var}_\text{core}}  \left( \cdot \right)$ as the corresponding estimator. Due to technical nuisances, we do not formally state a consistency theorem for the actual variance (and thus actual MSE). For example, suppose we want a consistency result such as: $\widehat{\text{Var}}  \left( \hat{\tau}_{\text{gDiD}}^{\boldsymbol{X}} \right)= \left(n_1^{-1} + n_1^{-1} \right) \widehat{\text{Var}_\text{core}}  \left( \hat{\tau}_{\text{gDiD}}^{\boldsymbol{X}} \right) \xrightarrow{p} \left(n_1^{-1} + n_1^{-1} \right)  {\text{Var}_\text{core}} \left( \hat{\tau}_{\text{gDiD}}^{\boldsymbol{X}} \right)$. The asymptotics becomes cumbersome because the limit is taken with respect to $n_1$, which are directly in the expressions of the variance. Therefore, to bypass this issue, we present consistency results only on ${\text{Var}_\text{core}}(\cdot)$ terms. We formalize this in the following proposition:

\begin{proposition}[\textbf{Estimation Strategies for Relative Variance Reduction}] \label{Prop: Estimation Strategies for Relative Variance Reduction}
Suppose the conditions of Corollary \ref{Corollary 4.1} hold. Then we have 
$$
\widehat{\Var_{\text{core}}}  \left( \hat{\tau}_{\text{gDiD}} \right) \xrightarrow{p} \Var_{\text{core}}  \left( \hat{\tau}_{\text{gDiD}} \right), \ \widehat{\Var_{\text{core}}}  \left( \hat{\tau}_{\text{gDiD}}^{\boldsymbol{X}} \right) \xrightarrow{p} \Var_{\text{core}}    \left( \hat{\tau}_{\text{gDiD}}^{\boldsymbol{X}} \right), \ 
\widehat{\Var_{\text{core}}}  \left( \hat{\tau}_{g\text{DiD}}^{\boldsymbol{X}, \boldsymbol{Y^T}}\right) \xrightarrow{p} \Var_{\text{core}}    \left( \hat{\tau}_{g\text{DiD}}^{\boldsymbol{X}, \boldsymbol{Y^T}}\right),
$$
where
\begin{align*}
\widehat{\Var_{\text{core}}}  \left( \hat{\tau}_{\text{gDiD}} \right) &:=  \widehat{\Var} \left( \tilde{Y}_{i,T} - \frac{\sum_{t=0}^{T-1}\tilde{Y}_{i,t}}{T} \mid Z_i = 0 \right)  + 
\left(\hat{\vec{\beta}}_{x,T} - \frac{\sum_{t=0}^{T-1} \hat{\vec{\beta}}_{x,t}}{T}\right)^{\top} \hat{\Sigma}_{XX} \left(\hat{\vec{\beta}}_{x,T} - \frac{\sum_{t=0}^{T-1} \hat{\vec{\beta}}_{x,t}}{T}\right)  
 \\    
\widehat{\Var_{\text{core}}}  \left( \hat{\tau}_{\text{gDiD}}^{\boldsymbol{X}} \right) &:=
\widehat{\Var} \left( \tilde{Y}_{i,T} - \frac{\sum_{t=0}^{T-1}\tilde{Y}_{i,t}}{T} \mid Z_i = 0 \right)  \\
\widehat{\Var_{\text{core}}}   \left( \hat{\tau}_{g\text{DiD}}^{\boldsymbol{X}, \boldsymbol{Y^T}} \right)  &:= 
\widehat{\Var}\left( \tilde{Y}_{i, T} \mid Z_i = 0 \right) \\
&-\widehat{\Cov}\left( \tilde{Y}_{i, T}, \tilde{Y}_{i, 0:T-1} \mid Z_i = 0\right) \left(\widehat{\Var}\left( \tilde{Y}_{i, 0:T-1}\right)\right)^{-1} 
 \widehat{\Cov} \left( \tilde{Y}_{i, 0:T-1}, \tilde{Y}_{i, T} \mid Z_i = 0 \right),
\end{align*}
and $\tilde{Y}_{i,t}$ is the residualized outcome at period $t$  with $\hat{\vec{\beta}}_{x,t}$ to be the regression coefficients defined in Equation~\eqref{Eq: resdiualized outcome}; $\widehat{\Sigma}_{XX} := \frac{1}{n-1} \sum_{i=1}^{n} (\boldsymbol{X}_i -\bar{\boldsymbol{X}})(\boldsymbol{X}_i - \bar{\boldsymbol{X}})^{\top};$ $\widehat\Var$ and $\widehat\Cov$ denote the plug-in estimators for variance and covariance, respectively.

For practical usage of obtaining relative variance reductions, we recommend to simply multiply $\widehat{\Var}_{\text{core}}(\cdot) $ by their corresponding upfront sample coefficients ($1/n_1, 1/n_0$) in Theorem \ref{Thm 4.1} to retrieve back complete finite-sample variance estimates. In particular, with a slight abuse of notation, 
\begin{align*}
\widehat{\Gamma}_{x} \approx \Var  \left( \hat{\tau}_{\text{gDiD}} \right) - \Var    \left( \hat{\tau}_{\text{gDiD}}^{\boldsymbol{X}} \right); \quad \widehat{\Gamma}_{x,xy}  \approx \Var   \left( \hat{\tau}_{\text{gDiD}}^{\boldsymbol{X}} \right)  - \Var   \left( \hat{\tau}_{g\text{DiD}}^{\boldsymbol{X}, \boldsymbol{Y^T}}\right),
\end{align*}
where $\approx$ denotes that the expression is an approximate estimate, and 
\begin{align*}
\widehat{\Gamma}_{x} &:= \left(\frac{1}{n_1} + \frac{1}{n_0} \right) \widehat{\Var_{\text{core}}}  \left( \hat{\tau}_{\text{gDiD}} \right)  -  \left(\frac{1}{n_1} + \frac{1}{n_1} \right)  \widehat{\Var_{\text{core}}}  \left( \hat{\tau}_{\text{gDiD}}^{\boldsymbol{X}} \right) \\
\widehat{\Gamma}_{x,xy} &:= \left(\frac{1}{n_1} + \frac{1}{n_1} \right)  \widehat{\Var_{\text{core}}}  \left( \hat{\tau}_{\text{gDiD}}^{\boldsymbol{X}} \right) - \left(\frac{1}{n_1} + \frac{1}{n_1} \right)  \widehat{\Var_{\text{core}}}  \left( \hat{\tau}_{\text{gDiD}}^{\boldsymbol{X}, \boldsymbol{Y^T}} \right) 
\end{align*}
\end{proposition}
\noindent The proof is provided in Appendix \ref{Appendix: Proof of Estimation Strategies for Relative Variance Reduction}.

For notational simplicity, denote $\mathcal{S} := \left\{ \hat{\tau}_{\mathrm{gDiD}}, \ \hat{\tau}_{\mathrm{gDiD}}^{\boldsymbol{X}}, \ \hat{\tau}_{\mathrm{gDiD}}^{\boldsymbol{X}, \boldsymbol{Y}^{\mathbf{T}}} \right\}.$ To obtain the results for relative reductions in MSE, we need to be able to consistently estimate relative reductions in bias square, i.e., $\text{Bias}(i)^2 - \text{Bias}(j)^2$ for estimators $i, j \in \mathcal{S}$. Unfortunately being able to consistently estimate $\text{Bias}(i) - \text{Bias}(j)$ as shown in Proposition~\ref{Prop: Relative Absolute Bias Reduction Pair 1} -~\ref{Prop: Relative Absolute Bias Reduction Pair 2} does not imply we can estimate the squares of those due to the non-linear transformation. Formally, we have the following impossibility results: 
\begin{lemma}[\textbf{Impossibility Results}] \label{Lemma: Impossibility Results} 
Let $\widehat{\Bias}(\cdot)$ denote a bias estimator. Without imposing additional assumptions beyond Corollary \ref{Corollary 4.1}, for any estimator $i \in \mathcal{S}$, there exists no consistent estimator of its bias, i.e., for any choice of $\widehat{\Bias}(i)$,
\[
\widehat{\Bias}(i) \xrightarrow{p} \Bias(i)
\]
does not hold, where $\Bias(i)$ is the asymptotic bias of estimator $i$ defined in Lemma \ref{Lemma 4.1}. 

Moreover, there does not exist a consistent estimator for the relative reduction in bias square and thus mean squared error (MSE) between any pair of estimators in $\mathcal{S}$.
\end{lemma}

\noindent The proof is provided in Appendix \ref{Appendix: Impossibility Results}. The main idea behind the failure of existence of such consistent bias (and thus bias square and MSE) estimator is that we can neither identify the ATT nor consistently estimate it. Therefore, we need to impose additional sign conditions to provide meaningful ``bounds'' to acquire relative reductions of bias square and MSE. Formally, we have 
\begin{lemma}[\textbf{Bounds on Bias Square under Sign Conditions}] \label{Lemma: Bounds on Bias Square under Sign Conditions}
Under the same notations of Lemma \ref{Lemma: Impossibility Results}, denote $\widehat{\text{Bias}}^2 (\cdot) := \left(\widehat{\Bias}  \left( \cdot \right) \right)^2.$ For any $i,j \in \mathcal{S}$ with $i \neq j$, suppose that 
\[
\widehat{\Bias}^2(i) - \widehat{\Bias}^2(j)
\;\xrightarrow{p}\;
\Delta^{\Bias^2}_{ij}.
\]
Then,
\begin{align*}
\text{if} \quad 
&\operatorname{sign}\left[\left( \Bias(i) - \Bias(j) \right)\cdot \tau \right] \ge 0,
\\
&\quad \Delta^{\Bias^2}_{ij} \;\ge\; \Bias^2(i) - \Bias^2(j),
\\[6pt]
\text{if} \quad 
&\operatorname{sign}\left[\left( \Bias(i) - \Bias(j) \right)\cdot \tau \right] < 0,
\\
&\quad \Delta^{\Bias^2}_{ij} < \Bias^2(i) - \Bias^2(j).
\end{align*}
\end{lemma}
\noindent The proof is provided in Appendix \ref{Appendix: Proof of Lemma: Bounds on Bias Square under Sign Conditions}. We acknowledge that the sign conditions can't hold simultaneously, i.e., we could only get one side of the bounds (either the upper bound or the lower bound) for the relative bias square reductions. However, this still remains informative in some cases. For any $i,j \in \mathcal{S}$ with $i \neq j$, denote $\widehat{\Bias}^2(i) - \widehat{\Bias}^2(j) := \widehat \Delta^{\Bias^2}_{ij}$ and $\Bias^2(i) - \Bias^2(j) := \Delta^{\text{Actual} \Bias^2}_{ij}$.  Then we have the following two informative scenarios laid out in Figure \ref{Figure: informative scenario 1} and \ref{Figure: informative scenario 2}. The interpretation goes as follows: for Figure \ref{Figure: informative scenario 1},  under the sign assumption that $ \operatorname{sign}\left[\left( \Bias(i) - \Bias(j) \right)\cdot \tau \right] \ge 0$, from Lemma \ref{Lemma: Bounds on Bias Square under Sign Conditions}, we know we could only consistently estimate an upper bound of the actual bias square; therefore if $\widehat \Delta^{\Bias^2}_{ij} < 0$, then we could be certain that $\Delta^{\text{Actual} \Bias^2}_{ij} < 0$, which then implies that $i$ has smaller bias square than the counterpart of $j$. Similar reasoning applies to Figure \ref{Figure: informative scenario 2}.

\begin{figure}[ht]
\centering  
\begin{tikzpicture}[scale=1.0]
 
  \draw (-1,0) -- (8,0);
  
  \draw (0,0.1) -- (0,-0.1);
  \node[above] at (0,0.1) {\textcolor{red}{$\Delta^{\text{Actual} \Bias^2}_{ij}$}};

  \draw (2,0.1) -- (2,-0.1);
  \node[above] at (2,0.1) {\textcolor{magenta}{$\widehat \Delta^{\Bias^2}_{ij}$}};
  
  \draw (3.5,0.1) -- (3.5,-0.1);
  \node[below] at (3.5,-0.1) {0};
  
\node at (1,-0.6) {}; 
\end{tikzpicture} 
\caption{Informative Scenario 1: $\operatorname{sign}\left[\left( \Bias(i) - \Bias(j) \right)\cdot \tau \right] \ge 0$, where $i,j \in \mathcal{S}$ with $i \neq j$}
\label{Figure: informative scenario 1}
\end{figure}

\begin{figure}[ht]
\centering  
\begin{tikzpicture}[scale=1.0]
  \node at (0,1) {} ;

  \draw (-1,0) -- (8,0);
  
  \draw (3.5,0.1) -- (3.5,-0.1);
  \node[below] at (3.5,-0.1) {0};

  \draw (4.5,0.1) -- (4.5,-0.1);
  \node[above] at (4.5,0.1) {\small \textcolor{cyan}{$\widehat \Delta^{\Bias^2}_{ij}$}};

  \draw (6,0.1) -- (6,-0.1);
  \node[above] at (6,0.1) {\textcolor{red}{$\Delta^{\text{Actual} \Bias^2}_{ij}$}};

  \node at (5.25,-0.6) {} ; 
\end{tikzpicture}
\caption{Informative Scenario 2: $\operatorname{sign}\left[\left( \Bias(i) - \Bias(j) \right)\cdot \tau \right] < 0$, where $i,j \in \mathcal{S}$ with $i \neq j$}
\label{Figure: informative scenario 2}
\end{figure}

We acknowledge the sign conditions are not testable in practice. However, if the researcher has a prior guess on the sign of the treatment effect $\tau$, then there exist data-driven strategies to check this condition. 
First, the researcher can still estimate the relative bias reduction by taking the pairwise difference of bias estimator in Proposition \ref{Prop: Estimation Strategies for Relative MSE Reduction under Sign Conditions}, or directly using 
Proposition \ref{Prop: Estimation Strategies for Relative Bias Reduction} in Appendix \ref{Appendix: Proofs for Relative (Non-Absolute) Bias Reduction}. Consequently, researchers can estimate $\operatorname{sign}\left[\left( \Bias(i) - \Bias(j) \right)  \right]$. Unfortunately, what we want is $\operatorname{sign}\left[\left( \Bias(i) - \Bias(j) \right)\cdot \tau \right] $. Therefore, knowing the sign $\tau$ is a sufficient condition for the final estimate. Although this still requires some prior knowledge, researchers often have strong domain knowledge on at least the sign of the treatment effect for many applications.

Now given these sign conditions, we are ready to present estimation strategies for results of relative MSE reductions.

\begin{proposition}[\textbf{Estimation Strategies for Relative MSE Reduction under Sign Conditions}] \label{Prop: Estimation Strategies for Relative MSE Reduction under Sign Conditions}
Suppose the conditions of Corollary \ref{Corollary 4.1} hold, along with symbols defined same as in Proposition \ref{Prop: Estimation Strategies for Relative Variance Reduction}, Lemma \ref{Lemma: Impossibility Results} and Lemma \ref{Lemma: Bounds on Bias Square under Sign Conditions}. For any $i,j \in \mathcal{S}$ with $i \neq j$, without loss of generality, further assume one of the sign conditions hold: $\operatorname{sign}\left[\left( \Bias(i) - \Bias(j) \right)\cdot \tau \right] \ge 0$. Then with a slight abuse of notation, we have 
\begin{align*}
\widehat{\kappa}_{x} & \rightarrow \kappa^{\MSE}_{x} \ \geq\  \MSE\left( \hat{\tau}_{\text{gDiD}} \right) - \MSE   \left( \hat{\tau}_{\text{gDiD}}^{\boldsymbol{X}} \right) \\
\widehat{\kappa}_{x,xy} & \rightarrow  \kappa^{\MSE}_{x,xy} \ \ge \ \MSE   \left( \hat{\tau}_{\text{gDiD}}^{\boldsymbol{X}} \right)  - \MSE  \left( \hat{\tau}_{g\text{DiD}}^{\boldsymbol{X}, \boldsymbol{Y^T}}\right),
\end{align*}
where the estimates are given by:
\begin{align*}
\widehat{\kappa}_{x}  &:= \left[ \left(\widehat{\Bias}  \left( \hat{\tau}_{\text{gDiD}} \right) \right)^2  - \left(\widehat{\Bias}   \left( \hat{\tau}_{\text{gDiD}}^{\boldsymbol{X}} \right) \right)^2   \right] + \widehat{\Gamma}_{x}  \\
\widehat{\kappa}_{x,xy} &:= \left[   \left(\widehat{\Bias}   \left( \hat{\tau}_{\text{gDiD}}^{\boldsymbol{X}} \right) \right)^2 - \left(\widehat{\Bias}   \left( \hat{\tau}_{g\text{DiD}}^{\boldsymbol{X}, \boldsymbol{Y^T}} \right) \right)^2 \right] +  \widehat{\Gamma}_{x,xy},
\end{align*}
with $\hat\Gamma_x, \hat\Gamma_{x, xy}$ as estimators of relative variance reductions defined in  Proposition~\ref{Prop: Estimation Strategies for Relative Variance Reduction}; and one set of potential bias estimators\footnote{We remark again that these bias estimators are not consistent to the corresponding bias in the limit. We therefore need to put sign conditions and pairwise MSE reductions are inequality based.} is:
\begin{align*}
\widehat{\Bias}  \left( \hat{\tau}_{\text{gDiD}} \right) &:=
\widehat{\mathbb{E}}\left[ \tilde{Y}_{i, T} -\bar{\tilde{Y}}_{i, 0:(T-1)} \mid Z_i = 1 \right]
- \widehat{\mathbb{E}}\left[ \tilde{Y}_{i, T} -\bar{\tilde{Y}}_{i, 0:(T-1)} \mid Z_i = 0 \right]  +  \left(\hat{\vec{\beta}}_{x,T} - \frac{\sum_{t=0}^{T-1} \hat{\vec{\beta}}_{x,t}}{T}\right)^{\top} \hat{\vec{\delta}}_x \\
\widehat{\Bias}   \left( \hat{\tau}_{\text{gDiD}}^{\boldsymbol{X}} \right) &:=
\widehat{\mathbb{E}}\left[ \tilde{Y}_{i, T} -\bar{\tilde{Y}}_{i, 0:(T-1)} \mid Z_i = 1 \right]
- \widehat{\mathbb{E}}\left[ \tilde{Y}_{i, T} -\bar{\tilde{Y}}_{i, 0:(T-1)} \mid Z_i = 0 \right] \\
\widehat{\Bias}   \left( \hat{\tau}_{g\text{DiD}}^{\boldsymbol{X}, \boldsymbol{Y^T}} \right)  &:=
\widehat{\mathbb{E}}\left[ \tilde{Y}_{i, T}  \mid Z_i = 1 \right]
- \widehat{\mathbb{E}}\left[ \tilde{Y}_{i, T} \mid Z_i = 0 \right] \\
&-   \widehat{\Cov}\left( \tilde{Y}_{i, T}, \tilde{Y}_{i, 0:T-1} \mid Z_i = 0 \right) \left(\widehat{\Var}\left( \tilde{Y}_{i, 0:T-1}\right)\right)^{-1} \left(\widehat{\mathbb{E}}\left[ \tilde{Y}_{i, 0:T-1}  \mid Z_i = 1 \right]
- \widehat{\mathbb{E}}\left[ \tilde{Y}_{i, 0:T-1} \mid Z_i = 0 \right] \right).
\end{align*}
\end{proposition}
\noindent The proof is omitted as it immediately follows from other Lemmas and Propositions. For practical usage of $\widehat{\kappa}_{x}$ and $\widehat{\kappa}_{x,xy}$, we need to determine if $\kappa^{\MSE}_{x} $, ${\kappa}^{\MSE}_{x,xy}$ serve as an upper bound or a lower bound of the corresponding relative MSE reductions. For clarity, we present the theorem when $\operatorname{sign}\left[\left( \Bias(i) - \Bias(j) \right)\cdot \tau \right] \ge 0$ holds. However, if $\operatorname{sign}\left[\left( \Bias(i) - \Bias(j) \right)\cdot \tau \right] < 0$ alternatively held, then the first two lines would be replaced with: 
$$\kappa^{\MSE}_{x} <  \MSE\left( \hat{\tau}_{\text{gDiD}} \right) - \MSE   \left( \hat{\tau}_{\text{gDiD}}^{\boldsymbol{X}} \right), \kappa^{\MSE}_{x,xy} \ < \ \MSE   \left( \hat{\tau}_{\text{gDiD}}^{\boldsymbol{X}} \right)  - \MSE  \left( \hat{\tau}_{g\text{DiD}}^{\boldsymbol{X}, \boldsymbol{Y^T}}\right)$$
to flip the inequality. Therefore, we recommend applied researchers to assume a certain sign condition as described in Lemma \ref{Lemma: Bounds on Bias Square under Sign Conditions} and proceed with reasoning corresponding to Figure \ref{Figure: informative scenario 1} or Figure \ref{Figure: informative scenario 2} to make ultimate decisions.

\subsection{Practical Guidance} \label{section: Practical guidance}
\noindent Now using our model as a working approximation, and balancing the bias-variance trade-off across estimators, we propose the following step-by-step guideline with respect to when and what to match for applied researchers. \\
\textbf{Step 1: } First, if applied researchers have strong domain knowledge and reasonable belief that the parallel trends assumption holds, then they should use the classic unmatched DiD estimator because it identifies ATT while retaining efficiency among the three estimators. Otherwise, proceed to the next steps, considering the potential usage of a matching-based DiD estimator. \\
\textbf{Step 2: } Second,  if applied researchers weigh bias and variance equally, in light of the bias-variance trade-off, then we recommend that applied researchers to invoke the usage of Proposition \ref{Prop: Estimation Strategies for Relative MSE Reduction under Sign Conditions} to have estimates of relative MSE reductions. If it is hard to assume the sign of the treatment effect $\tau$ (thus inferring the sign condition from Lemma \ref{Lemma: Bounds on Bias Square under Sign Conditions}) or if the obtained results with the bounds are not informative, or the estimated MSEs of any pairwise comparisons are of a similar magnitude, then proceed with the next steps to have a further case-by-case diagnosis. \\
\textbf{Step 3: } Third, depending on the specific context and the available sample sizes, we suggest placing emphasis on different considerations. We showcase two examples as implementation criteria below. 
\begin{enumerate}
    \item For applied researchers working with moderate to large sample sizes, we recommend placing greater emphasis on the bias component when choosing an estimator. In such settings, variance is less volatile and tends to play a smaller role, and minimizing bias is more crucial to avoid potential sign reversals in the estimated treatment effects.
\begin{example}[\textbf{Matching or not based on bias criteria}] \label{Example 1}
     If an applied researcher aims to select an estimator based on (absolute) bias considerations, then use Proposition \ref{Prop: Relative Absolute Bias Reduction Pair 1} and Proposition \ref{Prop: Relative Absolute Bias Reduction Pair 2}  to quantify the pairwise (absolute) bias reduction. In particular, it may suffice to only compare if $\widehat{\Delta}^{\text{absolute}}_{\tau_{x,xy}} > 0$ because one is guaranteed to have a bias reduction for matching on observed covariates over classic unmatched DiD under sign conditions in Proposition \ref{Prop: Relative Absolute Bias Reduction Pair 1}.
\end{example}
    \item For applied researchers working with small sample sizes, we recommend placing greater emphasis on the variance component because variance decreases with sample sizes. This is particularly important for cases where limited data may lead to poor matches and variance will be more volatile, while our bias results may be fairly conservative.  
        \begin{example}[\textbf{Matching or not based on variance criteria}] \label{Example 2}
If an applied researcher aims to select an estimator based on variance considerations, then use Proposition \ref{Prop: Estimation Strategies for Relative Variance Reduction} to quantify the pairwise variance reduction. In particular, it may suffice to only compare if $\widehat{\Gamma}_{x} > 0 $ because one is guaranteed to have a variance reduction of matching on both pre-treatment outcomes and observed covariates over matching only observed covariates (See Lemma \ref{Lemma 3.3}).     
\end{example}
\end{enumerate}
Though we present a step-by-step guideline, it is meant to serve as a general template rather than a strict protocol, i.e., we are not claiming that MSE is the most important metric (nor are we claiming MSE is not the most important) for different scenarios. Our estimation strategies in Section \ref{section: Estimation Strategies} are designed to be flexible, allowing applied researchers to separately estimate relative bias, relative variance and relative MSE (under sign conditions) reductions, and then choose, based on their own goals, which objective to prioritize.

\section{Application}  \label{section: Application}
\noindent In this section, we apply the estimation strategies and step-by-step guideline outlined in Section \ref{section: Determining What to Match} to an empirical setting in operations to provide novel managerially robust causal conclusions. Specifically, we revisit the dataset from \cite{wang2022monetary}, where the authors study how the introduction of monetary incentives (paying content creators directly by making users pay for consuming their content) on the knowledge-sharing platform, Zhihu\footnote{Zhihu is a popular online Q\&A community in China; as of December 2020, it had over 370 million registered users, 44 million questions, and 260 million answers, where the numbers are obtained from \href{https://www.36kr.com/p/1025344651233288}{36Kr} (accessed September 2025).},  influences non-rewarded knowledge activities. Understanding this problem is of both empirical and managerial importance: Consumers benefit from both paid and unpaid knowledge, while content producers are naturally drawn to paid activities, and at the same time, sustaining contributions to unpaid activities is essential for the long-term sustainability of digital platforms. However, since the introduction of monetary incentives can generate both positive and negative spillover effects on users’ contribution to non-paid activities, the net impact on unpaid contributions remains unclear. 

In May 2016, Zhihu launched Zhihu Live, a service that allowed users to monetize their expertise through interactive live talks. \cite{wang2022monetary} examine the spillover effects of this service, which was not randomized, on hosts’ unpaid activity, in terms of the quantity of contributed answers, and find evidence of positive spillovers without negative impacts on quality of answers. Their results suggest that monetary incentives can strengthen platform sustainability by encouraging continued user engagement, with reputation building as a major plausible mechanism. In particular, they find that the spillover effect is stronger among live
hosts with lower prior reputation as higher reputation content creators could translate reputation into more
revenue from holding the live talks via both higher entrance fees and more listeners. Consequently, lower reputation hosts tend to invest more effort in non-paid activities to build their reputation, enabling them to later cash more from holding live talks. These findings remain robust against alternative explanations such as explicit promotion or reciprocity.

\subsection{Empirical Setting}
\noindent To reach the aforementioned conclusions, the authors use a M-DiD strategy, given that the treatment (whether an individual is a live host) was self-selected (non-randomized). To address potential self-selection bias, the authors construct matched sets of live hosts and non-hosts with comparable user characteristics. They then estimate the treatment effect using DiD (TWFE regression) on the matched sample. Their primary outcome of interest is the log of the number of answers contributed by a user in a given month, and  they consider two covariates\footnote{In some analyses they consider the third covariate: user tenure, namely number of months since
the user's first answer contribution. }: (1) the log of the average number of characters per answer by a user in a given month, (2) the log of the average number of vote-ups per answer by a user in a given month. We follow the estimation strategies and guidelines in Section \ref{section: Determining What to Match} using the same dataset the authors used with a few adjustments. 

First, although the application involves a staged design, with multiple cutoff dates to accommodate heterogeneous exposure of live hosts to the Zhihu Live feature, the authors, in Appendix E, pool all samples to estimate an overall effect.  For presentational clarity, we adopt this pooled specification as our primary replication strategy. Following their set-up, we also take the main cutoff date May 2016 as the intervention date $T=0$, and define the pre-treatment window as the 12 months preceding the intervention\footnote{Following their approach, we also restrict the analysis to users who provided at least one answer before this point.}. We further aggregate all post-treatment periods into a single post-treatment period\footnote{This application has multiple (monthly-based) post-treatment periods until May 2018, but the authors averaged all those periods to one period to get an estimate of ATT.}. Second, for time varying covariates, we take the average of these covariates over the relevant periods and treat them as time invariant covariates.

\subsection{Results}
\noindent As context, the authors reported that the estimated treatment effect is $\hat{\tau} = 0.102$ with standard error of 0.029 only after matching on  both covariates $\mathbf{X}$ and the pre-treatment outcome (see more details in Appendix E in \cite{wang2022monetary}), i.e., giving monetary incentives to live hosts statistically significantly increases user engagement and platform activity. The authors didn't give formal justifications on why such matching is performed. Given this, we apply our estimation strategies and proposed guidelines to examine whether the authors’ decision to match on both covariates and the pre-treatment outcome is justified, or would it be preferable to match solely on covariates or to not match at all?  

\begin{table}[h]
\centering
\renewcommand{\arraystretch}{1}
\resizebox{1\linewidth}{!}{  
\begin{tabular}{lcc}
\toprule
 & No Match and Match on $\mathbf{X}$ &  \quad  Match on $\mathbf{X}$ and Match on $\mathbf{X}$ and $\mathbf{Y^T}$ \\
\midrule
Estimated |Bias| Reduction & $\widehat{\Delta}^{\text{absolute}}_{\tau_x} = 0.02509 $ & $\widehat{\Delta}^{\text{absolute}}_{\tau_{x,xy}}  = 0.02593$ \\
\addlinespace
Estimated Variance Reduction & $\widehat{\Gamma}_{x} =  -0.00027 $ & $\widehat{\Gamma}_{x,xy} = 0.00041 $ \\
\addlinespace
Sign of Relative Bias Reduction & Negative & Negative \\
\addlinespace
Estimated MSE Reduction & Uninformative (Figure \ref{Figure: informative scenario 2} scenario): $\widehat \Delta^{\Bias^2}_{x} = -0.0018< 0$ & Uninformative (Figure \ref{Figure: informative scenario 2} scenario): $\widehat \Delta^{\Bias^2}_{x,xy} = -0.0028< 0$ \\
\addlinespace
Suggested Match Decision & Match on $\mathbf{X}$ (|Bias| Criteria; Variance Cost Negligible) &  Match on $\mathbf{X}$ and $\mathbf{Y^T}$ (|Bias| and Variance Criteria) \\
\bottomrule
\end{tabular}
}
\caption{Comparison of Matching Strategies (Relative Reductions): The total sample size is 3,841, of which 1,296 units are treated.}  
\label{tab:application}
\end{table}

Table \ref{tab:application} reports the following findings. First, the first row of Table~\ref{tab:application} shows that matching on observed covariates $\mathbf{X}$ yields a sizable reduction in (absolute) bias by about $\widehat{\Delta}^{\text{absolute}}_{\tau_{x}}  = 0.02509$, corresponding to a reduction of about 24.6\%; and matching on both observed covariates and the pre-treatment outcome(s) reduces bias further by $\widehat{\Delta}^{\text{absolute}}_{\tau_{x,xy}}  = 0.02593$, or about a 25.4\% decrease: to apply Proposition \ref{Prop: Relative Absolute Bias Reduction Pair 2} for estimating $\widehat{\Delta}^{\text{absolute}}_{\tau_{x,xy}}$, we assume $\beta_{\theta,-1} = \beta_{\theta,-2}$ with $T=0$ denoting the intervention period in this application, but the results are robust (of similar magnitudes of reduction) across different specifications of the two pre-intervention periods. These magnitudes are non-negligible and underscore the importance and benefits of matching prior to a DiD analysis. Moreover, the pattern aligns with our theoretical results in Section~\ref{section: revisiting bias} and is consistent with the finding in \cite{ham2024benefits}. Second, the second row further illustrates these benefits. In particular, it shows that the variance cost from using a smaller matched sample is essentially negligible ($\widehat{\Gamma}_{x} = -0.00027$). 
Even though the treated sample only comprises about one third of the total sample, the efficiency gains from incorporating additional control observations are not sufficient to outweigh the benefits of matching. Taken together with the first-row results, this pattern alludes substantial violations of the parallel trends assumption, under which even the sample size tradeoff becomes somewhat irrelevant. We also observe a slightly smaller variance reduction when matching on both covariates and pre-treatment outcomes relative to matching on covariates alone, consistent with our theoretical result in Lemma~\ref{Lemma 3.3}. 

Unfortunately, in this empirical study, we are unable to obtain informative results on the estimated MSE reductions by Proposition \ref{Prop: Estimation Strategies for Relative MSE Reduction under Sign Conditions}. Assuming that the treatment effect is positive (as the original authors had this prior guess and also reported $\hat{\tau} =0.102$), we estimate the relative bias for each pair, which turns out to be negative in both cases. Consequently, we fall into the setting illustrated in Figure \ref{Figure: informative scenario 2}, where only a lower bound can be estimated. However, the resulting lower bounds for both pairs are below zero, preventing us from drawing definitive conclusions for either comparison because the sign of actual reduction may then be both positive or negative. Nevertheless, given the magnitude of reductions shown in first two rows of Table \ref{tab:application}, we believe that the findings on the estimated relative absolute bias reduction and estimated variance reduction already provide sufficiently strong evidence supporting the authors’ decision to match on both the covariates $\mathbf{X}$ and the pre-treatment outcome.

\section{Discussion}
\label{section: Discussion}
\noindent We investigate the variance and mean squared error of matching prior to difference-in-differences (DiD) analysis. Our findings suggest that matching on observed covariates may reduce variance relative to classic unmatched DiD, and further matching on pre-treatment outcomes always reduces variance compared to matching on observed covariates alone DiD estimator. These results contrast with previous findings in the bias results, underscoring the importance of considering MSE as an additional metric when selecting an optimal estimator. We propose new estimation strategies and more comprehensive practical guidelines for determining when matching is recommended. As an illustration, we apply them to the Zhihu Live dataset from \cite{wang2022monetary} to demonstrate the trade-offs and further show that their decision of matching on both covariates and pre-treatment outcomes is warranted.

However, our study has several limitations. Firstly, our model assumes a linear structural equation model with group-invariant covariates and time-varying coefficients, limiting our insights into scenarios with time-varying covariates. Also, our guideline also relies on these assumptions such as the linear structural equation model. We leave the extension of our results to non-linear models to future work. Despite the aforementioned limitations, we believe our framing loses little generality compared to the time-varying contexts. For example, 
in the simple pre-post setting, let \((\theta_{i,t}, X_{i,t})\) evolve according to an AR(1) process, i.e., $X_{i,t+1} = \rho_X X_{i,t} + \kappa^X_{i,t},$ with $|\rho_X| \leq 1$, where \(\kappa^X_{i,t}\) is an independent Gaussian noise error, and \(\theta_{i,t+1}\) is defined similarly. Then our model in Equation \eqref{eq:LSEM} can be reparameterized to incorporate these time-varying covariates as follows:
\[
Y_{i,0}(0) = \beta_{0,0} + \beta_{\theta,0} \theta_{i,0} + \beta_{x,0} X_{i,0} + \varepsilon_{i,0},
\]
\[
Y_{i,1}(0) = \beta_{0,1} + \beta_{\theta,1} \theta_{i,1} + \beta_{x,1} X_{i,1} + \varepsilon_{i,1} = \beta_{0,1} + \tilde{\beta}_{\theta,1} \theta_{i,0} + \tilde{\beta}_{x,1} X_{i,0} + \tilde{\varepsilon}_{i,1},
\]
where $\tilde{\beta}_{x,1} = \rho_X \beta_{x,1}$, and $\tilde{\beta}_{\theta,1}$ is defined similarly, and $\tilde{\varepsilon}_{i,1} = \varepsilon_{i,1} + \beta_{x,1} \kappa^X_{i,1} + \beta_{\theta,1} \kappa^\theta_{i,1}$. This suggests that Equation \eqref{eq:LSEM} is mathematically equivalent to the more complicated time-varying confounders.

Secondly, in our setting, we do not consider the case where researchers may not want to match on all available covariates after balance checks and it thus remains an open and practically relevant question on whether one should consider discarding certain matched covariates if the quality of the match is poor. Lastly, our framework focuses on settings with multiple pre-treatment time points. We believe it is interesting to explore scenarios where certain time periods are particularly influential so that one may perform matching on certain pre-treatment periods rather than all, as well as to extend the analysis to include multiple post-treatment periods, which would allow for a dynamic regime of treatment effects.

\bibliography{ref}

\appendix

\clearpage

\section*{Appendix}
\addcontentsline{toc}{section}{Appendix}

\section{Proof of Theorem \ref{Thm: Consistency of DiD and Matching DiD Estimators}}\label{Appendix A}
\subsection{Consistency of the Classic Unmatched DiD Estimator} \label{Appendix A.1}
\begin{proof} 
We show that $\hat{\tau}_{\text{DiD}} \xrightarrow{p} \mathbb{E}\left[\hat{\tau}_{\text{DiD}}\right]$. This is a standard diff-in-means type of consistency proof with the only difference that our potential outcomes $Y_{i,t}(z)$ are now dependent on $Z_i$ for values $z = 0,1$. Therefore, it suffices to show a sub-case among four and the rest follows. In particular, we have 
\begin{align*}
\frac{1}{n_1} \sum_{i=1}^{n} Y_{i,T} Z_i &=  \frac{n}{n_1} \cdot \frac{1}{n} \sum_{i=1}^{n} Y_{i,T} Z_i   \\
& \xrightarrow{p} \frac{\mathbb{E}(Y_{i,T} Z_i)}{\mathbb{P} (Z_i=1)}  \\
&=  \frac{\mathbb{E}(Y_{i,T} Z_i \mid Z_i =1) \mathbb{P} (Z_i=1) + \mathbb{E}(Y_{i,T} Z_i \mid Z_i =0) \mathbb{P} (Z_i=0) }{\mathbb{P} (Z_i=1)} \\
&= \mathbb{E} (Y_{i,T} \mid Z_i = 1),
\end{align*}  
where the second line follows by the weak law of large numbers (WLLN) and Slutsky's theorem with Assumption \ref{Assumption:boundedness}.
\end{proof}

\subsection{Consistency of Matching DiD Estimators} \label{Appendix A.2}
\begin{proof}
It suffices to show $\hat{\tau}_{\text{DiD}}^{\boldsymbol{X},Y_{0}}   \xrightarrow{p} \mathbb{E}\left[\hat{\tau}_{\text{DiD}}^{\boldsymbol{X},Y_0}\right]$ as the estimator depends only on post-treatment differences, reducing the problem to a cross-sectional formulation with time index $T$. The result for  $\hat{\tau}_{\text{DiD}}^{\boldsymbol{X}}  \xrightarrow{p} \mathbb{E}\left[\hat{\tau}^{\boldsymbol{X}}_{\text{DiD}} \right]$  follows similarly by establishing consistency on the pre-treatment period $T-1$.

For $z \in \{0,1\}$, define $\mu_z(\boldsymbol{w}) := \E[Y_{T} \mid \boldsymbol{W} = \boldsymbol{w}, Z = z]$, where in the case of $\hat{\tau}_{\text{DiD}}^{\boldsymbol{X},Y_{0}}$, $\boldsymbol{W} = (\boldsymbol{X}, Y_{T-1})$. Noticing that $\frac{1}{n_1} \sum_{i=1}^{n} Z_i \mathbb{E}\left[\hat{\tau}_{\text{DiD}}^{\boldsymbol{X},Y_0}\right] = \mathbb{E}\left[\hat{\tau}_{\text{DiD}}^{\boldsymbol{X},Y_0}\right]$, we could rewrite $\hat{\tau}_{\text{DiD}}^{\boldsymbol{X},Y_{0}}  - \mathbb{E}\left[\hat{\tau}_{\text{DiD}}^{\boldsymbol{X},Y_0}\right] = D_n + R_n$, \text{ where}
\[
D_n = \frac{1}{n_1} \sum_{i=1}^{n} \left(\mu_1(\boldsymbol{W}_i) - \mu_0(\boldsymbol{W}_i) - \mathbb{E}\left[\hat{\tau}_{\text{DiD}}^{\boldsymbol{X},Y_0}\right] \right) Z_i
+ \frac{1}{n_1} \sum_{i=1}^{n} \left( \left(Y_{i,T} - \mu_1(\boldsymbol{W}_i) \right) - \frac{1}{M} \sum_{j \in \mathcal{J}_M^{\boldsymbol{X},Y_{0}}(i)} \left(Y_{j,T} - \mu_0(\boldsymbol{W}_j) \right) \right)  Z_i
\]
\[
\text{and}
\]
\[
R_n = \frac{1}{n_1} \sum_{i=1}^{n} \left( \mu_0(\boldsymbol{W}_i) - \frac{1}{M} \sum_{j \in \mathcal{J}_M^{\boldsymbol{X},Y_{0}}(i)} \mu_0(\boldsymbol{W}_j) \right) Z_i.
\] 
The term $R_n$ is the conditional bias of matching estimators, vanishing in the limit under the regularity conditions such as ones introduced in Section 4 in \cite{abadie2012martingale} (compatible with our matching procedure $\mathcal{M}$ under Definition \ref{Matching: near-perfect one-to-one matching without replacement}). The rest of the proof is analogous to \cite{abadie2012martingale}'s Section 3 for showing $D_n$ is a martingale array with respect to a certain filtration and applying the associated WLLN to conclude $D_n \xrightarrow{p} 0$.
\end{proof}

\section{Proofs of Theoretical Results in Section \ref{section: univariate results}}\label{Appendix B}

\subsection{Proof of Theorem \ref{Thm 3.1}} \label{Appendix B.1}
\noindent Before presenting main variance results in \ref{Appendix B.Main results}, we present the first-order approximation used throughout in which we need to account for the randomness issue in the denominator of $n_1$ when evaluating estimators at finite samples.

\subsubsection{First-order Sampling Approximations} \label{Appendix B.First-order Sampling Approximations}
\noindent We first present first-order variance approximations and explain how we omit the $o_p(n^{-1})$ remainder term. For example, we state that 
$$\text{Var}\left(\frac{1}{n_1} \sum_{i=1}^{n} Y_{i,T} Z_i\right)  := \frac{\text{Var}[Y_{i,T}(1) \mid Z_i =1]}{n_1},$$ 
rather than $$\text{Var}\left(\frac{1}{n_1}  \sum_{i=1}^{n} Y_{i,T} Z_i\right)  = \frac{\text{Var}[Y_{i,T}(1) \mid Z_i =1]}{n\pi} + o_p\left(n^{-1}\right),$$
where the plug-in error is derived by denoting 
$\text{Var}[Y_{i,T}(1) \mid Z_i =1] \triangleq \sigma_1^2 $ and noticing 
\[
\begin{aligned}
\frac{\sigma_1^2}{n \pi} - \frac{\sigma_1^2}{n p}
= \frac{\sigma_1^2}{n} \left( \frac{1}{\pi} - \frac{1}{p} \right)
= \frac{\sigma_1^2}{n} \cdot \frac{p - \pi}{p \pi}  = \frac{\sigma_1^2}{n} \cdot O_p(n^{-1/2}) \cdot O_p(1) = O_p(n^{-3/2}),
\end{aligned}
\]
since $\pi -p =  O_p(n^{-1/2})$ by central limit theorem (CLT) for sample proportions, and $\frac{1}{p\pi} = O_p(1)$ because we assume that $\pi,p$ stays bounded away from zero; and we have $n^{-3/2} = o(n^{-1})$. 

We then present the first-order approximation of bias. For any classic unmatched or matching DiD estimators $\hat{\tau}_n$ introduced in Section \ref{section:framework:estimators}, by Appendix \ref{Appendix A}, we have $\hat{\tau}_n \xrightarrow{p} \mathbb{E}[\hat{\tau}]$. By i.i.d. property in Assumption \ref{Assumption: RDGP}, and boundedness in Assumption \ref{Assumption: errors} and \ref{Assumption: variables}, the uniform integrability conditions hold so that we also have $\hat{\tau}_n \xrightarrow{L^1} \mathbb{E}[\hat{\tau}]
\;\; \Longleftrightarrow \;\;
\mathbb{E}[|\hat{\tau}_n - \mathbb{E}[\hat{\tau}]|] \to 0
\;\; \Longrightarrow \;\;
\mathbb{E}[\hat{\tau}_n] \to \mathbb{E}[\hat{\tau}].$ Then our finite sample bias could be rewritten as:
$$
\text{Bias}_n\left( \hat{\tau} \right) := \underbrace{\mathbb{E}\left[\hat{\tau}_n \right] - \tau}_{\text{finite-sample bias}} = \underbrace{\mathbb{E}\left[\hat{\tau} \right] - \tau}_{\text{limiting bias}} +  \underbrace{\mathbb{E}\left[\hat{\tau}_n \right] - \mathbb{E}\left[\hat{\tau} \right]}_{r_n=o(1)}.
$$ 
We thus only present the limiting bias as our bias result and omit the $o(1)$ remainder term. 

\clearpage

\subsubsection{Proofs of Preliminary Lemmas} \label{Appendix: Proofs of Preliminary Lemmas}
\begin{proof}[Proof of Lemma \ref{Lemma: Asymptotic Variance of DiD Estimator}]
We first remark that our linear equation structural model in Equation~\eqref{eq:LSEM} implies the no anticipation assumption. The proof then follows by standard derivations of CLT results for DiM (e.g., see Theorem 1.2 in \cite{wager2020stats}) but with two periods, i.e., a canonical 2-period DiD CLT. We thus omit the details and the only difference between the canonical results and ours is the additional conditioning required to account for group dependence arising from latent confounders. In particular, let 
\[
\Delta Y(d) := Y_{T}(d) - Y_{T-1}(d), \quad d \in \{0,1\}.
\]
Then 
\[
\operatorname{Avar}(\hat{\tau}_{\text{DiD}}) = \frac{\Var(\Delta Y(1) \mid Z =1)}{p} + \frac{\Var(\Delta Y(0) \mid Z =0)}{1-p},
\]
and expanding the terms leads to the desired result.    
\end{proof}

\begin{lemma}[Asymptotic Variance of Matching DiD Estimator(s)] \label{Lemma: Asymptotic Variance of Mathing DiD Estimator(s)}
Suppose the matching procedure $\mathcal{M}$ satisfies Definition \ref{Matching: near-perfect one-to-one matching without replacement}, and some regularity conditions (such as in Theorem 1 of \cite{abadie2012martingale}). Then we have
$\sqrt{n_1} (\hat{\tau}_{\text{DiD}}^{\boldsymbol{X},Y_0} - \mathbb{E}\left[\hat{\tau}_{\text{DiD}}^{\boldsymbol{X},Y_0}\right]) \xrightarrow{d} \mathcal{N}(0,\operatorname{Avar}(\hat{\tau}_{\text{DiD}}^{\boldsymbol{X},Y_0})),$
where $\mathbb{E}\left[\hat{\tau}_{\text{DiD}}^{\boldsymbol{X},Y_0}\right]$ is defined in Theorem \ref{Thm: Consistency of DiD and Matching DiD Estimators}, and with slight abuse of notation,
\begin{align*}
\operatorname{Avar}(\hat{\tau}_{\text{DiD}}^{\boldsymbol{X},Y_0}) &= \operatorname{Avar}\left(\frac{1}{n_1}\sum_{i=1}^{n}  Y_{i,T} Z_i \right) + \operatorname{Avar}\left(\frac{1}{n_1}\sum_{i=1}^{n}  \left( \frac{1}{M} \sum_{j \in \mathcal{J}_M^{\boldsymbol{X},Y_{0}}(i)} Y_{j,T} \right)Z_i\right)  \\
&- 2\operatorname{Acov}\left(\frac{1}{n_1}\sum_{i=1}^{n}  Y_{i,T} Z_i, \quad \frac{1}{n_1}\sum_{i=1}^{n}  \left( \frac{1}{M} \sum_{j \in \mathcal{J}_M^{\boldsymbol{X},Y_{0}}(i)} Y_{j,T} \right)Z_i \right).
\end{align*}    
\end{lemma}

\begin{proof}
For $z \in \{0,1\}$, define $\mu_z(\boldsymbol{w}) := \E[Y_{T} \mid \boldsymbol{W} = \boldsymbol{w}, Z = z]$, 
and $\sigma_z^2(\boldsymbol{w}) = \mathrm{Var}(Y_{T} \mid \boldsymbol{W} = \boldsymbol{w}, Z = z)$, where in the case of $\hat{\tau}_{\text{DiD}}^{\boldsymbol{X},Y_{0}}$, $\boldsymbol{W} = (\boldsymbol{X}, Y_{T-1})$. Then following Theorem 1 of \cite{abadie2012martingale}, we have 
$\operatorname{Avar}(\hat{\tau}_{\text{DiD}}^{\boldsymbol{X},Y_0}) = \E\left[\left(\mu_1(\boldsymbol{W}) - \mu_0(\boldsymbol{W}) - \mathbb{E}\left[\hat{\tau}_{\text{DiD}}^{\boldsymbol{X},Y_0}\right]\right)^2 \mid Z = 1\right] 
+ \E\left[\sigma_1^2(\boldsymbol{W}) + \sigma_0^2(\boldsymbol{W}) \,\middle|\, Z = 1\right]$ because with one-to-one match, we have $M=1$. We then could rewrite 
\begin{align} \label{Eq: Avar of Matching Estimator}
\operatorname{Avar}(\hat{\tau}_{\text{DiD}}^{\boldsymbol{X},Y_0}) 
&= \Var[\mu_1(\boldsymbol{W}) - \mu_0(\boldsymbol{W}) \mid Z =1] +\E\left[\sigma_1^2(\boldsymbol{W}) + \sigma_0^2(\boldsymbol{W}) \,\middle|\, Z = 1\right].
\end{align}
This follows because
\begin{align*}
&\Var\left[\mu_1(\boldsymbol{W}) - \mu_0(\boldsymbol{W}) \mid Z =1\right]  = \Var\left[\mu_1(\boldsymbol{W}) - \mu_0(\boldsymbol{W}) - \mathbb{E}\left[\hat{\tau}_{\text{DiD}}^{\boldsymbol{X},Y_0}\right] \mid Z =1 \right]    \\
&= \E\left[\left(\mu_1(\boldsymbol{W}) - \mu_0(\boldsymbol{W}) - \mathbb{E}\left[\hat{\tau}_{\text{DiD}}^{\boldsymbol{X},Y_0}\right]\right)^2 \mid Z = 1\right] - \left(\E\left[\mu_1(\boldsymbol{W}) - \mu_0(\boldsymbol{W}) - \mathbb{E}\left[\hat{\tau}_{\text{DiD}}^{\boldsymbol{X},Y_0}\right] \mid Z = 1\right]\right)^2,
\end{align*}
in which the second term is zero because of law of total expectation.

Inspecting Equation~\eqref{Eq: Avar of Matching Estimator}, this is exactly an application of law of total variance and law of total covariance to three terms in the limit. In particular, rearranging it leads to 
\begin{align*}
\operatorname{Avar}(\hat{\tau}_{\text{DiD}}^{\boldsymbol{X},Y_0})  &= \underbrace{\left(\Var[\mu_1(\boldsymbol{W}) \mid Z =1] + \E[\sigma_1^2(\boldsymbol{W}) \mid Z = 1]\right)}_{\operatorname{Avar} \left( \cdot \right)} + \underbrace{\left(\Var[\mu_0(\boldsymbol{W}) \mid Z =1] + \E[\sigma_0^2(\boldsymbol{W}) \mid Z = 1]\right)}_{\operatorname{Avar} \left( \cdot \right)} \\
& - 2\underbrace{\mathrm{Cov}[\mu_1(\boldsymbol{W}), \mu_0(\boldsymbol{W}) \mid Z =1]}_{\operatorname{Acov} \left( \cdot, \cdot \right)}.
\end{align*}
\end{proof}

\clearpage

\subsubsection{Reparametrization of Lipschitz Continuous Conditional Moments in Remark \ref{Remark: Structures on Lipschitz continuous functions}} \label{Appendix: Reparametrization of Lipschitz continuous moments}
\noindent We start the section by first stating a useful Lemma with respect to the properties of conditional normality. The proof is omitted as it is a well-known result. 
\begin{lemma} \label{Lemma: Conditional Normality Formula}
Suppose random variables $(X,Y)$ jointly follow a Gaussian distribution: 
\[
\begin{pmatrix}
X \\
Y
\end{pmatrix}
\sim \mathcal{N}\!\left(
\begin{pmatrix}
\mu_1 \\
\mu_2
\end{pmatrix},
\begin{pmatrix}
V_{11} & V_{12} \\
V_{21} & V_{22}
\end{pmatrix}
\right),
\]
where \(X \in \mathbb{R}^p\), \(Y \in \mathbb{R}^q\), and assume that \(V_{11}\) is invertible.

Then the conditional distribution of \(Y\) given \(X = x\) is
\[
Y \mid X = x \sim \mathcal{N}\!\left(
\mu_2 + V_{21} V_{11}^{-1} (x - \mu_1), \;
V_{22} - V_{21} V_{11}^{-1} V_{12}
\right).
\]
In particular,
\[
\mathbb{E}[Y \mid X]
=
\mu_2 + V_{21} V_{11}^{-1} (X - \mu_1),
\]
and
\[
\operatorname{Var}(Y \mid X)
=
V_{22} - V_{21} V_{11}^{-1} V_{12}.
\]
\end{lemma}
\noindent For the univariate case ($X_i, \theta_i \in \mathbf{R}$), \cite{ham2024benefits} assumes normality DGP such that 
\begin{align}  \label{Eq: Joint asym normality}
\begin{pmatrix}
\theta_i   \\
X_i  
\end{pmatrix} \mid Z_i = z \sim \mathcal{N}\left(
\begin{pmatrix}
\mu_{\theta, z} \\
\mu_{x, z}
\end{pmatrix},
\begin{pmatrix}
\sigma_{\theta}^2 & \rho \sigma_{\theta} \sigma_{x} \\
\rho \sigma_{\theta} \sigma_{x} & \sigma_{x}^2
\end{pmatrix}
\right), \quad z = 0, 1.    
\end{align}
\noindent Now we are ready to prove the Reparametrization in Remark \ref{Remark: Structures on Lipschitz continuous functions}.

\begin{proof}[Proof of the Remark \ref{Remark: Structures on Lipschitz continuous functions}]
It suffices to prove for $z =0$ case. Applying Lemma \ref{Lemma: Conditional Normality Formula} to Equation~\eqref{Eq: Joint asym normality} with univariate notions directly yields forms of $\mathbb{E}\left[\theta_i \mid X_i=x, Z_i=0 \right]$ and $ \Var\left( \theta_i \mid X_i=x, Z_i=0 \right)$. For conditional moments further conditioning on the pre-treatment outcome, we have 
\[
\begin{pmatrix}
\theta_i \\
X_i \\
Y_{i,0}  
\end{pmatrix}
\Big| Z_i = 0 \sim \mathcal{N}
\left(
\begin{pmatrix}
\mu_{\theta,0} \\
\mu_{x,0} \\
\beta_{\theta,0} \mu_{\theta,0} + \beta_{x,0} \mu_{x,0}
\end{pmatrix}, 
\Sigma_0
\right),
\]
where 
\[
\Sigma_0 = 
\begin{pmatrix}
\sigma_{\theta}^2 & \sigma_{\theta} \sigma_{x} \rho & \beta_{\theta,0} \sigma_{\theta}^2 + \beta_{x,0} \sigma_{\theta} \sigma_{x} \rho \\
\sigma_{\theta} \sigma_{x} \rho & \sigma_{x}^2 & \beta_{x,0} \sigma_{x}^2 + \beta_{\theta,0} \sigma_{\theta} \sigma_{x} \rho \\
\beta_{\theta,0} \sigma_{\theta}^2 + \beta_{x,0} \sigma_{\theta} \sigma_{x} \rho & \beta_{x,0} \sigma_{x}^2 + \beta_{\theta,0} \sigma_{\theta} \sigma_{x} \rho & (\beta_{\theta,0})^2 \sigma_{\theta}^2 + (\beta_{x,0})^2 \sigma_{x}^2 + \sigma_{E}^2 + 2 \beta_{\theta,0} \beta_{x,0} \sigma_{\theta} \sigma_{x} \rho
\end{pmatrix}
\]
Then applying Lemma \ref{Lemma: Conditional Normality Formula} carefully with the block matrix $(X_i,Y_{i,0})$,
\[
\begin{aligned}
&\mathbb{E}\left[\theta_i \mid X_i=x, Y_{i,0} = y, Z_i=0 \right] \\
&= \mu_{\theta,0}
+ \left(
\sigma_{\theta} \sigma_{x} \rho
\quad
\beta_{\theta,0}\sigma_\theta^2
+ \beta_{x,0}\sigma_\theta\sigma_x\rho
\right) \\
& 
\begin{pmatrix}
\sigma_{x}^2 & \beta_{x,0} \sigma_{x}^2 + \beta_{\theta,0} \sigma_{\theta} \sigma_{x} \rho \\
\beta_{x,0} \sigma_{x}^2 + \beta_{\theta,0} \sigma_{\theta} \sigma_{x} \rho & (\beta_{\theta,0})^2 \sigma_{\theta}^2 + (\beta_{x,0})^2 \sigma_{x}^2 + \sigma_{E}^2 + 2 \beta_{\theta,0} \beta_{x,0} \sigma_{\theta} \sigma_{x} \rho
\end{pmatrix}^{-1} 
\begin{pmatrix}
x - \mu_{x,0} \\
y - \beta_{\theta,0}\mu_{\theta,0} - \beta_{x,0}\mu_{x,0}
\end{pmatrix} \\ 
&=  \mu_{\theta,0}
+ \frac{1}{\det}
\begin{pmatrix}
m_1 & m_2
\end{pmatrix}
\begin{pmatrix}
x - \mu_{x,0} \\
y - \beta_{\theta,0}\mu_{\theta,0} - \beta_{x,0}\mu_{x,0} 
\end{pmatrix} \\
&= \mu_{\theta,0}
+ \frac{m_1}{\det}(x - \mu_{x,0})
+ \frac{m_2}{\det}\left(y - \beta_{\theta,0}\mu_{\theta,0} - \beta_{x,0}\mu_{x,0}\right);
\end{aligned}
\]
and 
\[
\begin{aligned}
&\text{Var}\left( \theta_i \mid X_i=x, Z_i=0, Y_{i,0} = y \right) \\
&= \sigma_{\theta}^2 - \left(\sigma_{\theta} \sigma_{x} \rho \quad \beta_{\theta,0} \sigma_{\theta}^2 + \beta_{x,0} \sigma_{\theta} \sigma_{x} \rho  \right) \\
& 
\begin{pmatrix}
\sigma_{x}^2 & \beta_{x,0} \sigma_{x}^2 + \beta_{\theta,0} \sigma_{\theta} \sigma_{x} \rho \\
\beta_{x,0} \sigma_{x}^2 + \beta_{\theta,0} \sigma_{\theta} \sigma_{x} \rho & (\beta_{\theta,0})^2 \sigma_{\theta}^2 + (\beta_{x,0})^2 \sigma_{x}^2 + \sigma_{E}^2 + 2 \beta_{\theta,0} \beta_{x,0} \sigma_{\theta} \sigma_{x} \rho
\end{pmatrix}^{-1} 
\begin{pmatrix}
\sigma_{\theta} \sigma_{x} \rho \\
\beta_{\theta,0} \sigma_{\theta}^2 + \beta_{x,0} \sigma_{\theta} \sigma_{x} \rho 
\end{pmatrix} \\ 
&= \sigma_{\theta}^2 -\frac{1}{\text{det}} \left(m_1 \quad m_2 \right)  \begin{pmatrix}
\sigma_{\theta} \sigma_{x} \rho \\
\beta_{\theta,0} \sigma_{\theta}^2 + \beta_{x,0} \sigma_{\theta} \sigma_{x} \rho 
\end{pmatrix} \\
&= \sigma_{\theta}^2 -\frac{m_1}{\text{det}} \left(\sigma_{\theta} \sigma_{x} \rho \right) - \frac{m_2}{\text{det}} \left(\beta_{\theta,0} \sigma_{\theta}^2 + \beta_{x,0} \sigma_{\theta} \sigma_{x} \rho  \right),
\end{aligned}
\]
where we define the following to avoid long algebraic expressions:
\begin{align*}
m_1 &= -\beta_{x,0}\beta_{\theta,0}\sigma_{x}^2\sigma_{\theta} ^2(1-\rho^2) + \rho\sigma_{\theta} \sigma_{x}\sigma_E^2, \ m_2= \beta_{\theta,0}\sigma_{\theta}^2\sigma_{x}^2(1-\rho^2), \ \text{det} = \sigma_{x}^2(\beta_{\theta,0}^2\sigma_{\theta}^2(1-\rho^2) + \sigma_E^2).
\end{align*}
\end{proof}

\noindent For the rest of Appendix \ref{Appendix B}, we will assume the same corresponding Lipschitz constants for these conditional moments described in Remark \ref{Remark: Structures on Lipschitz continuous functions} (including the generalized multivariate analysis; see Appendix~\ref{Appendix C.1} for details).

\subsubsection{Main results} \label{Appendix B.Main results}
\noindent  We only present the asymptotic variance for $\hat{\tau}_{\text{DiD}}^{\boldsymbol{X},Y_0}$, but the proof also applies to $\hat{\tau}_{\text{DiD}}^{\boldsymbol{X}}$ for similar arguments as discussed in Appendix \ref{Appendix A.2}. As we focus exclusively on the univariate results in Theorem \ref{Thm 3.1}, we introduce simplified notation for clarity. First, we index the post-treatment period with subscript $T=1$ and the pre-treatment period with subscript $T=0$ throughout the rest of this section. Second,  according to Equation \eqref{Eq DiD} and Equation \eqref{Eq: M-DiD}, we could denote estimators (shifting the order of difference taken for the DiD estimator for better comparison here) as:
\begin{align*}
\hat{\tau}_{\text{DiD}} &:= \left( \bar{Y}_{1}^T - \bar{Y}_{1}^C  \right) - \left( \bar{Y}_{0}^T- \bar{Y}_{0}^C \right) \\
\hat{\tau}_{\text{DiD}^{x}} &:= \left( \bar{Y}_{1}^T - \bar{Y}_{1}^{MT} \right) - \left( \bar{Y}_{0}^{T} - \bar{Y}_{0}^{MT} \right) \\
\hat{\tau}_{\text{DiD}^{X,Y_0}} &:=  \bar{Y}_{1}^T -  \bar{Y}_{1}^{MT,{B}}
\end{align*}
where we we use the subscripts $1,0$ to denote the period; the superscript $T$ to denote the treated group, the superscript $C$ for the control group, the superscript $MT$ for the matched control group (the group of matched control units corresponding to the treated group) if matching is only on observed covariates, the superscript $MT,B$ for the matched control group with matching on both observed covariates and the pre-treatment outcome.

Third, since we are calculating variance, we also drop constants whenever necessary in the proof. Now we are ready to present proofs of the main results.

\clearpage
 
\begin{proof}[Proof of $\Var\left( \hat{\tau}_{\text{DiD}} \right)$]

We have 
\begin{align*} 
\text{Var} \left( \hat{\tau}_{\text{DiD}} \right) &= 
\left[ \text{Var} \left( \bar{Y}_{1}^T \right) + \text{Var} \left( \bar{Y}_{1}^C \right) - 2 \text{Cov} \left( \bar{Y}_{1}^T, \bar{Y}_{1}^C \right) \right] \\
&+ \left[ \text{Var} \left( \bar{Y}_{0}^T \right) + \text{Var} \left( \bar{Y}_{0}^C \right) - 2 \text{Cov} \left( \bar{Y}_{0}^T, \bar{Y}_{0}^C \right) \right] \\
&- 2 \left[ \text{Cov} \left( \bar{Y}_{1}^T, \bar{Y}_{0}^T \right) - \text{Cov} \left( \bar{Y}_{1}^T, \bar{Y}_{0}^C \right) - \text{Cov} \left( \bar{Y}_{1}^C, \bar{Y}_{0}^T \right) + \text{Cov} \left( \bar{Y}_{1}^C, \bar{Y}_{0}^C \right) \right]
\end{align*}
As discussed in Section \ref{section: univariate results}, instead of reporting $\operatorname{Avar}(\bar{Y_1}^T) = \Var(Y(1) \mid Z = 1)/p$ as in Lemma \ref{Lemma: Asymptotic Variance of DiD Estimator}, we report its finite sample first-order approximation via $\Var(\bar{Y_1}^T) = \frac{1}{n_1} \Var(Y(1) \mid Z = 1)$ as illustrated in Appendix \ref{Appendix B.First-order Sampling Approximations}. According to Equation \eqref{eq:LSEM} and notations in Remark \ref{Remark: notations on univariate variance}, with some algebra, under Assumption \ref{Assumption: errors}, we have the following variance results: 
\[
\begin{aligned}
\text{Var}(\bar{Y_1}^T) &= \frac{1}{n_1} \left\{ \beta_{\theta,1}^2 \sigma_{\theta}^2 +  \beta_{x,1}^2 \sigma_{x}^2  + \sigma_{E}^2  + 2 \beta_{\theta,1} \beta_{x,1} \rho  \sigma_{x}\sigma_{\theta} \right\} \\
\text{Var}(\bar{Y_0}^T)  &= \frac{1}{n_1} \left\{ \beta_{\theta,0}^2 \sigma_{\theta}^2 +  \beta_{x,0}^2 \sigma_{x}^2  + \sigma_{E}^2  + 2\beta_{\theta,0} \beta_{x,0} \rho  \sigma_{x}\sigma_{\theta}  \right\} \\
\text{Var}(\bar{Y_1}^C) &= \frac{1}{n_0} \left\{ \beta_{\theta,1}^2 \sigma_{\theta}^2 +  \beta_{x,1}^2 \sigma_{x}^2  + \sigma_{E}^2  + 2 \beta_{\theta,1} \beta_{x,1} \rho  \sigma_{x}\sigma_{\theta}  \right\} \\
\text{Var}(\bar{Y_0}^C)  &= \frac{1}{n_0} \left\{ \beta_{\theta,0}^2 \sigma_{\theta}^2 +  \beta_{x,0}^2 \sigma_{x}^2  + \sigma_{E}^2  + 2 \beta_{\theta,0} \beta_{x,0} \rho  \sigma_{x}\sigma_{\theta}  \right\}  
\end{aligned}
\]
Then as shown in Lemma \ref{Lemma: Asymptotic Variance of DiD Estimator}, some (asymptotic) covariance terms are zero.
Specifically, for any $ t,t' \in \{0,1\}, \  \text{Cov} \left( \bar{Y}_{t}^{T}, \bar{Y}_{t'}^{C} \right) = 0$: This is because by Assumption \ref{Assumption: RDGP} and Assumption \ref{Assumption: errors}, after some algebra of expanding sums, it suffices to only consider the inner covariance term with respect to residual errors when $t = t'$, i.e., $\Cov(\epsilon_{i,t} Z_i,  \epsilon_{i,t} (1-Z_i)) = \mathbb{E}[\epsilon_{i,t} Z_i \cdot \epsilon_{i,t} (1-Z_i)]
- \mathbb{E}\left[\epsilon_{i,t} Z_i\right] \mathbb{E}[\epsilon_{i,t} (1-Z_i)] = 0-0 \cdot 0 = 0.$ 

Thus we only work with the rest covariance terms. Under Assumption \ref{Assumption: errors}, we have 
\[
\begin{aligned}
\text{Cov} \left( \bar{Y}_{1}^T, \bar{Y}_{0}^T \right)  &= \frac{1}{n_1} \underbrace{\text{Cov}\left( \beta_{0,1} + \beta_{\theta,1} \theta_i + \beta_{x,1} X_i + \epsilon_{i,1}, \beta_{0,0} + \beta_{\theta,0} \theta_i + \beta_{x,0} X_i + \epsilon_{i,0}  \mid Z_i = 1 \right)}_{= \  \beta_{\theta,1}\beta_{\theta,0} \text{Var}\left(  \theta_i \mid Z_i = 1 \right) +  \beta_{x,1}\beta_{x,0}\text{Var}\left( X_i \mid Z_i = 1 \right)  + \beta_{\theta,1} \beta_{x,0}  \text{Cov}\left( (\theta_i,X_i) \mid Z_i = 1 \right) +  \beta_{\theta,0} \beta_{x,1}  \text{Cov}\left( (\theta_i,X_i) \mid Z_i = 1 \right)  }  \\ 
&= \frac{1}{n_1} \left\{ \beta_{\theta,1}\beta_{\theta,0} \sigma_{\theta}^2 +  \beta_{x,1}\beta_{x,0} \sigma_{x}^2  + \left(\beta_{\theta,1} \beta_{x,0} + \beta_{\theta,0} \beta_{x,1}  \right) \rho  \sigma_{x}\sigma_{\theta}  \right\} \\
\text{Cov} \left( \bar{Y}_{1}^C, \bar{Y}_{0}^C \right)  &= \frac{1}{n_0} \underbrace{\text{Cov}\left( \beta_{0,1} + \beta_{\theta,1} \theta_i + \beta_{x,1} X_i + \epsilon_{i,1}, \beta_{0,0} + \beta_{\theta,0} \theta_i + \beta_{x,0} X_i + \epsilon_{i,0} \mid Z_i = 0 \right)}_{= \  \beta_{\theta,1}\beta_{\theta,0} \text{Var}\left(  \theta_i \mid Z_i = 0 \right) +  \beta_{x,1}\beta_{x,0} \text{Var}\left( X_i \mid Z_i = 0 \right)  + \beta_{\theta,1} \beta_{x,0}  \text{Cov}\left( (\theta_i,X_i) \mid Z_i = 0 \right) +  \beta_{\theta,0} \beta_{x,1}  \text{Cov}\left( (\theta_i,X_i) \mid Z_i = 0 \right)  }  \\ 
&= \frac{1}{n_0} \left\{ \beta_{\theta,1}\beta_{\theta,0} \sigma_{\theta}^2 +  \beta_{x,1}\beta_{x,0} \sigma_{x}^2  + \left(\beta_{\theta,1} \beta_{x,0} + \beta_{\theta,0} \beta_{x,1}  \right) \rho  \sigma_{x}\sigma_{\theta}  \right\}
\end{aligned}
\]
where since we only define first moments to be group-specific, there aren't any subscripts $z$ for variance and covariance terms. Combining all terms and after some algebra, we complete the proof. 
\end{proof}

\clearpage

\begin{proof}[Proof of $\Var \left( \hat{\tau}_{\text{DiD}^{X,Y_0}} \right)$]
We have $$\text{Var} \left( \hat{\tau}_{\text{DiD}^{X,Y_0}} \right) =  \text{Var} \left( \bar{Y}_{1}^T \right) + \text{Var} \left(  \bar{Y}_{1}^{MT,{B}} \right) - 2 \text{Cov} \left( \bar{Y}_{1}^T,  \bar{Y}_{1}^{MT,{B}} \right),$$ and we have derived $\text{Var}\left( \bar{Y}_{1}^T \right)$, so it remains to derive the other two terms. 

Applying law of total variance as in Lemma \ref{Lemma: Asymptotic Variance of Mathing DiD Estimator(s)} and with proper finite sample first-order approximation scaling as in Appendix \ref{Appendix B.First-order Sampling Approximations}, we have $\text{Var}(\bar{Y_1}^{MT,B})  = \frac{1}{n_1} \left[ \text{\ding{172}} + \text{\ding{173}} \right]$. In particular, further by Remark \ref{Remark: Structures on Lipschitz continuous functions}, we have  
\[
\begin{aligned}
\text{\ding{172}} &= 
\mathbb{E}_{X, Y_0|Z=1} \left[\text{Var}\left( Y_{i,1} \mid Z_i=0, X_i =x, Y_{i,0} = y \right) \right] \\ 
&= \mathbb{E}_{X, Y_0|Z=1} \left[\text{Var}\left( Y_{i,1}(0) \mid  Z_i=0, X_i =x, Y_{i,0} = y \right) \right] \\ 
&= \mathbb{E}_{X,Y_0|Z=1} \left[\text{Var}\left(\beta_{\theta,1} \theta_i + \beta_{x,1} X_i + \epsilon_{i,1} \mid Z_i=0, X_i=x, Y_{i,0} = y \right) \right] \\
&=  \mathbb{E}_{X,Y_0|Z=1} \left[\beta_{\theta,1}^2 \text{Var}\left( \theta_i \mid  Z_i=0, X_i=x, Y_{i,0} = y \right) + \sigma^2_{E}\right] \\
& = \beta_{\theta,1}^2\left(\sigma_{\theta}^2 -\frac{m_1}{\text{det}} \left(\sigma_{\theta} \sigma_{x} \rho \right) - \frac{m_2}{\text{det}} \left(\beta_{\theta,0} \sigma_{\theta}^2 + \beta_{x,0} \sigma_{\theta} \sigma_{x} \rho  \right)\right) + \sigma^2_E\\
& \\ 
\text{\ding{173}} 
&=\text{Var}_{X,Y_0|Z=1} \left(\mathbb{E}\left[ Y_{i,1} \mid   Z_i=0, X_i=x, Y_{i,0} =y \right] \right) \\ 
&=\text{Var}_{X,Y_0|Z=1} \left(\mathbb{E}\left[ Y_{i,1}(0) \mid   Z_i=0, X_i=x, Y_{i,0} =y \right] \right) \\ 
&= \text{Var}_{X,Y_0|Z=1} \left( \beta_{\theta,1} \cdot \mathbb{E}\left[ \theta_i \mid  Z_i=0, X_i=x, Y_{i,0} = y \right] + \beta_{x,1}x \right) \\
&= \text{Var}_{X,Y_0|Z=1} \left(  \beta_{\theta,1} \cdot \left(\frac{m_1}{\text{det}}x +\frac{m_2}{\text{det}}y \right) + \beta_{x,1}x \right) \\ 
&= \left( \beta_{\theta,1}
\frac{m_1}{\text{det}}  + \beta_{x,1} \right)^2 \text{Var}_{X,Y_0|Z=1}(x) + \left( \beta_{\theta,1}  
\frac{m_2}{\text{det}}\right)^2 \text{Var}_{X,Y_0|Z=1}(y) \\
&+ 2 \left( \beta_{\theta,1} 
 \frac{m_1}{\text{det}}  + \beta_{x,1} \right) \left( \beta_{\theta,1}  \frac{m_2}{\text{det}}\right) \text{Cov}_{X,Y_0|Z=1}(x,y)
\\
&= \left( \beta_{\theta,1}\frac{m_1}{\text{det}}  + \beta_{x,1} \right)^2  \sigma^2_{x} + \left( \beta_{\theta,1} \frac{m_2}{\text{det}}\right)^2  \left((\beta_{\theta,0})^2 \sigma_{\theta}^2 + (\beta_{x,0})^2 \sigma_{x}^2 + \sigma_{E}^2 + 2 \beta_{\theta,0} \beta_{x,0} \sigma_{\theta} \sigma_{x} \rho \right) \\
&+ 2 \left(\beta_{\theta,1}  \frac{m_1}{\text{det}} + \beta_{x,1} \right) \left( \beta_{\theta,1} \frac{m_2}{\text{det}}\right) \left(\beta_{x,0} \sigma_{x}^2 + \beta_{\theta,0} \sigma_{\theta} \sigma_{x} \rho \right),
\end{aligned}
\]    
where we apply reparameterized forms of conditional mean and variance for the control group from Remark \ref{Remark: Structures on Lipschitz continuous functions} and repeatedly invoke usages of Assumption \ref{Assumption: errors} for irreducible errors.

If we collect terms such as $\frac{m_2}{\text{det}} \cdot \beta_{\theta,0} = \frac{(\beta_{\theta,0})^2 \sigma_\theta^2 (1 - \rho^2)}{(\beta_{\theta,0})^2 \sigma_\theta^2 (1 - \rho^2) + \sigma_E^2} $, then we obtain a quantity denoted as the reliability term, $r_{\theta|x}$, explained in the main text.

Similarly, applying law of total covariance as in Lemma \ref{Lemma: Asymptotic Variance of Mathing DiD Estimator(s)}, we have 
\begin{align*}
\text{Cov} \left( \bar{Y}_{1}^T,  \bar{Y}_{1}^{MT,{B}} \right) = \frac{1}{n_1} \left[ \operatorname{Cov}_{X,Y_0|Z=1} \left(\mathbb{E}\left[ Y_{i,1} \mid  Z_i = 1, X_{i} = x, Y_{i,0} = y \right], \mathbb{E}\left[ Y_{i,1} \mid Z_i = 0, X_{i} = x, Y_{i,0} = y  \right] \right)   \right].
\end{align*}
Focusing on the inner terms and dropping irrelevant constants, by Remark \ref{Remark: Structures on Lipschitz continuous functions}, we have
\begin{align*}
& \operatorname{Cov}_{X,Y_0|Z=1}\left(\beta_{\theta,1} \cdot \left(\frac{m_1}{\text{det}}x +\frac{m_2}{\text{det}}y \right) + \beta_{x,1}x , \  \beta_{\theta,1} \cdot \left(\frac{m_1}{\text{det}}x +\frac{m_2}{\text{det}}y \right) + \beta_{x,1}x  \right)   \\
&= \left( \beta_{\theta,1}\frac{m_1}{\text{det}}  + \beta_{x,1} \right)^2  \sigma^2_{x} + \left( \beta_{\theta,1} \frac{m_2}{\text{det}}\right)^2  \left((\beta_{\theta,0})^2 \sigma_{\theta}^2 + (\beta_{x,0})^2 \sigma_{x}^2 + \sigma_{E}^2 + 2 \beta_{\theta,0} \beta_{x,0} \sigma_{\theta} \sigma_{x} \rho \right) \\
&+ 2 \left(\beta_{\theta,1}  \frac{m_1}{\text{det}} + \beta_{x,1} \right) \left( \beta_{\theta,1} \frac{m_2}{\text{det}}\right) \left(\beta_{x,0} \sigma_{x}^2 + \beta_{\theta,0} \sigma_{\theta} \sigma_{x} \rho \right)  
\end{align*}
Finally, combining all terms and after messy algebra, we reach to the desired result.
\end{proof}

\clearpage

\begin{proof}[Proof of $\Var \left( \hat{\tau}_{\text{DiD}^{X}} \right)$]
The proof idea is similar to ones used in $\Var \left( \hat{\tau}_{\text{DiD}^{X,Y_0}} \right)$, where we apply reparameterized forms of conditional moments from Remark \ref{Remark: Structures on Lipschitz continuous functions} and repeatedly invoke usages of Assumption \ref{Assumption: errors} for irreducible errors, along with the main Lemma \ref{Lemma: Asymptotic Variance of Mathing DiD Estimator(s)}. Specifically, we have 
\begin{align*}
\text{Var} \left( \hat{\tau}_{\text{DiD}^{x}} \right) &= 
\left[ \text{Var} \left( \bar{Y}_{1}^T \right) + \text{Var} \left( \bar{Y}_{0}^T \right) - 2 \text{Cov} \left( \bar{Y}_{1}^T, \bar{Y}_{0}^T \right) \right] \\
&+ \left[ \text{Var} \left( \bar{Y}_{1}^{MT} \right) + \text{Var} \left( \bar{Y}_{0}^{MT} \right) - 2 \text{Cov} \left( \bar{Y}_{1}^{MT}, \bar{Y}_{0}^{MT} \right) \right] \\
&- 2 \left[ \text{Cov} \left( \bar{Y}_{1}^T, \bar{Y}_{1}^{MT} \right) - \text{Cov} \left( \bar{Y}_{1}^T, \bar{Y}_{0}^{MT} \right) - \text{Cov} \left( \bar{Y}_{0}^T, \bar{Y}_{1}^{MT} \right) + \text{Cov} \left( \bar{Y}_{0}^T, \bar{Y}_{0}^{MT} \right) \right] 
\end{align*}
We have derived the first line of variance results, and the second line of variance results are:
\[
\begin{aligned}
\text{Var}(\bar{Y_1}^{MT})  & = \frac{1}{n_1} \left[ \beta_{\theta,1}^2 (1-\rho^2)\sigma^2_{\theta}  + \sigma^2_{E} + \left( \beta_{\theta,1}  \rho \frac{\sigma_{\theta}}{\sigma_{x}} + \beta_{x,1} \right)^2  \sigma^2_{x} \right] \\
\text{Var}(\bar{Y_0}^{MT}) 
&= \frac{1}{n_1} \left[ \beta_{\theta,0}^2 (1-\rho^2)\sigma^2_{\theta}  + \sigma^2_{E} + \left( \beta_{\theta,0}  \rho \frac{\sigma_{\theta}}{\sigma_{x}} + \beta_{x,0} \right)^2  \sigma^2_{x} \right] \\
\text{Cov} \left( \bar{Y}_{1}^{MT}, \bar{Y}_{0}^{MT} \right) 
&= \frac{1}{n_1} \left[ \beta_{\theta,1}\beta_{\theta,0} \left((1-\rho^2)\sigma^2_{\theta} + \rho^2 \frac{\sigma_{\theta}^2}{\sigma_{x}^2} \sigma^2_{x} \right)+ \left(\beta_{\theta,1}\beta_{x,0} + \beta_{\theta,0}\beta_{x,1}  \right) \rho \frac{\sigma_{\theta}}{\sigma_{x}} \sigma_{x}^2 + \beta_{x,1}\beta_{x,0} \sigma_{x}^2
\right] 
\end{aligned}
\]
Among $\text{Var}(\bar{Y_1}^{MT})$ and $\text{Var}(\bar{Y_0}^{MT})$, it suffices to show $\text{Var}(\bar{Y_1}^{MT}) = \frac{1}{n_1} \left[ \text{\ding{172}} + \text{\ding{173}} \right]$. Following the proof of $\Var \left( \hat{\tau}_{\text{DiD}^{X,Y_0}} \right)$ along with Lemma \ref{Lemma: Asymptotic Variance of Mathing DiD Estimator(s)} but with matching only on observed covariates $X$, by law of total variance, we have  
\[
\begin{aligned}
\text{\ding{172}} &= \mathbb{E}_{X|Z=1} \left[\text{Var}\left( Y_{i,1} \mid Z_i=0, X_i =x \right) \right] \\ 
&= \mathbb{E}_{X|Z=1} \left[\text{Var}\left( Y_{i,1}(0) \mid  Z_i=0, X_i =x \right) \right] \\
&= \mathbb{E}_{X|Z=1} \left[\text{Var}\left(\beta_{\theta,1} \theta_i + \beta_{x,1} X_i + \epsilon_{i,1} \mid Z_i=0, X_i =x  \right) \right] \\
&=  \mathbb{E}_{X|Z=1} \left[\beta_{\theta,1}^2 \text{Var}\left( \theta_i \mid Z_i=0, X_i =x \right) + \sigma^2_{E}\right]  \\
& = \beta_{\theta,1}^2 (1-\rho^2)\sigma^2_{\theta}  + \sigma^2_{E} 
\end{aligned}
\]    
and 
\[
\begin{aligned}
\text{\ding{173}} 
&= \text{Var}_{X|Z=1} \left(\mathbb{E}\left[ Y_{i,1} \mid Z_i = 0, X_i = x \right] \right)  \\
&=\text{Var}_{X|Z=1} \left(\mathbb{E}\left[ Y_{i,1}(0) \mid  Z_i = 0, X_i = x\right] \right) \\ 
&= \text{Var}_{X|Z=1} \left(\beta_{\theta,1} \left( \mu_{\theta,0} + \rho \frac{\sigma_{\theta}}{\sigma_{x}} (x - \mu_{x,0}) \right) + \beta_{x,1}x \right)  \\
&= \left( \beta_{\theta,1}  \rho \frac{\sigma_{\theta}}{\sigma_{x}} + \beta_{x,1} \right)^2 \text{Var}_{X|Z=1}(x) \\
&= \left( \beta_{\theta,1}  \rho \frac{\sigma_{\theta}}{\sigma_{x}} + \beta_{x,1} \right)^2  \sigma^2_{x}.
\end{aligned}
\]    
For $\text{Cov} \left( \bar{Y}_{1}^{MT}, \bar{Y}_{0}^{MT} \right)$, applying law of total covariance, in the limit we have
\[
\begin{aligned}
\text{\ding{172}} 
&= \mathbb{E}_{X|Z=1} \left[\operatorname{Cov}\left( Y_{i,1}, Y_{i,0}\mid X_i = x, Z_i =0 \right) \right] \\
&= \mathbb{E}_{X|Z=1} \left[\operatorname{Cov}\left( Y_{i,1}(0), Y_{i,0}(0)\mid X_i = x, Z_i =0 \right) \right] \\
&= \mathbb{E}_{X|Z=1} \left[\operatorname{Cov}\left( \beta_{\theta,1} \theta_i + \beta_{x,1} X_i + \epsilon_{i,1}, \beta_{\theta,0} \theta_i + \beta_{x,0} X_i + \epsilon_{i,0} \mid  X_i = x, Z_i =0 \right) \right] \\
&=  \mathbb{E}_{X|Z=1} \left[\beta_{\theta,1} \beta_{\theta,0} \operatorname{Var}(\theta_i \mid  X_i = x, Z_i =0) \right]  \\
& = \beta_{\theta,1} \beta_{\theta,0} (1-\rho^2)\sigma^2_{\theta}  \\
& \\ 
\text{\ding{173}} 
&=\operatorname{Cov}_{X|Z=1} \left(\mathbb{E}\left[ Y_{i,1} \mid  X_i = x, Z_i =0 \right], \mathbb{E}\left[ Y_{i,0} \mid  X_i = x, Z_i =0  \right] \right)  \\ 
&=\operatorname{Cov}_{X|Z=1} \left(\mathbb{E}\left[ Y_{i,1}(0) \mid  X_i = x, Z_i =0 \right], \mathbb{E}\left[ Y_{i,0}(0) \mid  X_i = x, Z_i =0  \right] \right) \\ 
&= \operatorname{Cov}_{X|Z=1} \left[ \beta_{\theta,1} \left( \mu_{\theta,0} + \rho \frac{\sigma_{\theta}}{\sigma_{x}} (x - \mu_{x,0}) \right) + \beta_{x,1}x ,\ \beta_{\theta,0} \left( \mu_{\theta,0} + \rho \frac{\sigma_{\theta}}{\sigma_{x}} (x - \mu_{x,0}) \right) + \beta_{x,0}x  \right] \\
&= \beta_{\theta,1} \beta_{\theta,0} \rho^2 \frac{\sigma_{\theta}^2}{\sigma_{x}^2} \operatorname{Var}_{X|Z_i=1}(x)
+ (\beta_{\theta,1} \beta_{x,0} + \beta_{\theta,0} \beta_{x,1}) \rho \frac{\sigma_{\theta}}{\sigma_{x}} \operatorname{Var}_{X|Z_i=1}(x) 
+ \beta_{x,1} \beta_{x,0} \operatorname{Var}_{X|Z_i=1}(x) \\ 
&= \beta_{\theta,1}\beta_{\theta,0}  \rho^2 \frac{\sigma_{\theta}^2}{\sigma_{x}^2} \sigma^2_{x} + \left(\beta_{\theta,1}\beta_{x,0} + \beta_{\theta,0}\beta_{x,1} \right) \rho \frac{\sigma_{\theta}}{\sigma_{x}} \sigma_{x}^2 + \beta_{x,1}\beta_{x,0} \sigma_{x}^2. 
\end{aligned}
\]    
For the rest covariance terms, we have: 
\[
\begin{aligned}
&\text{Cov} \left( \bar{Y}_{1}^{T}, \bar{Y}_{1}^{MT} \right) = \frac{1}{n_1} \left[ \left( \beta_{\theta,1}  \rho \frac{\sigma_{\theta}}{\sigma_{x}} + \beta_{x,1} \right) \left( \beta_{\theta,1}  \rho \frac{\sigma_{\theta}}{\sigma_{x}} + \beta_{x,1} \right)  \sigma^2_{x}
\right] \\
&\text{Cov} \left( \bar{Y}_{0}^{T}, \bar{Y}_{0}^{MT} \right) = \frac{1}{n_1} \left[ \left( \beta_{\theta,0}  \rho \frac{\sigma_{\theta}}{\sigma_{x}} + \beta_{x,0} \right) \left( \beta_{\theta,0}  \rho \frac{\sigma_{\theta}}{\sigma_{x}} + \beta_{x,0} \right)  \sigma^2_{x}
\right] \\
&\text{Cov} \left( \bar{Y}_{1}^{T}, \bar{Y}_{0}^{MT} \right) = \frac{1}{n_1} \left[ \left( \beta_{\theta,1}  \rho \frac{\sigma_{\theta}}{\sigma_{x}} + \beta_{x,1} \right) \left( \beta_{\theta,0}  \rho \frac{\sigma_{\theta}}{\sigma_{x}} + \beta_{x,0} \right)  \sigma^2_{x}
\right]  \\
&\text{Cov} \left( \bar{Y}_{0}^{T}, \bar{Y}_{1}^{MT} \right) = \frac{1}{n_1} \left[ \left( \beta_{\theta,0}  \rho \frac{\sigma_{\theta}}{\sigma_{x}} + \beta_{x,0} \right) \left( \beta_{\theta,1}  \rho \frac{\sigma_{\theta}}{\sigma_{x}} + \beta_{x,1} \right)  \sigma^2_{x}
\right] 
\end{aligned}
\]
We only show the first case and the rest three are similar. Applying the law of total covariance in the limit, for $\text{Cov} \left( \bar{Y}_{1}^{T}, \bar{Y}_{1}^{MT} \right)$, we have 
\[
\begin{aligned}
&=\text{Cov}_{X|Z=1} \left(\mathbb{E}\left[ Y_{i,1} \mid  Z_i=1, X_i=x  \right], \mathbb{E}\left[ Y_{i,1} \mid  Z_i=0, X_i=x \right] \right) \\ 
&= \text{Cov}_{X|Z=1} \left[ \beta_{\theta,1} \left( \mu_{\theta,1} + \rho \frac{\sigma_{\theta}}{\sigma_{x}} (x - \mu_{x,1}) \right) + \beta_{x,1}x ,  \beta_{\theta,1} \left( \mu_{\theta,0} + \rho \frac{\sigma_{\theta}}{\sigma_{x}}(x - \mu_{x,0}) \right) + \beta_{x,1}x  \right] \\
&= \left( \beta_{\theta,1}  \rho \frac{\sigma_{\theta}}{\sigma_{x}} + \beta_{x,1} \right) \left( \beta_{\theta,1}  \rho \frac{\sigma_{\theta}}{\sigma_{x}} + \beta_{x,1} \right)  \sigma^2_{x}
\end{aligned}
\]    
Finally, combining all terms and after messy algebra, we reach to the desired result. 
\end{proof}

\clearpage

\subsection{Proof of Lemma \ref{Lemma 3.3}} \label{Appendix B.2}

\begin{proof}

We want to show $\text{Var} \left( \hat{\tau}_{\text{DiD}^{X}} \right) \geq \text{Var} \left( \hat{\tau}_{\text{DiD}^{X,Y_0}} \right)$, i.e., $$2\sigma_{E}^2 + 
\Delta_{\theta}^2 \sigma^2_{\theta}    -\Delta_{\theta}^2 \rho^2 \sigma_{\theta}^2  \geq \sigma_{E}^2  + \beta_{\theta,1}^2 \sigma_{\theta}^2 \left(1-\rho^2 \right)   \left( 1 - r_{\theta|x} \right)$$ always holds.

If $|\rho| =1$, then $\text{Var} \left( \hat{\tau}_{\text{DiD}^{X}} \right)  \geq \text{Var} \left( \hat{\tau}_{\text{DiD}^{X,Y_0}} \right)$ holds trivially. We thus only prove the more interesting case when $|\rho| \neq 1$. 

Let $|\rho| \neq 1$ and if $\beta_{\theta,1} = 0$, then the inequality trivial holds by noting that $\Delta_{\theta} = \beta_{\theta,1} - \beta_{\theta,0}.$ Then it suffices to consider $\beta_{\theta,1} \neq 0$. Denote $s = \frac{\beta_{\theta,0}}{ \beta_{\theta,1}}$, and rewriting the inequality gives 
\[
\begin{aligned}
& \sigma_E^2 + \Delta_{\theta}^2 \left(1- \rho^2 \right)  \sigma_{\theta}^2 \geq \beta_{\theta,1}^2 \sigma_{\theta}^2 \left(1-\rho^2 \right)   \left( 1 - r_{\theta|x} \right)   \\ 
& \Leftrightarrow \frac{\sigma_E^2}{\left(1-\rho^2 \right) \beta_{\theta,1}^2 \sigma_{\theta}^2} + \frac{\Delta_{\theta}^2}{\beta_{\theta,1}^2} + r_{\theta|x} -1 \geq 0 \\
 & \Leftrightarrow \frac{\beta_{\theta,0}^2}{\beta_{\theta,1}^2} \cdot  \frac{\sigma_E^2}{\left(1-\rho^2 \right) \beta_{\theta,0}^2 \sigma_{\theta}^2} + \frac{\beta_{\theta,1}^2 - 2 \beta_{\theta,1} \beta_{\theta,0} + \beta_{\theta,0}^2 }{\beta_{\theta,1}^2} + r_{\theta|x} -1 \geq 0 \\ 
  & \Leftrightarrow s^2 \left(\frac{1}{r_{\theta|x}} -1  \right) -2s + s^2 + r_{\theta|x} \geq 0  \\ 
  & \Leftrightarrow r_{\theta|x}  + \frac{s^2}{r_{\theta|x}} - 2s \geq 0 \\ 
  & \Leftrightarrow  r_{\theta|x}^2 + s^2 - 2s r_{\theta|x} \geq 0 \quad \text{by} \ r_{\theta|x} \in (0,1) \\ 
  & \Rightarrow ( r_{\theta|x} - s)^2 \geq 0
 \end{aligned}
\]    
The second line  goes through because $|\rho| \neq 1$ and $\beta_{\theta,1} \neq 0$.  The second last line uses the definition of the reliability term, otherwise the sign will 
be flipped. The last line shows that the inequality always holds. 
\end{proof}

\clearpage

\section{Proofs of Theoretical Results in Section \ref{section: multivariate results}}
\noindent We state some lemmas about matrix inversions that will be used throughout this section. These proofs are omitted because they are well established properties in the literature. 

\begin{lemma}[\textbf{Block Matrix Inversion}] \label{Lemma: Block Matrix Inversion}
Suppose we have a block matrix 
\[
\mathbf{P} = \begin{pmatrix}
\boldsymbol{A} & \boldsymbol{B} \\
\boldsymbol{C} & \boldsymbol{D}
\end{pmatrix}
\]
then
\[
\mathbf{P}^{-1} = 
\begin{pmatrix}
\boldsymbol{A}^{-1} + \boldsymbol{A}^{-1}\boldsymbol{B}(\boldsymbol{D} - \boldsymbol{C}\boldsymbol{A}^{-1}\boldsymbol{B})^{-1}\boldsymbol{C}\boldsymbol{A}^{-1} & 
-\boldsymbol{A}^{-1}\boldsymbol{B}(\boldsymbol{D} - \boldsymbol{C}\boldsymbol{A}^{-1}\boldsymbol{B})^{-1} \\
-(\boldsymbol{D} - \boldsymbol{C}\boldsymbol{A}^{-1}\boldsymbol{B})^{-1}\boldsymbol{C}\boldsymbol{A}^{-1} & 
(\boldsymbol{D} - \boldsymbol{C}\boldsymbol{A}^{-1}\boldsymbol{B})^{-1}
\end{pmatrix}
\]
\end{lemma}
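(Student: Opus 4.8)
The plan is to verify the stated formula directly by block multiplication, after first flagging the two invertibility conditions that the formula tacitly requires: that $\boldsymbol{A}$ is invertible (so that $\boldsymbol{A}^{-1}$ is meaningful) and that the \emph{Schur complement} $\boldsymbol{S} := \boldsymbol{D} - \boldsymbol{C}\boldsymbol{A}^{-1}\boldsymbol{B}$ is invertible (so that $\boldsymbol{S}^{-1}$ is meaningful). Writing the candidate inverse compactly as
\[
\mathbf{Q} = \begin{pmatrix} \boldsymbol{A}^{-1} + \boldsymbol{A}^{-1}\boldsymbol{B}\boldsymbol{S}^{-1}\boldsymbol{C}\boldsymbol{A}^{-1} & -\boldsymbol{A}^{-1}\boldsymbol{B}\boldsymbol{S}^{-1} \\ -\boldsymbol{S}^{-1}\boldsymbol{C}\boldsymbol{A}^{-1} & \boldsymbol{S}^{-1} \end{pmatrix},
\]
it suffices to show $\mathbf{P}\mathbf{Q} = \boldsymbol{I}$, since $\mathbf{P}$ is square and a right inverse of a square matrix is automatically a two-sided inverse.

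The core step is to compute the four blocks of $\mathbf{P}\mathbf{Q}$ in order. For the $(1,1)$ block I would expand $\boldsymbol{A}(\boldsymbol{A}^{-1} + \boldsymbol{A}^{-1}\boldsymbol{B}\boldsymbol{S}^{-1}\boldsymbol{C}\boldsymbol{A}^{-1}) - \boldsymbol{B}\boldsymbol{S}^{-1}\boldsymbol{C}\boldsymbol{A}^{-1} = \boldsymbol{I} + \boldsymbol{B}\boldsymbol{S}^{-1}\boldsymbol{C}\boldsymbol{A}^{-1} - \boldsymbol{B}\boldsymbol{S}^{-1}\boldsymbol{C}\boldsymbol{A}^{-1} = \boldsymbol{I}$, where the two middle terms cancel. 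The $(1,2)$ block is $-\boldsymbol{A}\boldsymbol{A}^{-1}\boldsymbol{B}\boldsymbol{S}^{-1} + \boldsymbol{B}\boldsymbol{S}^{-1} = \boldsymbol{0}$. The $(2,2)$ block is where the Schur complement does the work: $-\boldsymbol{C}\boldsymbol{A}^{-1}\boldsymbol{B}\boldsymbol{S}^{-1} + \boldsymbol{D}\boldsymbol{S}^{-1} = (\boldsymbol{D} - \boldsymbol{C}\boldsymbol{A}^{-1}\boldsymbol{B})\boldsymbol{S}^{-1} = \boldsymbol{S}\boldsymbol{S}^{-1} = \boldsymbol{I}$, and the $(2,1)$ block collapses to $\boldsymbol{C}\boldsymbol{A}^{-1} - (\boldsymbol{D} - \boldsymbol{C}\boldsymbol{A}^{-1}\boldsymbol{B})\boldsymbol{S}^{-1}\boldsymbol{C}\boldsymbol{A}^{-1} = \boldsymbol{C}\boldsymbol{A}^{-1} - \boldsymbol{C}\boldsymbol{A}^{-1} = \boldsymbol{0}$ by the same regrouping. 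Assembling the four blocks yields $\mathbf{P}\mathbf{Q} = \boldsymbol{I}$, as claimed.

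As an optional and more illuminating derivation that also explains where the formula originates, I would instead record the block triangular factorization
\[
\mathbf{P} = \begin{pmatrix} \boldsymbol{I} & \boldsymbol{0} \\ \boldsymbol{C}\boldsymbol{A}^{-1} & \boldsymbol{I} \end{pmatrix} \begin{pmatrix} \boldsymbol{A} & \boldsymbol{0} \\ \boldsymbol{0} & \boldsymbol{S} \end{pmatrix} \begin{pmatrix} \boldsymbol{I} & \boldsymbol{A}^{-1}\boldsymbol{B} \\ \boldsymbol{0} & \boldsymbol{I} \end{pmatrix},
\]
invert each of the three factors separately (the unit-triangular factors invert by negating their off-diagonal block, and the block-diagonal factor inverts blockwise), and multiply the inverses in reverse order; expanding that product reproduces $\mathbf{Q}$ exactly. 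I expect no conceptual obstacle, as this is a standard identity; the only genuine care needed is bookkeeping — keeping the noncommutative matrix products in the correct left-to-right order throughout the cancellations, and stating the invertibility of $\boldsymbol{A}$ and of $\boldsymbol{S}$ as hypotheses rather than assuming them implicitly. In the paper's application these hold because the relevant diagonal blocks are conditional covariance matrices assumed positive definite in Assumption \ref{Assumption: variables}, which guarantees that both $\boldsymbol{A}^{-1}$ and the Schur-complement inverse exist.
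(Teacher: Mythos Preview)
Your proof is correct and complete; the direct block-multiplication verification together with the optional block-LDU factorization is the standard way to establish this identity. The paper itself omits the proof entirely, stating only that ``these proofs are omitted because they are well established properties in the literature,'' so your writeup already goes beyond what the paper provides.
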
 

\begin{lemma}[\textbf{Sherman–Morrison formula}] \label{Lemma: Sherman–Morrison formula}
Suppose $A \in \mathbf{R}^{n \times n}$ is an invertible square matrix and $u, v \in \mathbf{R}^n$ are column vectors. Then $A + u v^\top$ is invertible if and only if
$$
1 + v^\top A^{-1} u \neq 0.
$$
In this case,
$$
(A + u v^\top)^{-1} = A^{-1} - \frac{A^{-1} u v^\top A^{-1}}{1 + v^\top A^{-1} u}.
$$
\end{lemma}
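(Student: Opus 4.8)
The plan is to treat this as a classical rank-one perturbation result and to separate its two assertions: the invertibility characterization (an ``if and only if'') and the closed form for the inverse. First I would dispose of the degenerate cases $u = 0$ or $v = 0$, where $A + uv^\top = A$ is invertible and $1 + v^\top A^{-1} u = 1 \neq 0$, so both claims hold trivially. Thus I may assume $u \neq 0$ and $v \neq 0$; since $A$ is invertible, the vector $w := A^{-1} u$ is then nonzero, and I set the scalar $\beta := v^\top A^{-1} u$ once and for all.

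For the invertibility equivalence I would argue directly with null vectors rather than invoking the matrix determinant lemma, keeping the proof self-contained. A one-line computation gives $(A + uv^\top)\,w = A w + u\,(v^\top w) = u + u\beta = (1+\beta)\,u$. Hence if $1 + \beta = 0$, then the nonzero $w$ lies in the kernel and $A + uv^\top$ is singular. Conversely, suppose $(A + uv^\top)\,x = 0$ for some $x \neq 0$; then $A x = -(v^\top x)\,u$, and writing $c := v^\top x$ I note $c \neq 0$ (otherwise $Ax = 0$ would force $x = 0$ by invertibility of $A$). Solving gives $x = -c\,A^{-1} u$, and substituting into $c = v^\top x$ yields $c = -c\,\beta$, hence $1 + \beta = 0$. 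Together these show $A + uv^\top$ is invertible if and only if $1 + v^\top A^{-1} u \neq 0$.

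For the explicit inverse, assuming $\alpha := 1 + \beta \neq 0$, I would verify the candidate by direct multiplication. The key bookkeeping observation is that $v^\top A^{-1} u = \beta$ is a scalar and can be pulled out of the middle of any product. Expanding $(A + uv^\top)\bigl(A^{-1} - \alpha^{-1} A^{-1} u v^\top A^{-1}\bigr)$ produces the identity plus three terms, all proportional to $u v^\top A^{-1}$, whose combined coefficient is $1 - \alpha^{-1} - \beta\alpha^{-1} = 1 - \alpha^{-1}(1+\beta) = 0$, leaving exactly $I$. The symmetric product on the other side collapses identically, with the remaining terms now proportional to $A^{-1} u v^\top$, confirming a genuine two-sided inverse.

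There is no deep obstacle here, as the result is elementary; the only place demanding care is the scalar bookkeeping in both arguments, where one must consistently recognize $v^\top A^{-1} u$ as a scalar so that the rank-one terms collapse and the coefficient $1 - (1+\beta)/\alpha$ vanishes. If a shorter route were preferred, I would instead factor $A + uv^\top = A\,(I + A^{-1} u v^\top)$ and reduce to the identity-plus-rank-one case $I + w v^\top$, whose eigenvalues are $1$ (with multiplicity $n-1$) and $1 + v^\top w$; but the null-vector argument above already delivers the ``iff'' without appealing to any spectral facts.
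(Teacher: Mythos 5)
Your proof is correct: the null-vector argument cleanly gives both directions of the invertibility equivalence, and the direct multiplication check, with the scalar $v^\top A^{-1}u$ pulled out so the rank-one terms cancel with coefficient $1 - (1+\beta)/\alpha = 0$, verifies the stated inverse. Note that the paper itself provides no proof of this lemma — it lists Sherman–Morrison (alongside block inversion and Woodbury) as a well-established identity and explicitly omits the proofs — so your self-contained, elementary derivation is a perfectly valid way to fill in the omitted standard argument, and there is no paper approach to contrast it with.
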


\begin{lemma}[\textbf{Woodbury Matrix Identity}] \label{Lemma: Woodbury Matrix Identity}
Let \( A \in \mathbf{R}^{n \times n} \), \( C \in \mathbf{R}^{k \times k} \), \( U \in \mathbf{R}^{n \times k} \), and \( V \in \mathbf{R}^{k \times n} \), with \( A \) and \( C \) invertible. Then the following identity holds:
\[
\left( A + UCV \right)^{-1} = A^{-1} - A^{-1} U \left( C^{-1} + V A^{-1} U \right)^{-1} V A^{-1}.
\]
\end{lemma}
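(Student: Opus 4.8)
The statement is the standard Woodbury identity, and the plan is to prove it by \emph{direct verification}: since a two-sided inverse is unique, it suffices to exhibit a matrix $M$ with $(A + UCV)M = \mathbf{I}_n$, after which $M = (A+UCV)^{-1}$ follows automatically. Throughout I assume, as the identity implicitly requires, that the $k\times k$ \emph{capacitance matrix} $S := C^{-1} + VA^{-1}U$ is invertible (for $k=1$ this is exactly the condition $1 + v^{\top} A^{-1}u \neq 0$ appearing in the Sherman--Morrison lemma stated just above). I take $M := A^{-1} - A^{-1}U S^{-1} V A^{-1}$ to be the claimed right-hand side.

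First I would form the product $(A + UCV)M$ and expand it into four terms:
\[
(A+UCV)M = \mathbf{I}_n - U S^{-1} V A^{-1} + U C V A^{-1} - U C (V A^{-1} U) S^{-1} V A^{-1}.
\]
The three trailing terms all carry a factor $U$ on the left and $V A^{-1}$ on the right, so I would factor them together as $U\,\bigl[\,{-}S^{-1} + C - C(VA^{-1}U)S^{-1}\,\bigr]\,VA^{-1}$, reducing the whole problem to showing that the bracketed $k\times k$ factor vanishes.

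The key step --- essentially the only place a genuine idea enters --- is the substitution $VA^{-1}U = S - C^{-1}$, which is immediate from the definition of $S$. Plugging this in gives $C(VA^{-1}U)S^{-1} = C(S - C^{-1})S^{-1} = C - S^{-1}$, so the bracket collapses to ${-}S^{-1} + C - (C - S^{-1}) = 0$. Hence $(A+UCV)M = \mathbf{I}_n$, and uniqueness of the inverse completes the proof. I do not expect a real obstacle here; the only things to handle with care are the noncommutativity of the factors (keeping $U$ on the left and $V A^{-1}$ on the right at every step) and explicitly flagging the implicit invertibility of $S$, which is what makes the expression well defined.

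As an alternative that reuses the \textbf{Block Matrix Inversion} lemma stated above, one could instead apply that lemma to the $(n+k)\times(n+k)$ block matrix $\mathbf{P} = \left(\begin{smallmatrix} A & -U \\ V & C^{-1}\end{smallmatrix}\right)$. Inverting $\mathbf{P}$ by eliminating the $A$-block yields the top-left block $A^{-1} - A^{-1}U(C^{-1}+VA^{-1}U)^{-1}VA^{-1}$, whereas eliminating the $C^{-1}$-block identifies that same top-left block with the Schur complement $(A + UCV)^{-1}$; equating the two expressions is precisely the Woodbury identity. This route is slicker, but it relies on the ``other'' Schur-complement form of the block inverse rather than the single form written in the lemma, so I would present the direct verification above as the primary argument and mention the block-matrix derivation only as a remark.
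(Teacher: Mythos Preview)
Your direct-verification argument is correct: the expansion of $(A+UCV)M$ into four terms, the factoring of the trailing three as $U[\,\cdot\,]VA^{-1}$, and the substitution $VA^{-1}U = S - C^{-1}$ all go through exactly as you describe, and your flagging of the implicit invertibility of $S = C^{-1} + VA^{-1}U$ is appropriate. The paper itself does not prove this lemma at all --- it states the Woodbury identity (together with block matrix inversion and Sherman--Morrison) and then writes ``These proofs are omitted because they are well established properties in the literature.'' So there is nothing to compare against; your proposal simply supplies a clean proof where the paper declines to give one.
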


\subsection{Proof of Theorem \ref{Thm 4.1}} \label{Appendix C.1}

\begin{proof}
Assuming vector-valued counterparts of Remark \ref{Remark: Structures on Lipschitz continuous functions}, the results of $\text{Var} \left( \hat{\tau}_{\text{gDiD}}\right)$ and $\text{Var} \left( \hat{\tau}_{\text{gDiD}}^{\boldsymbol{X}}\right)$ could be readily extended using their vector analogues, so the proofs are omitted. For notational simplicity, we rewrite $\text{Var} \left( \hat{\tau}_{\text{gDiD}}^{\boldsymbol{X}}\right)$ as 
\[
\begin{aligned}
\text{Var} \left( \hat{\tau}_{\text{gDiD}}^{\boldsymbol{X}} \right) &= \left(\frac{1}{n_1} + \frac{1}{n_1}\right) \left\{  \frac{T+1}{T} \sigma_{E}^2 + \vec{\Delta}_{\theta}^{\top} \Sigma_{\theta \theta} \vec{\Delta}_{\theta} -  \vec{\Delta}_{\theta}^{\top} \Sigma_{\theta X} \Sigma_{X X}^{-1} \Sigma_{X \theta} \vec{\Delta}_{\theta}   \right\}  \\
& \triangleq \left(\frac{1}{n_1} + \frac{1}{n_1}\right) \left\{  \frac{T+1}{T} \sigma_{E}^2 + \vec{\Delta}_{\theta}^{\top} \Sigma_{\tilde{\theta} \tilde{\theta}} \vec{\Delta}_{\theta}  \right\},
\end{aligned}
\]
where we denote $\Sigma_{\tilde{\theta} \tilde{\theta}} \triangleq \Sigma_{\theta \theta} - \Sigma_{\theta X} \Sigma_{X X}^{-1} \Sigma_{X \theta}$.  

We then only focus on $\text{Var} \left( \hat{\tau}_{\text{gDiD}}^{\boldsymbol{X},\boldsymbol{Y}^{\top}}\right)$. Using similar tricks as in Appendix \ref{Appendix B.Main results}, we can derive the following result after some algebraic manipulations: \[
\begin{aligned}
\text{Var} \left( \hat{\tau}_{\text{gDiD}}^{\boldsymbol{X},\boldsymbol{Y}^{\top}}\right) 
&= \frac{1}{n_1} \left\{ \vec{\beta}_{\theta,T}^{\top}  \Sigma_{\theta \theta} \vec{\beta}_{\theta,T} + \vec{\beta}_{x,T}^{\top}  \Sigma_{XX} \vec{\beta}_{x,T} + \sigma_{E}^2  + 2 \vec{\beta}_{\theta,T}^{\top}  \Sigma_{\theta X} \vec{\beta}_{x,T}    \right\}\\ 
& +  \frac{1}{n_1} \left\{ \vec{\beta}_{\theta,T}^{\top}  \left(\Sigma_{\theta \theta} - \begin{pmatrix}
\Sigma_{\theta X} \quad \Sigma_{\theta Y_T} 
\end{pmatrix}
\begin{pmatrix}
\Sigma_{XX} & \Sigma_{XY_T} \\
\Sigma_{Y_T X} & \Sigma_{Y_T Y_T}
\end{pmatrix}^{-1}
\begin{pmatrix}
\Sigma_{X \theta} \\
\Sigma_{Y_T \theta}
\end{pmatrix} \right) \vec{\beta}_{\theta,T}  + \sigma^2_E \right \}\\
&- \frac{1}{n_1} \bigg \{ \vec{\beta}_{\theta,T}^{\top} \left(\begin{pmatrix}
\Sigma_{\theta X} \quad \Sigma_{\theta Y_T} 
\end{pmatrix} \begin{pmatrix}
\Sigma_{XX} & \Sigma_{XY_T} \\
\Sigma_{Y_T X} & \Sigma_{Y_T Y_T}
\end{pmatrix}^{-1} \begin{pmatrix}
\Sigma_{X \theta} \\
\Sigma_{Y_T \theta}
\end{pmatrix}  \right) \vec{\beta}_{\theta,T} + \vec{\beta}_{x,T}^{\top} \Sigma_{X X} \vec{\beta}_{x,T} \\
& + 2 \vec{\beta}_{\theta,T}^{\top}  \begin{pmatrix}
\Sigma_{\theta X} \quad \Sigma_{\theta Y_T} 
\end{pmatrix} \begin{pmatrix}
\Sigma_{XX} & \Sigma_{XY_T} \\
\Sigma_{Y_T X} & \Sigma_{Y_T Y_T}
\end{pmatrix}^{-1}    \begin{pmatrix}
\Sigma_{XX}  \\
\Sigma_{Y_T X} 
\end{pmatrix} 
\vec{\beta}_{x,T}  \bigg \}, 
\end{aligned}
\]
where the upfront coefficient $(-\frac{1}{n_1})$ of the last term is not a mistake because $\operatorname{Cov}_{X,Y_0|Z=1}$ in $\text{Cov} \left( \bar{Y}_{1}^T,  \bar{Y}_{1}^{MT,{B}} \right)$ is the same as $\operatorname{Var}_{X,Y_0|Z=1}$ in $\Var \left( \bar{Y}_{1}^{MT,{B}} \right)$ so that we could cancel one term in which that upfront coefficient originally needs to be $-\frac{2}{n_1}$.

The key is then to decompose the terms multiplied with block matrix inversions. The two main terms are: 
$$
\begin{pmatrix}
\Sigma_{\theta X} \quad \Sigma_{\theta Y_T} 
\end{pmatrix}
\begin{pmatrix}
\Sigma_{XX} & \Sigma_{XY_T} \\
\Sigma_{Y_T X} & \Sigma_{Y_T Y_T}
\end{pmatrix}^{-1}
\begin{pmatrix}
\Sigma_{X \theta} \\
\Sigma_{Y_T \theta}
\end{pmatrix},
$$
and $$
\begin{pmatrix}
\Sigma_{\theta X} \quad \Sigma_{\theta Y_T} 
\end{pmatrix}
\begin{pmatrix}
\Sigma_{XX} & \Sigma_{XY_T} \\
\Sigma_{Y_T X} & \Sigma_{Y_T Y_T}
\end{pmatrix}^{-1}
\begin{pmatrix}
\Sigma_{X X} \\
\Sigma_{Y_T X}
\end{pmatrix}.
$$
We first employ the matrix-stacked notation on all $T$-period pre-treatment outcomes so that $Y_T = B_0 + B_\theta \theta + B_X X + \epsilon  $, where $Y_T \in \mathbf{R}^{T}$. Then we have
\[
\begin{aligned}
\Sigma_{\theta Y_T} = \Sigma_{\theta \theta} B_\theta^{\top} + \Sigma_{\theta X} B_X^{\top}, \quad \Sigma_{X Y_T} = \Sigma_{X\theta} B_{\theta}^\top + \Sigma_{XX} B_X^\top.     
\end{aligned}
\]
By Lemma \ref{Lemma: Block Matrix Inversion}, we have
$$
\left(
\begin{array}{cc}
\Sigma_{XX} & \Sigma_{XY_T} \\
\Sigma_{Y_TX} & \Sigma_{Y_TY_T}
\end{array}
\right)^{-1} = 
\left(
\begin{array}{cc}
\Sigma_{XX}^{-1} + \Sigma_{XX}^{-1} \Sigma_{XY_T} S^{-1} \Sigma_{Y_T X} \Sigma_{XX}^{-1} & - \Sigma_{XX}^{-1} \Sigma_{XY_T} S^{-1} \\
-S^{-1} \Sigma_{Y_T X} \Sigma_{X X}^{-1}  & S^{-1}
\end{array}
\right),
$$
where we denote $S$ as the Schur complement, and expanding the term leads to: \[
\begin{aligned}
S^{-1} &= (\Sigma_{Y_T Y_T} - \Sigma_{Y_T X} \Sigma_{X X}^{-1} \Sigma_{X Y_T})^{-1}    =  (B_\theta \left(\Sigma_{\theta \theta} - \Sigma_{\theta X} \Sigma_{X X}^{-1} \Sigma_{X \theta} \right) B_\theta^{\top}+ \Sigma_\epsilon)^{-1}. 
\end{aligned}
\]
This is because we have 
\[
\Sigma_{Y_T Y_T} = B_{\theta} \Sigma_{\theta \theta} B_{\theta}^\top + B_x \Sigma_{XX} B_x^\top + B_{\theta} \Sigma_{\theta X} B_x^\top + B_x \Sigma_{X \theta} B_{\theta}^\top + \Sigma_{\epsilon}
\]
and 
\[
\begin{aligned}
\Sigma_{Y_T X} \Sigma_{XX}^{-1} \Sigma_{XY_T}
& = (B_{\theta} \Sigma_{\theta X}  + B_X \Sigma_{XX}) \Sigma_{XX}^{-1} (\Sigma_{X\theta} B_{\theta}^\top + \Sigma_{XX} B_X^\top) \\ 
&= B_{\theta}  \Sigma_{\theta X}  \Sigma_{XX}^{-1} \Sigma_{X\theta} B_{\theta}^\top + B_{\theta}  \Sigma_{\theta X}  B_X^\top  + B_X \Sigma_{X\theta} B_{\theta}^\top  + B_X \Sigma_{XX} B_X^\top.
\end{aligned}
\]
Now we have 
\[
\begin{aligned}
& (\Sigma_{\theta X} \ \Sigma_{\theta Y_T})
\left(
\begin{array}{cc}
\Sigma_{XX} & \Sigma_{XY_T} \\
\Sigma_{Y_TX} & \Sigma_{Y_TY_T}
\end{array}
\right)^{-1} \\
= & (\Sigma_{\theta X} \quad \Sigma_{\theta \theta} B_\theta^{\top} + \Sigma_{\theta X} B_X^{\top} )  \left(
\begin{array}{cc}
\Sigma_{XX}^{-1} + \Sigma_{XX}^{-1} \Sigma_{XY_T} S^{-1} \Sigma_{Y_T X} \Sigma_{XX}^{-1} & - \Sigma_{XX}^{-1} \Sigma_{XY_T} S^{-1} \\
-S^{-1} \Sigma_{Y_T X} \Sigma_{X X}^{-1}  & S^{-1}
\end{array}
\right) \\
=& (\text{\ding{172}} \quad \text{\ding{173}} )
\end{aligned}
\]
where we obtain 
\[
\begin{aligned}
\text{\ding{172}} &= \Sigma_{\theta X} \left( \Sigma_{XX}^{-1} + \Sigma_{XX}^{-1} \Sigma_{XY_T} S^{-1} \Sigma_{Y_T X} \Sigma_{XX}^{-1} \right)
+ \left( \Sigma_{\theta \theta} B_{\theta}^\top + \Sigma_{\theta X} B_{x}^\top \right) \left( -S^{-1} \Sigma_{Y_T X} \Sigma_{XX}^{-1} \right) \\
&= \Sigma_{\theta X} \Sigma_{XX}^{-1}
+ \Sigma_{\theta X} \Sigma_{XX}^{-1} \Sigma_{XY_T} S^{-1} \Sigma_{Y_T X} \Sigma_{XX}^{-1}
- \Sigma_{\theta \theta} B_{\theta}^\top S^{-1} \Sigma_{Y_T X} \Sigma_{XX}^{-1}
- \Sigma_{\theta X} B_x^\top S^{-1} \Sigma_{Y_T X} \Sigma_{XX}^{-1} \\
&=  \Sigma_{\theta X} \Sigma_{XX}^{-1}
+ \left( \Sigma_{\theta X} \Sigma_{XX}^{-1} \Sigma_{XY_T} - \Sigma_{\theta \theta} B_{\theta}^\top - \Sigma_{\theta X} B_x^\top \right) S^{-1} \Sigma_{Y_T X} \Sigma_{XX}^{-1} \\
&= \Sigma_{\theta X} \Sigma_{XX}^{-1}
+ \left( \Sigma_{\theta X} \Sigma_{XX}^{-1} \Sigma_{X\theta} - \Sigma_{\theta \theta} \right) B_{\theta}^\top S^{-1} \Sigma_{Y_T X} \Sigma_{XX}^{-1}  \\
&= \Sigma_{\theta X} \Sigma_{XX}^{-1}
+ \left( \Sigma_{\theta X} \Sigma_{XX}^{-1} \Sigma_{X\theta} - \Sigma_{\theta \theta} \right) B_{\theta}^\top S^{-1} (B_{\theta} \Sigma_{\theta X}  + B_X \Sigma_{XX})  \Sigma_{XX}^{-1} 
\end{aligned}
\]
and 
\[
\begin{aligned}
\text{\ding{173}} &=   - \Sigma_{\theta X}  \Sigma_{XX}^{-1} \Sigma_{XY_T} S^{-1} + (\Sigma_{\theta \theta} B_\theta^{\top} + \Sigma_{\theta X} B_X^{\top} )  S^{-1} \\
&= - \Sigma_{\theta X}  \Sigma_{XX}^{-1} (\Sigma_{X\theta} B_{\theta}^\top + \Sigma_{XX} B_X^\top  )  S^{-1} + (\Sigma_{\theta \theta} B_\theta^{\top} + \Sigma_{\theta X} B_X^{\top}  )  S^{-1} \\
&=  \left(\Sigma_{\theta \theta} - \Sigma_{\theta X} \Sigma_{X X}^{-1} \Sigma_{X \theta} \right) B_\theta^{\top} S^{-1}
\end{aligned}
\]
Finally, we decompose the main terms mentioned previously using $(\text{\ding{172}} \quad \text{\ding{173}} )$ and rearrange terms accordingly: 
\[
\begin{aligned}
& \begin{pmatrix}
\Sigma_{\theta X} \quad \Sigma_{\theta Y_T} 
\end{pmatrix}
\begin{pmatrix}
\Sigma_{XX} & \Sigma_{XY_T} \\
\Sigma_{Y_T X} & \Sigma_{Y_T Y_T}
\end{pmatrix}^{-1}
\begin{pmatrix}
\Sigma_{X \theta} \\
\Sigma_{Y_T \theta}
\end{pmatrix} \\
= & (\text{\ding{172}} \quad \text{\ding{173}} )    
\begin{pmatrix}
\Sigma_{X \theta} \\
\Sigma_{Y_T \theta}
\end{pmatrix} \\
= &  \Sigma_{\theta X} \Sigma_{XX}^{-1} \Sigma_{X \theta}
+ \left( \Sigma_{\theta X} \Sigma_{XX}^{-1} \Sigma_{X\theta} - \Sigma_{\theta \theta} \right) B_{\theta}^\top S^{-1} (B_{\theta} \Sigma_{\theta X}  + B_X \Sigma_{XX})  \Sigma_{XX}^{-1} \Sigma_{X \theta} \\
+ & \left(\Sigma_{\theta \theta} - \Sigma_{\theta X} \Sigma_{X X}^{-1} \Sigma_{X \theta} \right) B_\theta^{\top} S^{-1} \Sigma_{Y_T \theta}\\
= & \Sigma_{\theta X} \Sigma_{XX}^{-1} \Sigma_{X \theta} + \left( \Sigma_{\theta X} \Sigma_{XX}^{-1} \Sigma_{X\theta} - \Sigma_{\theta \theta} \right) B_{\theta}^\top S^{-1} B_{\theta} \Sigma_{\theta X}  \Sigma_{XX}^{-1} \Sigma_{X \theta} +   \left( \Sigma_{\theta X} \Sigma_{XX}^{-1} \Sigma_{X\theta} - \Sigma_{\theta \theta} \right) B_{\theta}^\top S^{-1}  B_X  \Sigma_{X \theta}\\ 
+ & \left(\Sigma_{\theta \theta} - \Sigma_{\theta X} \Sigma_{X X}^{-1} \Sigma_{X \theta} \right) B_\theta^{\top} S^{-1} \left(B_\theta \Sigma_{\theta \theta} + B_X \Sigma_{X \theta} \right) \\
= & \Sigma_{\theta X} \Sigma_{XX}^{-1} \Sigma_{X \theta} + \left( \Sigma_{\theta X} \Sigma_{XX}^{-1} \Sigma_{X\theta} - \Sigma_{\theta \theta} \right) B_{\theta}^\top S^{-1} B_{\theta} \left( \Sigma_{\theta X} \Sigma_{XX}^{-1} \Sigma_{X\theta} - \Sigma_{\theta \theta} \right) \\
= & \Sigma_{\theta X} \Sigma_{XX}^{-1} \Sigma_{X \theta} + \mathbf{r}_{\theta|x} \Sigma_{\tilde{\theta} \tilde{\theta}}
\end{aligned}
\]
where we denote $\Sigma_{\tilde{\theta} \tilde{\theta}} \triangleq \Sigma_{\theta \theta}- \Sigma_{\theta X} \Sigma_{XX}^{-1} \Sigma_{X\theta}  $ and 
\[
\begin{aligned}
\mathbf{r}_{\theta|x} 
&=  \left(\Sigma_{\theta \theta} - \Sigma_{\theta X} \Sigma_{X X}^{-1} \Sigma_{X \theta} \right)  B_\theta^{\top} S^{-1} B_\theta \\
& = \left(\Sigma_{\theta \theta} - \Sigma_{\theta X} \Sigma_{X X}^{-1} \Sigma_{X \theta} \right) B_{\theta}^{\top} (B_\theta \left(\Sigma_{\theta \theta} - \Sigma_{\theta X} \Sigma_{X X}^{-1} \Sigma_{X \theta} \right) B_\theta^{\top}+ \Sigma_\epsilon)^{-1} B_\theta \\
& \triangleq \Sigma_{\tilde{\theta} \tilde{\theta}} B_{\theta}^{\top} (B_\theta \Sigma_{\tilde{\theta} \tilde{\theta}} B_\theta^{\top}+ \Sigma_\epsilon)^{-1} B_\theta    \in \mathbf{R}^{q \times q}.   
\end{aligned}
\]
Similarly, we can decompose the other main term and it easily follows that:
\[
\begin{aligned}
& \begin{pmatrix}
\Sigma_{\theta X} \quad \Sigma_{\theta Y_T} 
\end{pmatrix}
\begin{pmatrix}
\Sigma_{XX} & \Sigma_{XY_T} \\
\Sigma_{Y_T X} & \Sigma_{Y_T Y_T}
\end{pmatrix}^{-1}
\begin{pmatrix}
\Sigma_{X X} \\
\Sigma_{Y_T X}
\end{pmatrix} 
=  \Sigma_{\theta X}. 
\end{aligned}
\]
Combining all terms and simplifying the algebra leads to the desired result. 
\end{proof}

\subsection{Proof of Lemma \ref{Lemma 4.1}} \label{Appendix C.2}
\noindent Before presenting the proof, we first revisit the (limiting) expected values of generalized DiD estimators and generalized matching DiD estimators, either matched on observed covariates or both observed covariates and pre-treatment outcomes, proposed by \cite{ham2024benefits}. This could also be shown by just changing $Y_{i,T-1}$ to the stacked multi pre-treatment outcomes $\bar{Y}_{i,0:(T-1)}$ in Theorem \ref{Thm: Consistency of DiD and Matching DiD Estimators}, and the proof strategy in Appendix \ref{Appendix A} follows. 
\[ 
\begin{aligned}
\mathbb{E}\left[\hat{\tau}_{\text{gDiD}} \right] 
&= \mathbb{E}[Y_{i,T} \mid Z_i = 1] - \mathbb{E}[Y_{i,T} \mid Z_i = 0]  \\
&\quad - \left(  \mathbb{E}\left[\bar{Y}_{i,0:(T-1)} \mid Z_i = 1\right] - \mathbb{E}\left[\bar{Y}_{i,0:(T-1)} \mid Z_i = 0\right] \right) \\
\mathbb{E}\left[\hat{\tau}^{\boldsymbol{X}}_{\text{gDiD}} \right] 
&= \mathbb{E}[Y_{i,T} \mid Z_i = 1] - \mathbb{E}\left[\bar{Y}_{i,0:(T-1)} \mid Z_i = 1\right] \\
&\quad - \left( \mathbb{E}_{\boldsymbol{x} \mid Z_i = 1} \left[ \mathbb{E}\left[Y_{i,T} \mid Z_i = 0, \boldsymbol{X}_i = \boldsymbol{x}\right] \right] 
- \mathbb{E}_{\boldsymbol{x} \mid Z_i = 1} \left[ \mathbb{E}\left[\bar{Y}_{i,0:(T-1)} \mid Z_i = 0, \boldsymbol{X}_i = \boldsymbol{x} \right] \right] \right) \notag  \\
\mathbb{E}\left[\hat{\tau}_{\text{gDiD}}^{\boldsymbol{X},\boldsymbol{Y}^{\mathbf{T}}}\right] 
&= \mathbb{E}[Y_{i,T} \mid Z_i = 1] 
- \mathbb{E}_{(\boldsymbol{x}, \boldsymbol{y}^{\mathbf{T}}) \mid Z_i = 1} \left[ \mathbb{E}\left[Y_{i,T} \mid Z_i = 0, \boldsymbol{X}_i = \boldsymbol{x}, \boldsymbol{Y}^{\mathbf{T}}_i  = \boldsymbol{y}^{\mathbf{T}}_i\right] \right],
\end{aligned}
\]
where $\boldsymbol{Y}^{\mathbf{T}}_i = (Y_{i,0}, \cdots, Y_{i,T-1})$ represents the collection of all pre-treatment outcomes. 

\begin{proof}[Proof of Lemma \ref{Lemma 4.1}]
\cite{ham2024benefits} (their Theorem 5.1) showed that  
\[
\begin{aligned}
\text{Bias}  \left( \hat{\tau}_{\text{gDiD}} \right) &= \vec{\Delta}_{\theta}^{\top} \vec{\delta}_{\theta}  +\vec{\Delta}_{X}^{\top} \vec{\delta}_{x}, \\
\text{Bias}  \left( \hat{\tau}_{\text{gDiD}}^{\boldsymbol{X}} \right) &= 
   \vec{\Delta}_{\theta}^{\top} \left[\vec{\delta}_{\theta} - \Sigma_{\theta X} \Sigma_{X X}^{-1}  \vec{\delta}_{x} \right], \\ 
\text{Bias} \left(\hat{\tau}_{g\text{DiD}}^{\boldsymbol{X}, \boldsymbol{Y^T}}\right) &= \vec{\beta}_{\theta,T}^\top 
\left[
\vec{\delta}_{\theta} - (\Sigma_{\theta X} \ \Sigma_{\theta Y_T})
\left(
\begin{array}{cc}
\Sigma_{XX} & \Sigma_{XY_T} \\
\Sigma_{Y_TX} & \Sigma_{Y_TY_T}
\end{array}
\right)^{-1}
\left(
\begin{array}{c}
\vec{\delta}_x \\
\vec{\beta}_{\theta,0}^\top \vec{\delta}_{\theta} + \vec{\beta}_{x,0}^\top \vec{\delta}_x \\
\vdots \\
\vec{\beta}_{\theta,T-1}^\top \vec{\delta}_{\theta} + \vec{\beta}_{x,T-1}^\top \vec{\delta}_x
\end{array}
\right)
\right]. 
\end{aligned}
\]
As noted in the main text and Appendix \ref{Appendix B.First-order Sampling Approximations}, we take views of these biases to be finite-sample approximations of the limiting bias. We adopt the equality sign, albeit with a slight abuse of notation, to maintain consistency (literal meaning, not in statistical sense) with the variance formulation. 

Denote $\vec{\delta}_{\tilde{\theta}} \triangleq \vec{\delta}_{\theta}  - \Sigma_{\theta X} \Sigma_{X X}^{-1} 
 \vec{\delta}_{x}$ and then the first two bias results immediately follow after rearranging equations. It then suffices to show that we could rewrite $\text{Bias} \left(\hat{\tau}_{g\text{DiD}}^{\boldsymbol{X}, \boldsymbol{Y^T}}\right)$. 
 
Again, similar to the proof in Appendix \ref{Appendix C.1}, the key is to decompose the minus part inside the bracket with proper usage of block matrix inversions by Lemma \ref{Lemma: Block Matrix Inversion}. We first employ the matrix-stacked notation on all $T$-period pre-treatment outcomes so that $Y_T = B_0 + B_\theta \theta + B_X X + \epsilon  $, where $Y_T \in \mathbf{R}^{T}$. We also rewrite that 
$$
\left(
\begin{array}{c}
\vec{\delta}_x \\
\vec{\beta}_{\theta,0}^\top \vec{\delta}_{\theta} + \vec{\beta}_{x,0}^\top \vec{\delta}_x \\
\vdots \\
\vec{\beta}_{\theta,T-1}^\top \vec{\delta}_{\theta} + \vec{\beta}_{x,T-1}^\top \vec{\delta}_x
\end{array}
\right) =
\left(
\begin{array}{c}
\vec{\delta}_x \\
B_\theta \vec{\delta}_{\theta} + B_x\vec{\delta}_{x} 
\end{array}
\right).
$$
We have shown in Appendix \ref{Appendix C.1} that 
\[
\begin{aligned}
(\Sigma_{\theta X} \ \Sigma_{\theta Y_T})
\left(
\begin{array}{cc}
\Sigma_{XX} & \Sigma_{XY_T} \\
\Sigma_{Y_TX} & \Sigma_{Y_TY_T}
\end{array}
\right)^{-1} = (\text{\ding{172}} \quad \text{\ding{173}} ).
\end{aligned}
\]
Multiplying and rearranging terms leads to: 
\[
\begin{aligned}
& (\text{\ding{172}} \quad \text{\ding{173}} )    \left(
\begin{array}{c}
\vec{\delta}_x \\
B_\theta \vec{\delta}_{\theta} + B_x\vec{\delta}_{x} 
\end{array}
\right) \\
= &  \Sigma_{\theta X} \Sigma_{XX}^{-1} \vec{\delta}_x 
+ \left( \Sigma_{\theta X} \Sigma_{XX}^{-1} \Sigma_{X\theta} - \Sigma_{\theta \theta} \right) B_{\theta}^\top S^{-1} (B_{\theta} \Sigma_{\theta X}  + B_X \Sigma_{XX})  \Sigma_{XX}^{-1} \vec{\delta}_x \\
+ & \left(\Sigma_{\theta \theta} - \Sigma_{\theta X} \Sigma_{X X}^{-1} \Sigma_{X \theta} \right) B_\theta^{\top} S^{-1} B_\theta \vec{\delta}_{\theta} + \left(\Sigma_{\theta \theta} - \Sigma_{\theta X} \Sigma_{X X}^{-1} \Sigma_{X \theta} \right) B_\theta^{\top} S^{-1} B_x\vec{\delta}_{x} \\
= &  \Sigma_{\theta X} \Sigma_{XX}^{-1} \vec{\delta}_x + \left( \Sigma_{\theta X} \Sigma_{XX}^{-1} \Sigma_{X\theta} - \Sigma_{\theta \theta} \right) B_{\theta}^\top S^{-1} (B_{\theta} \Sigma_{\theta X})  \Sigma_{XX}^{-1} \vec{\delta}_x  +  \left(\Sigma_{\theta \theta} - \Sigma_{\theta X} \Sigma_{X X}^{-1} \Sigma_{X \theta} \right) B_\theta^{\top} S^{-1} B_\theta \vec{\delta}_{\theta} \\
= & \Sigma_{\theta X} \Sigma_{XX}^{-1} \vec{\delta}_x + \left(\Sigma_{\theta \theta} - \Sigma_{\theta X} \Sigma_{X X}^{-1} \Sigma_{X \theta} \right)  B_\theta^{\top} S^{-1} B_\theta 
\left( \vec{\delta}_\theta -  \Sigma_{\theta X}  \Sigma_{XX}^{-1} \vec{\delta}_x   \right)   \\
= &  \Sigma_{\theta X} \Sigma_{XX}^{-1} \vec{\delta}_x  + \mathbf{r}_{\theta|x} \left( \vec{\delta}_\theta -  \Sigma_{\theta X}  \Sigma_{XX}^{-1} \vec{\delta}_x   \right)
\end{aligned}
\]
where $\mathbf{r}_{\theta|x}$ is defined as previously. 

Finally we combine the terms with the upfront $\vec{\delta}_{\theta}$ and rearranging terms leads to the desired result. 
\end{proof}

\subsection{Proof of Lemma \ref{Lemma 4.2}}\label{Appendix C.3}
\begin{proof}
The proof holds for both non-residualized variance $\sigma_{\theta}$ and  residualized variance $ {\sigma}_{\tilde \theta}$. For notational simplicity, we use $\theta$ throughout the proof.

Since $\theta \in \mathbf{R}$, we have $r_{\theta|x} =
 B_{\theta}^{\top} (B_\theta B_\theta^{\top} \sigma_{\theta}^2+ \Sigma_\epsilon)^{-1} B_\theta \sigma_{\theta}^2 \in \mathbf{R}$, where 
$$B_{\theta} = [\beta_{\theta,0},\cdots, \beta_{\theta,T-1}]^{\top} \in \mathbf{R}^{T \times 1}, \quad  \Sigma_{\epsilon} = \sigma_E^2 \ \mathbf{I}_T \in \mathbf{R}^{T \times T}.$$ 
Note that $1 + \sigma_{\theta}^2 B_\theta^{\top} (\sigma_E^2)^{-1} \mathbf{I}_T B_\theta = 1 + \frac{\sigma_{\theta}^2}{\sigma_E^2}  \| B_\theta \|^2 > 1 \neq 0$, where 
$\| B_\theta \|^2 = B_\theta^\top B_\theta = \sum_{t=0}^{T-1} \beta_{\theta,t}^2$. 
Then applying the Sherman–Morrison formula by Lemma \ref{Lemma: Sherman–Morrison formula} to $(\Sigma_\epsilon + B_\theta B_\theta^{\top} \sigma_{\theta}^2)^{-1},$  and we have 
\[
\begin{aligned}
(\Sigma_\epsilon + B_\theta B_\theta^{\top} \sigma_{\theta}^2)^{-1} &= \frac{1}{\sigma_E^2} \mathbf{I}_T - \frac{\sigma_\theta^2 \cdot (\sigma_E^2)^{-1} \mathbf{I}_T B_\theta B_\theta^\top (\sigma_E^2)^{-1}  \mathbf{I}_T }{1 + \frac{\sigma_{\theta}^2}{\sigma_E^2}  \| B_\theta \|^2 }  
\\
& = \frac{1}{\sigma_E^2} \mathbf{I}_T - \frac{\sigma_\theta^2}{\sigma_E^4 + \sigma_\theta^2 \sigma_E^2 \|B_\theta\|^2} B_\theta B_\theta^\top \\ 
&=  \frac{1}{\sigma_E^2} \left( \mathbf{I}_T - \frac{\sigma_\theta^2}{\sigma_E^2 + \sigma_\theta^2 \|B_\theta\|^2} B_\theta B_\theta^\top \right)    
\end{aligned}
\]
Then 
\[
\begin{aligned}
r_{\theta|x} &=
 B_{\theta}^{\top} (B_\theta B_\theta^{\top} \sigma_{\theta}^2+ \Sigma_\epsilon)^{-1} B_\theta \sigma_{\theta}^2  \\
&= B_{\theta}^{\top} \frac{\sigma_{\theta}^2}{\sigma_E^2} \left( \mathbf{I}_T - \frac{\sigma_\theta^2}{\sigma_E^2 + \sigma_\theta^2 \|B_\theta\|^2} B_\theta B_\theta^\top \right) B_\theta  \\
&= \frac{\sigma_{\theta}^2}{\sigma_E^2} \left(B_{\theta}^{\top} B_\theta - \frac{\sigma_\theta^2}{\sigma_E^2 + \sigma_\theta^2 \|B_\theta\|^2} B_\theta^\top B_\theta B_\theta^\top B_\theta  \right) \\
&= \frac{\sigma_{\theta}^2}{\sigma_E^2} \left( \| B_\theta \|^2  - \frac{\sigma_\theta^2}{\sigma_E^2 + \sigma_\theta^2 \|B_\theta\|^2}  \| B_\theta \|^4 \right) \\
&= \frac{\sigma_{\theta}^2}{\sigma_E^2} \left(  \frac{\| B_\theta \|^2 (\sigma_E^2 + \sigma_\theta^2 \|B_\theta\|^2) - \sigma_\theta^2 \| B_\theta \|^4}{\sigma_E^2 + \sigma_\theta^2 \|B_\theta\|^2}  \right) \\
&= \frac{\sigma_\theta^2 \| B_\theta \|^2}{\sigma_E^2 + \sigma_\theta^2 \|B_\theta\|^2}  
\end{aligned}
\]
Lastly, plugging $\| B_\theta \|^2 = B_\theta^\top B_\theta = \sum_{t=0}^{T-1} \beta_{\theta,t}^2$, we reach to the desired results. 
\end{proof}

\clearpage

\subsection{Proof of Proposition \ref{Prop 4.1}}\label{Appendix C.4}
\begin{proof}

The proof holds for both non-residualized variance $\Sigma_{\theta \theta}$ and  residualized variance $ \Sigma_{\tilde{\theta}\tilde{\theta}}$. Therefore, the result applies to all cases in which the reliability term appears. For clarity, we simply use $\theta$ throughout the proof. 

By positive-definiteness assumption of $\Sigma_{\theta \theta}$, we could invert the matrix. Denote $A := B_\theta \Sigma_{\theta\theta}^{1/2} \in \mathbf{R}^{T \times q}$, and then we could rewrite $\mathbf{r}_{\theta|x} \triangleq  \Sigma_{\theta\theta}^{1/2} A^\top (A A^\top + \sigma_E^2 \mathbf{I}_T)^{-1} A  \Sigma_{\theta\theta}^{-1/2}.$
We also define:
\[
M_{T} := A^\top (A A^\top + \sigma_E^2 \mathbf{I}_T )^{-1} A \in \mathbf{R}^{q \times q}
\quad \text{so that} \quad
\mathbf{r}_{\theta|x} = \Sigma_{\theta\theta}^{1/2} M_{T} \Sigma_{\theta\theta}^{-1/2}
\]
It then suffices to show that $M_T \xrightarrow{\|\cdot\|_2} \mathbf{I}_q \ \text{as } T \to \infty.$  Applying 
the Woodbury matrix identity by Lemma \ref{Lemma: Woodbury Matrix Identity} to the inversion part, we have: 
\[
\begin{aligned}
(\sigma_E^2 \mathbf{I}_T + A A^\top )^{-1} &=  (\sigma_E^2)^{-1}\mathbf{I}_T - (\sigma_E^2)^{-1}\mathbf{I}_T A \left(\mathbf{I}_q^{-1} + A^{\top} (\sigma_E^2)^{-1} \mathbf{I}_T A \right)^{-1} A^{\top} (\sigma_E^2)^{-1} \mathbf{I}_T   \\  
&= (\sigma_E^2)^{-1}\mathbf{I}_T  - (\sigma_E^2)^{-2} A \left(\mathbf{I}_q + (\sigma_E^2)^{-1} A^{\top} A \right)^{-1} A^{\top},   
\end{aligned}
\]
which then implies that 
\[
\begin{aligned}
M_T &= (\sigma_E^2)^{-1}A^{\top}A - (\sigma_E^2)^{-2} A^{\top}A \left(\mathbf{I}_q + (\sigma_E^2)^{-1} A^{\top} A \right)^{-1} A^{\top}A \\
&= (\sigma_E^2)^{-1} A^{\top}A\left[\mathbf{I}_q -   (\sigma_E^2)^{-1} \left(\mathbf{I}_q + (\sigma_E^2)^{-1} A^{\top}  A \right)^{-1} A^{\top}A \right] \\
&= (\sigma_E^2)^{-1} A^{\top}A\left[\mathbf{I}_q -    \left(\sigma_E^2 \mathbf{I}_q +   A^{\top}  A \right)^{-1} A^{\top}A \right] \\
&= (\sigma_E^2)^{-1} A^{\top}A \left[\sigma_E^2 \left(\sigma_E^2 \mathbf{I}_q + A^{\top}A \right)^{-1} \right]
\end{aligned}
\]
The last line follows from the distributive property of an identity matrix that 
\[
\begin{aligned}
\mathbf{I}_q &= \left(\sigma_E^2 \mathbf{I}_q +   A^{\top}  A \right)^{-1}\left(\sigma_E^2 \mathbf{I}_q +   A^{\top}  A \right) \\
& \Rightarrow \mathbf{I}_q  - \left(\sigma_E^2 \mathbf{I}_q +   A^{\top}  A \right)^{-1}  A^{\top}A = \sigma_E^2 \left(\sigma_E^2 \mathbf{I}_q + A^{\top}A \right)^{-1} 
\end{aligned}
\]
Then we have
\[
\begin{aligned}
M_{T} &= A^\top A (A^\top A + \sigma_E^2 \mathbf{I}_q)^{-1} \\
&= (A^\top A + \sigma_E^2 \mathbf{I}_q - \sigma_E^2 \mathbf{I}_q)  (A^\top A + \sigma_E^2 \mathbf{I}_q)^{-1} \\
&= \mathbf{I}_q - \sigma_E^2 \mathbf{I}_q (A^\top A + \sigma_E^2 \mathbf{I}_q)^{-1} \\
&= \mathbf{I}_q - \sigma_E^2 (A^\top A + \sigma_E^2 \mathbf{I}_q)^{-1}  
\end{aligned}
\]
Finally, by assumption, we have 
\[
\frac{1}{T} \sum_{t=0}^{T-1} \vec{\beta}_{\theta,t} \vec{\beta}_{\theta,t}^\top \triangleq \frac{1}{T} B_\theta^\top B_\theta \to Q \succ 0 \quad \text{as } T \to \infty 
\quad \Rightarrow \quad
B_\theta^\top B_\theta =TQ +o(T) 
\]
Then 
\[
\begin{aligned}
A^\top A &= \Sigma_{\theta\theta}^{1/2} B_\theta^\top B_\theta \Sigma_{\theta\theta}^{1/2} \\
&= T \cdot \Sigma_{\theta\theta}^{1/2} Q \Sigma_{\theta\theta}^{1/2} + o(T) 
\end{aligned}
\]
where the second equality follows by that $\Sigma_{\theta\theta}^{1/2}$ does not scale with $T$ and thus the remaining order terms still stay bounded in the spectral norm. 

Then by continuous mapping theorem, in which we assume $Q \succ 0$ and thus $ \Sigma_{\theta\theta}^{1/2} Q \Sigma_{\theta\theta}^{1/2}\succ 0$, so the whole matrix after summation is invertible and continuous at the point, we have 
$$
\sigma_E^2 \left(A^\top A + \sigma_E^2 \mathbf{I}_q\right)^{-1}  = \frac{\sigma_E^2}{T} \left(\Sigma_{\theta\theta}^{1/2} Q \Sigma_{\theta\theta}^{1/2} + \frac{o(T)}{T} + \frac{\sigma_E^2 \mathbf{I}_q}{T}\right)^{-1} \xrightarrow{\|\cdot\|_2} 0 \quad \text{as } T \to \infty. 
$$
Therefore $M_T \xrightarrow{\|\cdot\|_2} \mathbf{I}_q \ \text{as } T \to \infty.$
\end{proof}

\clearpage

\section{Proofs of Theoretical Results in Section \ref{section: Determining What to Match}}  
\noindent We state some useful tricks that will be used out through this section.

\begin{lemma} \label{Lemma: Residualization trick}
Suppose the conditions of Corollary \ref{Corollary 4.1} hold. We have 
$$\hat{\vec{\beta}}_{x,t} \xrightarrow{p} \vec{\beta}_{x,t} +  \Sigma_{XX}^{-1} \Sigma_{X \theta } \vec{\beta}_{\theta,t}.$$
and 
\[
\tilde{Y}_{i,t} \overset{p}{\to} \beta_{0,t} + \tau_i \mathbf{1}(t=T) + \vec{\beta}_{\theta,t}^{\top} (\boldsymbol{\theta}_i - \Sigma_{\theta X} \Sigma_{XX}^{-1} \boldsymbol{X}_i) + \epsilon_{i,t} \triangleq \vec{\beta}_{\theta,t}^{\top} \boldsymbol{\tilde{\theta}}_i + \nu_{i,t},
\]
where we redefine $\boldsymbol{\tilde{\theta}}_i \triangleq \boldsymbol{\theta}_i - \Sigma_{\theta X} \Sigma_{XX}^{-1} \boldsymbol{X}_i$ and $\nu_{i,t} \triangleq \beta_{0,t} + \tau_i \cdot Z_i \mathbf{1}(t=T) + \epsilon_{i,t} $ with $\mathbb{E}[\epsilon_{i,t}] = 0$.
\end{lemma}

\begin{proof}
Utilizing the elementary result from linear regression, we have that $\hat{\vec{\beta}}_{x,t}$ converges in probability to the coefficient of $\boldsymbol{X}_i$ in $\mathbb{E}[Y_{i,t} \mid Z_i = 0, \boldsymbol{X}_i]$. Specifically, we have 
\begin{align*}
    \mathbb{E}[Y_{i,t} \mid Z_i = 0, \boldsymbol{X}_i] &= \beta_{0,t} + \vec{\beta}_{\theta,t}^\top \left(\vec{\mu}_{\theta,0} + \Sigma_{\theta X} \Sigma_{XX}^{-1} (\boldsymbol{X}_i - \vec{\mu}_{x,0}) \right) + \vec{\beta}_{x,t}^\top \boldsymbol{X}_i \\
    &= \beta_{0,t} + 
\vec{\beta}_{\theta,t}^\top  \left(\vec{\mu}_{\theta,0} - \Sigma_{\theta X} \Sigma_{XX}^{-1} \vec{\mu}_{x,0} \right) + \left( \vec{\beta}_{x,t}^\top + \vec{\beta}_{\theta,t}^\top \Sigma_{\theta X} \Sigma_{XX}^{-1} \right) \boldsymbol{X}_i,
\end{align*}
which holds because $ \mathbb{E}[\boldsymbol{\theta}_i \mid Z_i = 0, \boldsymbol{X}_i] = \vec{\mu}_{\theta,0} + \Sigma_{\theta X} \Sigma_{XX}^{-1} (\boldsymbol{X}_i - \vec{\mu}_{x,0}).$ 

The second part follows directly from the residualization of the outcomes in Equation\eqref{Eq: resdiualized outcome} with continuous mapping theorem.
\end{proof}

\subsection{Proofs of Estimation Strategies for Relative (Absolute and Non-Absolute) Bias Reduction} \label{Appendix: Proofs of Estimation Strategies for (Absolute and Non-Absolute) Bias Reduction}
\noindent In this section, we first prove estimation strategies for relative absolute bias reduction in Appendix \ref{Appendix: Proof of Prop: Relative Absolute Bias Reduction Pair 2} and then prove estimation strategies for relative non-absolute bias reduction in Appendix \ref{Appendix: Proofs for Relative (Non-Absolute) Bias Reduction}.

\subsubsection{Proof of Proposition \ref{Prop: Relative Absolute Bias Reduction Pair 2}} \label{Appendix: Proof of Prop: Relative Absolute Bias Reduction Pair 2}
\noindent  \cite{ham2024benefits}'s Theorem 6.2. (proof in their Appendix G.2.) doesn't provide a proof for what we propose in Proposition \ref{Prop: Relative Absolute Bias Reduction Pair 2}, so we complete it here. 
\begin{proof}
Assume $\theta \in \mathbf{R}$ and $T>1$. From Lemma \ref{Lemma 4.1} and Lemma \ref{Lemma 4.2}, we know 
\[
\left| \Bias    \left( \hat{\tau}_{\text{gDiD}}^{\boldsymbol{X}} \right)\right| - \left|\Bias    \left( \hat{\tau}_{g\text{DiD}}^{\boldsymbol{X}, \boldsymbol{Y^T}} \right) \right|  = \underbrace{|{\Delta}_\theta   \delta_{\tilde{\theta}}|}_{\text{first term}}   - \underbrace{|\beta_{\theta,T} \delta_{\tilde{\theta}} (1-r_{\theta|x})|}_{\text{second term}}.
\]
\textbf{Part 1:} We start with consistent estimation of the first term. From Lemma \ref{Lemma: Residualization trick} and Assumption \ref{Assumption: errors}, for the two pre-treatment periods $s \neq q <T$ that we have stable effects of $\theta$, i.e., ${\beta}_{\theta,s} =  {\beta}_{\theta,q}$, from the assumption in Proposition \ref{Prop: Relative Absolute Bias Reduction Pair 2}, we have
\[
{\Var}\left(\tilde{Y}_{i,s} - {\tilde{Y}}_{i,q} \right) = (\beta_{\theta,s} -{\beta}_{\theta,q})^2 \sigma_{\tilde{\theta}}^2 + 2{\sigma}_E^2 =  2{\sigma}_E^2.
\]
Then we have the consistent estimator of the residual error using continuous mapping theorem, i.e., 
$$\hat{\sigma}_E^2 = \frac{1}{2} \widehat{\Var} \left( \tilde{Y}_{i, s} - \tilde{Y}_{i, q} \right) \xrightarrow{p}  {\sigma}_E^2;$$
as well as the consistent estimator of $\beta_{\theta,\cdot}$ in its absolute value, i.e., $\forall t =0,1,\cdots,T$,
\[
\hat{\beta}_{\theta, t} = \sqrt{ \widehat{\Var}(\tilde{Y}_{i,t} \mid Z_i = 0) - \hat{\sigma}_E^2 } \xrightarrow{p} |\beta_{\theta, t}  \sigma_{\tilde{\theta}}|. 
\]
Similarly, let ${\bar \beta}_{\theta, \text{pre}} = \frac{1}{T} \sum_{t=0}^{T-1}{\beta}_{\theta, t}$. We have 
\[
{\Var}\left(\tilde{Y}_{i,T} - \bar{\tilde{Y}}_{i,0:(T-1)} \mid Z_i = 0\right) = (\beta_{\theta,T} - \bar\beta_{\theta,\text{pre}} )^2 \sigma_{\tilde{\theta}}^2 + \frac{T+1}{T}{\sigma}_E^2,
\]
and it follows that 
\[
\hat{\Delta}_\theta = \sqrt{ \widehat{\Var}\left(\tilde{Y}_{i,T} - \bar{\tilde{Y}}_{i,0:(T-1)} \mid Z_i = 0\right) - \frac{T+1}{T} \hat{\sigma}_E^2 } \xrightarrow{p} \left| (\beta_{\theta,T} - \bar\beta_{\theta,\text{pre}} ) \sigma_{\tilde{\theta}} \right|.
\]
Now from Lemma \ref{Lemma: Residualization trick}, by continuous mapping theorem, we can show 
$$\widehat\delta_{\tilde{\theta}} = \frac{
\left|\widehat{\mathbb{E}}\!\left[\bar{\tilde{Y}}_{i,0:(T-1)} \mid Z_i = 1\right]
-
\widehat{\mathbb{E}}\!\left[\bar{\tilde{Y}}_{i,0:(T-1)} \mid Z_i = 0\right]  
\right|}{
\hat{\bar \beta}_{\theta, \text{pre}}
} \xrightarrow{p} \frac{|\bar\beta_{\theta, \text{pre}}  \delta_{\tilde{\theta}} |}{|\bar\beta_{\theta, \text{pre}}  \sigma_{\tilde{\theta}} |}  \underbrace{=}_{|a \cdot b| = |a| \cdot |b|} \frac{|\delta_{\tilde{\theta}}|}{|\sigma_{\tilde{\theta}}|}.$$
Recall $\sigma_{\tilde{\theta}} \ge 0$. Combining terms lead to 
\[
\hat{\Delta}_\theta \hat\delta_{\tilde{\theta}}\xrightarrow{p}   \left| (\beta_{\theta,T} - \bar\beta_{\theta,\text{pre}} ) \sigma_{\tilde{\theta}} \right| \cdot \frac{|\delta_{\tilde{\theta}}|}{|\sigma_{\tilde{\theta}}|} = |(\beta_{\theta,T} - \bar\beta_{\theta,\text{pre}}) \cdot \delta_{\tilde{\theta}} | = |{\Delta}_\theta   \delta_{\tilde{\theta}}|.
\]
\textbf{Part 2:} Now we show the consistent estimation of the second term. By similar arguments as in Part 1, it immediately follows that 
\[
\widehat{\beta}_{\theta,T} \widehat{\delta}_{\tilde{\theta}} \xrightarrow{p} |{\beta}_{\theta,T} {\delta}_{\tilde{\theta}}|.
\]
It then suffices to show the consistent estimation of the reliability term which is always non-negative by Definition \ref{Def: reliabiltiy}. In particular, we have
\[
\hat{r}_{\theta|x} = \frac{\sum_{t=0}^{T-1} \hat{\beta}_{\theta, t}^2}{\sum_{t=0}^{T-1} \hat{\beta}_{\theta, t}^2 + \hat{\sigma}_E^2}
\xrightarrow{p} \frac{\sum_{t=0}^{T-1} {\beta}_{\theta, t}^2 \sigma^2_{\tilde{\theta}}} {\sum_{t=0}^{T-1} {\beta}_{\theta, t}^2 \sigma^2_{\tilde{\theta}} + {\sigma}_E^2}  = r_{\theta|x}. 
\]    
By another continuous mapping theorem, this concludes the proof. 
\end{proof}

\subsubsection{Proofs of Estimation Strategies for Relative (Non-Absolute) Bias Reduction} \label{Appendix: Proofs for Relative (Non-Absolute) Bias Reduction}
\begin{proposition}[\textbf{Estimation Strategies for Relative Bias Reduction}] \label{Prop: Estimation Strategies for Relative Bias Reduction} 
Suppose the conditions of Lemma \ref{Lemma 4.1} hold. Then we have
\begin{align*}
\widehat{\Delta}_{\tau_x}  &\xrightarrow{p} \Bias  \left( \hat{\tau}_{\text{gDiD}} \right) - \Bias    \left( \hat{\tau}_{\text{gDiD}}^{\boldsymbol{X}} \right) \\
\widehat{\Delta}_{\tau_{x,xy}}  &\xrightarrow{p}  \Bias    \left( \hat{\tau}_{\text{gDiD}}^{\boldsymbol{X}} \right) - \Bias    \left( \hat{\tau}_{g\text{DiD}}^{\boldsymbol{X}, \boldsymbol{Y^T}} \right)  
\end{align*}
where
\begin{align*}
\widehat{\Delta}_{\tau_x} &:= \left(\hat{\vec{\beta}}_{x,T} - \frac{\sum_{t=0}^{T-1} \hat{\vec{\beta}}_{x,t}}{T}\right)^{\top} \hat{\vec{\delta}}_x \\
\widehat{\Delta}_{\tau_{x,xy}}  &:=
  \widehat{\Cov}\left( \tilde{Y}_{i, T}, \tilde{Y}_{i, 0:T-1} \mid Z_i = 0 \right) \left(\widehat{\Var}\left( \tilde{Y}_{i, 0:T-1}\right)\right)^{-1} \left(\widehat{\mathbb{E}}\left[ \tilde{Y}_{i, 0:T-1}  \mid Z_i = 1 \right]
- \widehat{\mathbb{E}}\left[ \tilde{Y}_{i, 0:T-1} \mid Z_i = 0 \right] \right) \\
& - \left(\widehat{\mathbb{E}}\left[ \bar{\tilde{Y}}_{i, 0:T-1}  \mid Z_i = 1 \right]
- \widehat{\mathbb{E}}\left[ \bar{\tilde{Y}}_{i, 0:T-1}  \mid Z_i = 0 \right] \right)
\end{align*}
and $\tilde{Y}_{i,t}$ is the aforementioned residualized outcome at period $t$ with $\hat{\vec{\beta}}_{x,t}$ as the corresponding regression coefficients; $\bar{\tilde{Y}}_{i,\cdot}$ is thus an average of the residualized outcomes; $\widehat{\mathbb{E}},\widehat{\Var},\widehat{\Cov}$, $\hat{\vec{\delta}}_x$ are the plug-in estimators of mean, variance, covariance, ${\vec{\delta}}_x$ (defined in Equation \ref{eq: specific PT under LSEM}) respectively.
\end{proposition}
\noindent We proceed this part with general multi-dimensional latent variable as proofs do not require $\boldsymbol{\theta} \in \mathbf{R}.$

\begin{proof}[Proof of $\widehat{\Delta}_{\tau_{x}}$]
From Lemma \ref{Lemma 4.1}, we know that 
\[
\text{Bias}  \left( \hat{\tau}_{\text{gDiD}} \right) - \text{Bias}    \left( \hat{\tau}_{\text{gDiD}}^{\boldsymbol{X}} \right) = \left( \Sigma_{X X}^{-1} \Sigma_{X \theta} \vec{\Delta}_{\theta}   + \vec{\Delta}_{X} \right)^{\top} \vec{\delta}_{x}.
\]
From Lemma \ref{Lemma: Residualization trick}, it follows that  $\hat{\vec{\beta}}_{x,T} - \frac{\sum_{t=0}^{T-1} \hat{\vec{\beta}}_{x,t}}{T} \xrightarrow{p} \vec{{\Delta}}_X + \Sigma_{XX}^{-1} \Sigma_{X \theta} \vec{\Delta}_{\theta}.$ \\
\\
As $\boldsymbol{X}$ are observed covariates, by WLLN, it easily follows that $\hat{\vec{\delta}}_x \xrightarrow{p} \vec{\delta}_{x}$ and then by continuous mapping theorem,  it leads to the desired result that  $$\left(\hat{\vec{\beta}}_{x,T} - \frac{\sum_{t=0}^{T-1} \hat{\vec{\beta}}_{x,t}}{T}\right)^{\top} \hat{\vec{\delta}}_x \xrightarrow{p} \left( \Sigma_{X X}^{-1} \Sigma_{X \theta} \vec{\Delta}_{\theta}   + \vec{\Delta}_{X} \right)^{\top} \vec{\delta}_{x}.$$  
\end{proof}

\begin{proof}[Proof of $\widehat{\Delta}_{\tau_{x,xy}}$]
From Lemma \ref{Lemma 4.1}, we know that 
\[
\text{Bias}    \left( \hat{\tau}_{\text{gDiD}}^{\boldsymbol{X}} \right) - \text{Bias}     \left( \hat{\tau}_{g\text{DiD}}^{\boldsymbol{X}, \boldsymbol{Y^T}} \right) = \underbrace{\vec{\beta}_{\theta,T}^{\top} \mathbf{r}_{\theta|x} \vec{\delta}_{\tilde{\theta}}}_{\text{first term}} - \underbrace{\frac{\sum_{t=0}^{T-1} \vec{\beta}_{\theta,t}}{T} \vec{\delta}_{\tilde{\theta}}}_{\text{second term}}
\]
\textbf{Part 1:} We start with consistent estimation of the second term. From Lemma \ref{Lemma: Residualization trick}, for any pre-intervention periods, i.e., $t \le T-1$, we have $$\widehat{\mathbb{E}}\left[ \tilde{Y}_{i, t} \mid Z_i = 1 \right]
- \widehat{\mathbb{E}}\left[ \tilde{Y}_{i, t} \mid Z_i = 0 \right] \xrightarrow{p} {\mathbb{E}}\left[ \tilde{Y}_{i, t} \mid Z_i = 1 \right]
-{\mathbb{E}}\left[ \tilde{Y}_{i, t} \mid Z_i = 0 \right] 
 = \vec{\beta}_{\theta,t}^{\top} \vec{\delta}_{\tilde{\theta}}.$$
Averaging over all those periods leads to the consistent estimation of the second term, i.e., 
\[
\widehat{\mathbb{E}}\left[ \bar{\tilde{Y}}_{i, 0:T-1}  \mid Z_i = 1 \right]
- \widehat{\mathbb{E}}\left[ \bar{\tilde{Y}}_{i, 0:T-1}  \mid Z_i = 0 \right]  \xrightarrow{p}  \frac{\sum_{t=0}^{T-1} \vec{\beta}_{\theta,t}}{T} \vec{\delta}_{\tilde{\theta}}.
\]
\textbf{Part 2:} Now it remains to show consistent estimation of the first term. In particular,
$$
\widehat{\text{Cov}}\left( \tilde{Y}_{i, T}, \tilde{Y}_{i, 0:T-1} \mid Z_i = 0 \right) \left(\widehat{\text{Var}}\left( \tilde{Y}_{i, 0:T-1}\right)\right)^{-1} \left(\widehat{\mathbb{E}}\left[ \tilde{Y}_{i, 0:T-1}  \mid Z_i = 1 \right]
- \widehat{\mathbb{E}}\left[ \tilde{Y}_{i, 0:T-1} \mid Z_i = 0 \right] \right) \xrightarrow{p} \vec{\beta}_{\theta,t}^{\top} \mathbf{r}_{\theta|x} \vec{\delta}_{\tilde{\theta}}.
$$
For any pre-intervention periods, i.e., $t \le T-1$, by Assumption \ref{Assumption: errors}, we have  $$\widehat{\text{Var}} \left( \tilde{Y}_{i,t} \right)  \xrightarrow{p} \vec{\beta}_{\theta,t}^{\top} \Sigma_{\tilde{\theta}\tilde{\theta}} \vec{\beta}_{\theta,t} + {\sigma}_E^2.$$
Generalizing such $t$-th diagonal entry of the matrix to the full matrix leads to $$\widehat{\text{Var}} \left( \tilde{Y}_{i,0:T-1} \right)  \xrightarrow{p} B_\theta \Sigma_{\tilde{\theta} \tilde{\theta}} B_\theta^{\top}+ {\sigma}_E^2 \mathbf{I}_{T \times T}= B_\theta \Sigma_{\tilde{\theta} \tilde{\theta}} B_\theta^{\top} + \Sigma_\epsilon, $$ where this follows by independence between $\tilde{\theta}$ and $\epsilon_{i,t}$ and $\{\epsilon_{i,t}\}_{t=0}^{T}$ is i.i.d. from Assumption \ref{Assumption: errors}.
Similarly, we can obtain  $\widehat{\text{Cov}}\left( \tilde{Y}_{i, T}, \tilde{Y}_{i, 0:T-1} \mid Z_i = 0 \right)   \xrightarrow{p} \vec{\beta}_{\theta,T}^{\top}  \Sigma_{\tilde{\theta} \tilde{\theta}} B_\theta^{\top}.$

Also, stacking $\tilde{Y}$ (but not averaging) leads to $\widehat{\mathbb{E}}\left( \tilde{Y}_{i, 0:T-1}  \mid Z_i = 1 \right)
- \widehat{\mathbb{E}}\left( \tilde{Y}_{i, 0:T-1} \mid Z_i = 0 \right) \xrightarrow{p} B_{\theta} \vec{\delta}_{\tilde{\theta}}.$ Combining all terms and applying continuous mapping theorem leads to the desired result.
\end{proof}

\clearpage

\subsection{Proof of Proposition \ref{Prop: Estimation Strategies for Relative Variance Reduction}} \label{Appendix: Proof of Estimation Strategies for Relative Variance Reduction} 
\noindent We proceed this part with general multi-dimensional latent variable as proofs do not require $\boldsymbol{\theta} \in \mathbf{R}.$

\begin{proof}[Proof of $\widehat{\Var_{\text{core}}}   \left( \hat{\tau}_{g\text{DiD}}\right)$]
By Lemma \ref{Lemma: Residualization trick}, we have
\[
\hat{\vec{\beta}}_{x,T} - \frac{\sum_{t=0}^{T-1} \hat{\vec{\beta}}_{x,t}}{T} \xrightarrow{p} \vec{{\Delta}}_X + \Sigma_{XX}^{-1} \Sigma_{X \theta} \vec{\Delta}_{\theta},
\]
By WLLN, it follows that $\hat{\Sigma}_{XX} \xrightarrow{p} {\Sigma}_{XX}$. Then by continuous mapping theorem,  
\[
\begin{aligned}
& \left(\hat{\vec{\beta}}_{x,T} - \frac{\sum_{t=0}^{T-1} \hat{\vec{\beta}}_{x,t}}{T}\right)^{\top} \hat{\Sigma}_{XX} \left(\hat{\vec{\beta}}_{x,T} - \frac{\sum_{t=0}^{T-1} \hat{\vec{\beta}}_{x,t}}{T}\right) \xrightarrow{p}  \vec{\Delta}_{X}^{\top} \Sigma_{X X} \vec{\Delta}_{X}  + 2 \vec{\Delta}_{\theta}^{\top}   \Sigma_{\theta X} \vec{\Delta}_{X} +  \vec{\Delta}_{\theta}^{\top} \Sigma_{\theta X} \Sigma_{X X}^{-1} \Sigma_{X \theta} \vec{\Delta}_{\theta}.
\end{aligned}
\]  
Further by Assumption \ref{Assumption: errors}, we have  $$\widehat{\text{Var}} \left( \tilde{Y}_{i,T} - \frac{\sum_{t=0}^{T-1}\tilde{Y}_{i,t}}{T} \mid Z_i = 0  \right)  \xrightarrow{p} \vec{\Delta}_{\theta}^{\top} \Sigma_{\tilde{\theta} \tilde{\theta}} \vec{\Delta}_{\theta} + \frac{(T+1) {\sigma}_E^2}{T}.$$  
Now by definition of $\Sigma_{\tilde{\theta} \tilde{\theta}}$ and by Slutsky's theorem, we obtain the desired result. 
\end{proof}

\begin{proof}[Proof of $\widehat{\Var_{\text{core}}}   \left( \hat{\tau}_{g\text{DiD}}^{\boldsymbol{X}} \right)$]
This immediately follows from the last Proof of $\widehat{\Var_{\text{core}}}   \left( \hat{\tau}_{g\text{DiD}}\right)$.
\end{proof}

\begin{proof}[Proof of $\widehat{\Var_{\text{core}}}   \left( \hat{\tau}_{g\text{DiD}}^{\boldsymbol{X}, \boldsymbol{Y^T}} \right)  $]
By similar arguments as above, we have $$\widehat{\text{Var}}\left( \tilde{Y}_{i, T} \mid Z_i = 0 \right)   \xrightarrow{p} \vec{\beta}_{\theta,T}^{\top}  \Sigma_{\tilde{\theta} \tilde{\theta}} \vec{\beta}_{\theta,T}  + \sigma_E^2. $$
It then suffices to only show the consistent estimation of the remaining part involved with the reliability term. Given in Appendix \ref{Appendix: Proofs for Relative (Non-Absolute) Bias Reduction}, we have shown 
\[
\widehat{\text{Cov}}\left( \tilde{Y}_{i, T}, \tilde{Y}_{i, 0:T-1} \mid Z_i = 0  \right)   \xrightarrow{p} \vec{\beta}_{\theta,T}^{\top}  \Sigma_{\tilde{\theta} \tilde{\theta}} B_\theta^{\top},\ \widehat{\text{Var}} \left( \tilde{Y}_{i,0:T-1}  \right)  \xrightarrow{p} B_\theta \Sigma_{\tilde{\theta} \tilde{\theta}} B_\theta^{\top}+ {\sigma}_E^2 \mathbf{I}_{T \times T}= B_\theta \Sigma_{\tilde{\theta} \tilde{\theta}} B_\theta^{\top} + \Sigma_\epsilon.    
\]
Then the result immediately follows from combining all terms and applying the continuous mapping theorem. 
\end{proof}

\clearpage

\subsection{Proof of Lemma \ref{Lemma: Impossibility Results}} \label{Appendix: Impossibility Results} 

\begin{proof}[Proof of Impossibility of Consistent Estimation of Bias]
We proceed with proof by contradiction. Suppose, for contradiction, that there exists a consistent estimator of the bias of $i$, denoted by $\widehat{\operatorname{Bias}}(i)$, such that
\[
\widehat{\operatorname{Bias}}(i)\xrightarrow{p}\operatorname{Bias}(i).
\]
By definition, $\operatorname{Bias}(i)=\mathbb{E}[i]-\tau,$
and by Theorem \ref{Thm: Consistency of DiD and Matching DiD Estimators}, for any $i \in \mathcal{S}$, we have $i \xrightarrow{p} \mathbb{E}[i].$
Define
\[
\hat{\tau} := i - \widehat{\operatorname{Bias}}(i).
\]
By Slutsky's theorem,
\[
\hat{\tau}
= i - \widehat{\operatorname{Bias}}(i)
\xrightarrow{p}
\mathbb{E}[i] - \operatorname{Bias}(i).
\]
Substituting the definition of bias,
\[
\mathbb{E}[i] - \operatorname{Bias}(i)
= \mathbb{E}[i] - (\mathbb{E}[i] - \tau)
= \tau.
\]
Thus, $\hat{\tau} \xrightarrow{p} \tau.$ That said, $\hat{\tau}$ is a consistent estimator of $\tau$, which contradicts the fact that no consistent estimator of $\tau$ exists because of failure of identification strategies. Therefore, there does not exist a consistent estimator of $\operatorname{Bias}(i)$. In fact, for any $i \in \mathcal{S}$, we can only propose to have (e.g., see specific choices in Proposition \ref{Prop: Estimation Strategies for Relative MSE Reduction under Sign Conditions})
\[
\widehat{\Bias}  \left( i \right) \xrightarrow{p} \Bias \left( i \right) + \tau.
\]    
\end{proof}

\begin{proof}[Proof of Impossibility of Consistent Estimation of Relative MSE Reduction]

By definition of MSE in Corollary \ref{Corollary 4.1}, it suffices to show that we can't consistently estimate the relative bias square reduction of any pair among three estimators. We proceed with proof by contradiction. Suppose that we can have  
\[
\widehat{\text{Bias}}^2  \left( \hat{\tau}_{\text{gDiD}} \right) - \widehat{\text{Bias}}^2  \left( \hat{\tau}_{\text{gDiD}}^{\boldsymbol{X}} \right)  \xrightarrow{p} \text{Bias}^2 \left( \hat{\tau}_{\text{gDiD}} \right) - \text{Bias}^2 \left( \hat{\tau}_{\text{gDiD}}^{\boldsymbol{X}} \right) 
\]
Rewriting the left hand side, by continuous mapping theorem, it is equivalent to have 
\[
\underbrace{\left(\widehat{\text{Bias}}  \left( \hat{\tau}_{\text{gDiD}} \right) - \widehat{\text{Bias}}  \left( \hat{\tau}_{\text{gDiD}}^{\boldsymbol{X}} \right) \right)}_{\text{first term}} \cdot \underbrace{\left(\widehat{\text{Bias}}  \left( \hat{\tau}_{\text{gDiD}} \right) + \widehat{\text{Bias}}  \left( \hat{\tau}_{\text{gDiD}}^{\boldsymbol{X}} \right) \right)}_{\text{second term}}  \xrightarrow{p} \text{Bias}^2 \left( \hat{\tau}_{\text{gDiD}} \right) - \text{Bias}^2 \left( \hat{\tau}_{\text{gDiD}}^{\boldsymbol{X}} \right). 
\]
We have shown in Proposition \ref{Prop: Estimation Strategies for Relative Bias Reduction}  that we can consistently estimate the first term. Then to consistently estimate the relative bias square reduction on the right hand side, we must be able to also consistently estimate the second term. However, by continuous mapping theorem, being able to consistently estimate both the first term and the second term simultaneously implies that we can have 
\begin{align*}
& \left(\widehat{\text{Bias}}  \left( \hat{\tau}_{\text{gDiD}} \right) - \widehat{\text{Bias}}  \left( \hat{\tau}_{\text{gDiD}}^{\boldsymbol{X}} \right) \right) + \left(\widehat{\text{Bias}}  \left( \hat{\tau}_{\text{gDiD}} \right) + \widehat{\text{Bias}}  \left( \hat{\tau}_{\text{gDiD}}^{\boldsymbol{X}} \right) \right)  = 2 \widehat{\text{Bias}}  \left( \hat{\tau}_{\text{gDiD}} \right)  \\
& \xrightarrow{p} \left(\text{Bias} \left( \hat{\tau}_{\text{gDiD}} \right) - \text{Bias} \left( \hat{\tau}_{\text{gDiD}}^{\boldsymbol{X}} \right)\right) + \left(\text{Bias} \left( \hat{\tau}_{\text{gDiD}} \right) + \text{Bias} \left( \hat{\tau}_{\text{gDiD}}^{\boldsymbol{X}} \right)\right) =  2\Bias  \left( \hat{\tau}_{\text{gDiD}} \right).      
\end{align*}
This leads to a contradiction because we have just shown that there does not exist such consistent bias estimator. 
In other words, the consistency for $\widehat{\text{Bias}}  \left( \hat{\tau}_{\text{gDiD}} \right) \xrightarrow{p} \Bias  \left( \hat{\tau}_{\text{gDiD}} \right)$ can not hold. Therefore, we can't have a consistent estimator for the relative bias square reduction as proposed in the beginning. 

Similar arguments hold for any pair among three estimators. This concludes the proof. 
\end{proof}

\subsection{Proof of Lemma \ref{Lemma: Bounds on Bias Square under Sign Conditions}} \label{Appendix: Proof of Lemma: Bounds on Bias Square under Sign Conditions}

\begin{proof}
The proof follows from Lemma \ref{Lemma: Impossibility Results}. For any $i \in \mathcal{S}$, we can only propose to have (e.g., see specific choices in Proposition \ref{Prop: Estimation Strategies for Relative MSE Reduction under Sign Conditions})
\[
\widehat{\Bias}  \left( i \right) \xrightarrow{p} \Bias \left( i \right) + \tau.
\]    
Under the notation that $\widehat{\text{Bias}}^2 (i) = \left(\widehat{\Bias}  \left( i \right) \right)^2$, it follows that 
\begin{align*}
&\widehat{\text{Bias}}^2  \left( \hat{\tau}_{\text{gDiD}} \right) \xrightarrow{p} \text{Bias}^2 \left( \hat{\tau}_{\text{gDiD}} \right) + 2 {\text{Bias}}  \left( \hat{\tau}_{\text{gDiD}} \right) \tau + \tau^2, \\
&\widehat{\text{Bias}}^2  \left( \hat{\tau}_{\text{gDiD}}^{\boldsymbol{X}} \right) \xrightarrow{p} \text{Bias}^2 \left( \hat{\tau}_{\text{gDiD}}^{\boldsymbol{X}} \right) + 2 {\text{Bias}}  \left( \hat{\tau}_{\text{gDiD}}^{\boldsymbol{X}} \right) \tau + \tau^2, \\
& \widehat{\text{Bias}}^2  \left( \hat{\tau}_{\text{gDiD}} \right) - \widehat{\text{Bias}}^2  \left( \hat{\tau}_{\text{gDiD}}^{\boldsymbol{X}} \right) \xrightarrow{p}  \text{Bias}^2 \left( \hat{\tau}_{\text{gDiD}} \right) - \text{Bias}^2 \left( \hat{\tau}_{\text{gDiD}}^{\boldsymbol{X}} \right) + 2 \left( {\text{Bias}}  \left( \hat{\tau}_{\text{gDiD}} \right)  - {\text{Bias}}  \left( \hat{\tau}_{\text{gDiD}}^{\boldsymbol{X}} \right)  \right) \cdot \tau,
\end{align*}
and then given the sign conditions, it corresponds to either the upper or lower bound of the relative bias square terms. 

Similar arguments hold for any pair among three estimators. This concludes the proof. 
\end{proof}

\end{document}